\tikzset{
  big arrow/.style={
    decoration={markings,mark=at position 1 with {\arrow[scale=2,#1]{>}}},
    postaction={decorate},
    shorten >=0.4pt},
  big arrow/.default=black}
\tikzstyle{none}=[inner sep=0pt] 
\tikzstyle{NodeCross}=[draw, shape=circle, cross out, inner sep=0pt, minimum size=6pt,line width=0.25mm]
\tikzstyle{Circle}=[draw, shape=circle, black,  fill=black, inner sep=0pt, minimum size=6pt]
\tikzstyle{Star}=[draw, shape=star, fill=black, star points=8, inner sep=0pt, minimum size=8pt]
\tikzstyle{DashedLine}=[-, densely dashed, line width=0.25mm]
\tikzstyle{DottedLine}=[-, dotted, line width=0.25mm]
\tikzstyle{ThickLine}=[-, line width=0.25mm]
\tikzstyle{ArrowLineRight}=[-, -{Stealth[scale=1.75]}, line width=0.1mm, scale=5]
\tikzstyle{RedLine}=[-, draw={rgb,255: red,191; green,0; blue,0}, fill=none, line width=0.25mm]
\tikzstyle{DottedRed}=[-, dotted, draw={rgb,255: red,191; green,0; blue,0}, fill=none, line width=0.25mm]
\tikzstyle{DashedLineThin}=[-, densely dashed, line width=0.125mm, fill=none, draw=black]
\tikzstyle{ArrowLineRed}=[-, -{Stealth[scale=1.75]}, draw={rgb,255: red,191; green,0; blue,0}, line width=0.25mm, scale=5]
\newtheorem{definition}{Definition}[section]
\newtheorem{theorem}{Theorem}
\newtheorem{example}{Example}
\newcommand{\be}{\begin{equation}}
\newcommand{\ee}{\end{equation}}
\newcommand{\ba}{\begin{aligned}}
\newcommand{\ea}{\end{aligned}}
\newcommand{\cN}{\mathcal{N}}
\newcommand{\bea}{\begin{eqnarray}}
\newcommand{\eea}{\end{eqnarray}}
\newcommand{\U}{{\rm U}}
\newcommand{\Z}{{\mathbb Z}}
\def\ket#1{{|{#1}\rangle}}
\def\unit{{1\kern-.65ex {\rm l}}}
\def\1{{1\kern-.65ex {\rm l}}}
\def\CF{{\cal F}}
\def\CG{{\cal G}}
\def\CH{{\cal H}}
\def\CP{{\cal P}}
\def\CU{{\cal U}}
\def\bbR{{\mathbb{R}}}
\def\bbZ{{\mathbb{Z}}}
\def\now{%
\ifnum \hour<13
  \ifnum \hour=0 \advance \hour by 12 \number\hour:\else \number\hour:\fi%
     \ifnum \minute<10 0\fi%
     \number\minute%
\ A.M.%
\else \advance \hour by -12 \number\hour:%
  \ifnum \minute<10 0\fi%
  \number\minute%
  \ P.M.%
\fi%
}
\def\mb{\mathbb}
\def\mbf{\mathbf}
\def\mc{\mathcal}
\def\bp{\begin{pmatrix}}
\def\ep{\end{pmatrix}}
\def\ptl{\partial}
\def\Cech{${\rm \check{C}ech} \ $}
\def\tA{\Tilde{A}}
\def\dd{{\rm d}}
\newtheorem{prop}{Proposition}
\def\pmu{\partial_\mu}
\def\Fmn{F^{\mu\nu}}
\def\stack{\stackrel}
\begin{document}

\baselineskip=18pt  
\numberwithin{equation}{section}  
\allowdisplaybreaks  

\thispagestyle{empty}

\vspace*{0.8cm} 
\begin{center}
{\huge Lecture Notes on Generalized Symmetries and Applications}\\

 \vspace*{1.5cm}

 {\large Ran Luo$^1$, Qing-Rui Wang$^2$, Yi-Nan Wang$^{1,3}$}

\vspace*{1.0cm}

\smallskip
{\it $^1$ School of Physics,\\
Peking University, Beijing 100871, China\\ }

\smallskip

{\it  $^2$ Yau Mathematical Sciences Center, Tsinghua University,\\
Beijing 100084, China}

\smallskip

{\it  $^3$ Center for High Energy Physics, Peking University,\\
Beijing 100871, China}

\vspace*{0.8cm}
\end{center}
\vspace*{.5cm}

\noindent
In this lecture note, we give a basic introduction to the rapidly developing concepts of generalized symmetries, from the perspectives of both high energy physics and condensed matter physics. In particular, we emphasize on the (invertible) higher-form and higher group symmetries. For the physical applications, we discuss the geometric engineering of QFTs in string theory and the symmetry-protected topological (SPT) phases in condensed matter physics. 

The lecture note is based on a short course on generalized symmetries, jointly given by Yi-Nan Wang and Qing-Rui Wang in Feb. 2023, which took place at School of Physics, Peking University (https://indico.ihep.ac.cn/event/18796/). 
\newpage


\tableofcontents


\newpage

\section{Introduction}

Symmetry is a fundamental guiding principle of physics. Under a symmetry transformation, the physical theory remains invariant, and the physical ingredients (fields, operators) are organized into different representations of the symmetry. There are two different notions of symmetries:
\begin{enumerate}
\item{Global symmetry: the symmetry parameter $g$ is spacetime independent. A global symmetry is usually considered as an ``actual symmetry'' of a physical system.}
\item{Local symmetry (gauge symmetry): the symmetry parameter $g$ is spacetime dependent. Such a local symmetry is also typically referred as the gauge symmetry, which is interpreted as a redundancy of the theory, but not an actual physical symmetry.\footnote{There are also symmetries in between, such as the ``Galileon symmetry''\cite{Nicolis:2008in,Hinterbichler:2015pqa} $\phi\rightarrow\phi+a$ in the case of a free real scalar field $\int \ptl_\mu\phi\ptl^\mu\phi$. The parameter $a$ can be chosen to only satisfy $\ptl_\mu\ptl^\mu a=0$. This is an example of a ``semi-local'' symmetry without a gauge field.}}
\end{enumerate}

Traditionally, the notion of ``ordinary symmetry'' denotes the action of a symmetry group\footnote{A group is a set equipped with a closed, associative binary operation (denoted as $\cdot$), has a unit under $\cdot$ and all elements are invertible under $\cdot$ .} on a local operator\footnote{\emph{id est}, an operator that only depends upon a point on the spacetime manifold.}. This is referred as an invertible, 0-form  symmetry in the modern language.

A simple example of ordinary symmetry (the meaning of ordinary will be explained later) would be

\begin{example}
\[ \mathcal{L} = |\partial_\mu \phi|^2 \]
where $\phi$ is a complex scalar field, a symmetry transformation here is $\phi\mapsto e^{i\alpha}\phi$, $e^{i\alpha}\in U(1)$.
\end{example}

In the past decade, there have been a plethora of efforts to generalize the notion of symmetries~\cite{Gaiotto:2014kfa}, from the perspectives of condensed matter physics, high energy physics and pure mathematics, see the following review articles and lecture notes\cite{Kong_2020,mcgreevy2022generalized,Cordova:2022ruw,Gomes:2023ahz,Schafer-Nameki:2023jdn,Brennan:2023mmt,Bhardwaj:2023kri}. From the definition of the ordinary symmetry, we observe some clues on how to generalize it:
\begin{enumerate}
    \item From 0-form symmetry to higher-form symmetries. We intend to let the symmetry operations act on ``objects that extends $\ge 1$ dimensions'', such as Wilson loops, 't Hooft loops and other higher dimensional generalizations. Furthermore, one may combine symmetries of different forms into an algebraic structure called ``higher group''.
    \item Beyond the group structure. Instead of group, we use a weaker structure to describe the symmetry. We can relax the  requirements such as  invertibility, associativity until we are left with a very complicated structure that can only fit in the general framework of higher category theory. Such symmetries are denoted as non-invertible symmetries or more generally, higher categorical symmetries. They are subject to active research in both physics and mathematics~\cite{Kong_2020,ji2020categorical,Johnson-Freyd:2020usu,Roumpedakis:2022aik,Bhardwaj:2022yxj,Bhardwaj:2022lsg,Bartsch:2022mpm,Bhardwaj:2022kot,Kaidi:2022cpf,Decoppet:2022dnz,Bhardwaj:2022maz,Bartsch:2022ytj,Delcamp:2023kew,Kaidi:2023maf,Bhardwaj:2023wzd,Bartsch:2023pzl,Bhardwaj:2023ayw,Bartsch:2023wvv,Decoppet:2023bay}.
\end{enumerate}


In this lecture note, we provide an introduction of generalized global symmetries from different perspectives. The structure of the note is as follows: in section~\ref{sec:ordinary}, we introduce the framework of topological generators for global symmetries, in the context of ordinary 0-form symmetries. In section~\ref{sec:higher-form}, we give the definition of invertible higher-form symmetries and discuss the examples of pure Maxwell theory, Maxwell theory with charged matter and non-abelian gauge theories. In section~\ref{sec:gauging}, we discuss the gauging of higher-form symmtries, 't Hooft anomalies and the examples of Maxwell theory and 3d Chern-Simons theory. In section~\ref{sec:three}, we give a more comprehensive understanding of symmetries generators using three equivalent languages: topological defect network, flat connection and classifying space. They are particularly important for the discussions of finite group symmetry in the later parts.

In section~\ref{sec:anomaly-SPT}, we introduce the notion of symmetry-protected topological (SPT) phases in condensed matter physics and 't Hooft anomaly using group cohomology language. Examples of 1+1D Haldane chain coupled to 0+1D spin-$\frac{1}{2}$ boundary and 2+1D SPT are presented. In section~\ref{sec:gen-SPT}, we discuss various generalizations of SPT phases, including higher-form symmetries and fermionic systems. We also present the comprehensive cobordism classification framework for the classification.

In section~\ref{sec:applications} we discuss further applications of higher-form symmetries. In section~\ref{sec:string} we talked about the geometric engineering of QFTs in string /M-theory framework, and the computation of higher-form symmetries from the topology of the extra-dimensional space. In particular we present an example of 5d $\mc{N}=1$ SCFT from 11D M-theory on Calabi-Yau 3-fold singularities. In section~\ref{sec:appl-CMP} we discuss some condensed matter physics applications, including 2+1D toric code and higher-form SPT in 3+1D.

Finally in section~\ref{sec:higher-group}, we extend our scope to more general, categorical symmetries with particular emphasis on higher-group symmetries. We present basic knowledge of category theory in section~\ref{App:Cat}. We define strict 2-group in the language of 2-category theory in section~\ref{sec:2-group}, and its relation to weak 2-groups in section~\ref{sec:cross-extension} and section~\ref{sec:Postnikov}. We give physical interpretation of weak 2-group symmetry in section~\ref{sec:2-group-phys} and their gauging in section~\ref{sec:2-group-gauging}. Weak $n$-groups and non-invertible symmetries are briefly described in section~\ref{sec:n-group} and \ref{sec:non-invertible}.

At the beginning of each section, we will provide guidance to help readers navigate and make choices about how to engage with the content of this section.



\section{Higher-Form Symmetry}
\label{sec:higher-form-gen}

In this section, we offer a fundamental introduction to higher-form symmetry. We begin with a review of topological operators associated with ordinary symmetries, setting the stage for deeper exploration. Following this, we present definitions and illustrative examples of higher-form symmetries. The latter part of this section consists of the gauging of higher-form symmetries and three other useful perspectives on higher-form symmetries.

Given that this section lays the groundwork for subsequent discussions, we encourage readers to thoroughly engage with this content. This section is designed to be accessible, aimed at helping readers establish a  foundational knowledge before they progress to the more complex topics in the subsequent sections, which can be chosen based on their interest.

\subsection{Topological Operators in Ordinary Symmetry}
\label{sec:ordinary}

In the modern language, the generators of a symmetry is described as topological operators, which we now  introduce. In the following, we denote the spacetime manifold as $M$ (or $M^d$ if we want to emphasize it's $d$-dimensional). We denote the symmetry operator by $U_g(X)$, where  $g\in G$ is an element of the symmetry group, $X\subseteq M$ is the manifold which supports the topological operator  (which will be explained later).

Now we consider a basic example of quantum mechanics
\begin{example}
    $M^{0+1} = \mb{R}$, $U_g=U_g(t)$ is the symmetry operator defined at a point $t\in \bbR$. Since
    \[ [U_g,H] = 0 \ , \]
    we usually choose to omit the time dependence. Now consider an object (local operator in this case) charged under this symmetry, $O_{\{ i\}}$, and the adjoint action on this object
    \[ U_g O_i(t) U_g^{-1} = \mathcal{R}_i^j(g) O_j(t)  \]
    gives a representation $\mathcal{R}$. A graphical understanding of this formula is shown in Fig.~\ref{Fig:0FormQM}.
\end{example}

\begin{figure}[htbp]
\begin{center}

\begin{tikzpicture}
\draw[thick, ->] (-2,-1) -- (-2,1);
\filldraw[orange] (-2,-0.5) circle (2pt) node[anchor=west]{$U_g(t-\epsilon)$};
\filldraw[orange] (-2,0.5) circle (2pt) node[anchor=west]{$U_g^{-1}(t+\epsilon)$};
\filldraw[purple] (-2,0) circle (2pt) node[anchor=west]{$O_i(t)$};
\draw[thick, ->] (2,-1) -- (2,1);
\filldraw[purple] (2,0) circle (2pt) node[anchor=west]{$O_j(t)$};
\node[] at (0.5,0) {$ =  \ \ \ \ \mathcal{R}_i^j(g)$};
\end{tikzpicture}
\caption{}
\label{Fig:0FormQM}
\end{center}

\end{figure}

So what do we mean by ``topological''? Since $U_g$ commutes with the Hamiltonian, it can move along the time axis, as long as the ``obstruction'' $\{ O_i(t)\}$ does not get in the way. This topological property will become more explicit in higher dimensional examples.

\begin{example}[0-form symmetry in $M^d$]
    To generalize the previous example, we define 0-form symmetry to be generated by a set of unitary topological operators supported on $M^{(d-1)}\subset M^d$, with $U_g(M^{(d-1)})U_{g'}(M^{(d-1)}) = U_{gg'}(M^{(d-1)})$. The action is defined by
    \[ U_g(S^{d-1}) O_i(p) =  \mathcal{R}_i^j(g) O_j(t) \ ,  \]
    where $p$ is a point in the interior of $S^{d-1}$.
\end{example}
This seems odd at your first glimpse, as the previous ``$U_g^{-1}$'' part is gone, but it is actually just a generalization of the previous case. The graphical  understanding is depicted in figure \ref{Fig:1FormQM} for the $1+1$-dim case, as for higher dimensions one can imagine a similar picture.

\begin{figure}[htbp]
\begin{center}

\begin{tikzpicture}
\draw[thick, ->] (-4,-1.5) -- (-4,1.5) node[anchor = south] {t};
\draw[orange,thick] (-5,-0.25) -- (-4.25,-0.25) ;
\draw[orange,thick] (-5,0.25) -- (-4.25,0.25) ;
\draw[orange,thick] (-3.75,-0.25) -- (-2.75,-0.25) node[anchor=north]{$U_g(t-\epsilon)$};
\draw[orange,thick] (-3.75,0.25) -- (-2.75,0.25) node[anchor=south]{$U_g^{-1}(t+\epsilon)$};
\draw[orange,thick] (-3.75,0.25) arc (0:180:0.25);
\draw[orange,thick] (-4.25,-0.25) arc (180:360:0.25);

\draw[thick, ->] (0,-1.5) -- (0,1.5) node[anchor = south] {t};
\filldraw[purple] (0,0) circle (2pt) ;
\draw[orange,thick] (0.25,0) arc (0:360:0.25)node[anchor=west]{$U_g(S^1)$};
\node[] at (-2,0) {$ =$};

\filldraw[purple] (-4,0) circle (2pt) node[anchor=west]{$O_i(t)$};
\draw[thick, ->] (4,-1.5) -- (4,1.5) node[anchor = south] {t};
\filldraw[purple] (4,0) circle (2pt) node[anchor=west]{$O_j(t)$};
\node[] at (3,0) {$ =  \ \ \mathcal{R}_i^j$};
\end{tikzpicture}
\caption{}
\label{Fig:1FormQM}
\end{center}

\end{figure}

\begin{example}[Electromagnetism in $M^d$]
Consider the $U(1)$ conserved current in $d$-dim pure Maxwell theory,
\[ \dd *F = j \]
then 
\[ U_g(M^{(d-1)}) = g^{\int_{M^{(d-1)}} j}  = \exp(i\alpha \int_{M^{(d-1)} } j ) \]
where $g = \exp(i\alpha)\in U(1)$ is a symmetry group element\footnote{In fact, the $U(1)$ mentioned in the previous case actually corresponds to the global part of $U(1)$ gauge symmetry, when $\alpha$ is spacetime independent.}. 
\end{example}

If $M^{(d-1)}$ is taken as a time slice, then $\int_{M^{(d-1)} } j$ is the total electric charge of this system. How is this operator topological? It follows from the Stokes theorem. Consider a small deformation to $M^{(d-1)}$ (small enough so that no point charges went through) and denote it $M^{(d-1)\prime}$, then
\[  \int_{M^{(d-1)}} j -  \int_{M^{(d-1)\prime}} j   = \int_{\partial D} j = \int_{D} \dd j = 0  \ , \]
where $D$ is the volume enclosed by the deformed part. The $U_g$ operator is indeed topological.

\subsection{Higher-Form Symmetry}
\label{sec:higher-form}

In the language of topological operators, the generalization to higher-form symmetry is straight-forward, we simply have to substitute the codim-$1$ submanifold (i.e. dim-($d-1$) submanifold) corresponding to symmetry operators with codim-$(p+1)$ manifold, and substitute the local operators with  objects defined on $p$-dimensional submanifolds ($p$-dimensional extended operators). In this fashion, the topological operator is denoted $U_g(M^{d-p-1})$, the object by $V_(C^{p})$, and then the symmetry action on object is
\be
U_g(M^{d-p-1})V_i(C^p) = R_i {}^{j}(g) V_j(C^p) \ ,
\ee
if the linking number $\langle M^{d-p-1},C^{p} \rangle = 1$. Here the linking number is intuitively understood as how the two submanifolds (with dimension adding up to $d-1$) entangles but not intersecting with each other topologically, a more clear definition is
\be
\langle M^{d-p-1},C^p \rangle = {\rm Int}(X^{d-p}, C^p) = {\rm Int}(M^{d-p-1}, Y^{p+1}) \ ,
\ee
where $\partial X^{d-p} = M^{d-p-1}$, $\partial Y^{p+1} = C^{p}$ are the submanifolds enclosed by the $M$ and $C$ respectively, and ${\rm Int}$ denotes intersection number. The intuition for formula is 

\begin{center}
    \begin{tikzpicture}
    \draw[thick, cyan] (-2,0.1)--(-2,2);
    \draw[thick, cyan] (-2,-1)--(-2,-0.1);
    \draw[thick, orange] (-2.1,1) arc (-265:85:1 and 0.5);
    \node[] at (0,0.5) {=  Int};
    \node[] at (-4,0.5) {Link};

    \filldraw[orange!50] (2.0,0.5) ellipse (1 and 0.5);
    \draw[thick, cyan] (2,0.5)--(2,2);
    \draw[thick, cyan] (2,0)--(2,-1);
    \end{tikzpicture}
\end{center}

Of course, as a representation of the symmetric group on this system, we shall have
\be
U_g(M^{d-p-1})U_{g'}(M^{d-p-1})=U_{gg'}(M^{d-p-1}) \ .
\ee

Next we argue that if $p>0$ and the spacetime topology is trivial, the higher-form symmetry is always \textbf{Abelian}. In the language of topological operators, the argument is almost immediate: suppose you have two operators with codimension $\ge 2$, then one can topologically interchange them in the spacetime without any obstruction. A illustrated case can be seen in Fig.~\ref{Fig:Abelian}. However in the case of $p=0$ we cannot perform such an interchange, as the two topological operators $U_g$ are supported on two different time slices, and they can be generally non-commutative.

\begin{figure}[htbp]
    \centering
    \begin{tikzpicture}
        \draw[->]   (0,0)--(0,2)      node[anchor = south] {t};
        \draw[->]   (0,0)--(2,0)      node[anchor = west] {x};
        \draw[->]   (0,0)--(-0.6,-1.2)      node[anchor = east] {y};
        \draw[thick, orange] (1.1,1.4)--(0,-0.8)  node[anchor=west] {$U_{g_1}$};
        \draw[thick, magenta] (1.1,2.4)--(-0.1,0)  node[anchor=east] {$U_{g_2}$};
        \node[] at (3,1) {=};
        \draw[->]   (4,0)--(4,2)      node[anchor = south] {t};
        \draw[->]   (4,0)--(6,0)      node[anchor = west] {x};
        \draw[->]   (4,0)--(3.4,-1.2)      node[anchor = east] {y};
        \draw[thick, magenta] (5.1,1.4)--(4,-0.8)  node[anchor=west] {$U_{g_2}$};
        \draw[thick, orange] (5.1,2.4)--(3.9,0)  node[anchor=east] {$U_{g_1}$};

    \end{tikzpicture}
    \caption{Interchanging topological operators in $p>0$}
    \label{Fig:Abelian}
\end{figure}

\subsubsection{Example: Maxwell Theory}
\label{subsec:Maxwell}
We consider the pure $U(1)$ Maxwell theory in 4 dimensions, the action is
\be
S = \frac{1}{4e^2} \int {\rm d}^4 x \ F_{\mu\nu}F^{\mu\nu}
\ee
where $F_{\mu\nu} = \partial_\mu A_\nu- \partial_\nu A_\mu$ or equivalently, $F = \frac{1}{2} F_{\mu\nu}{\rm d}x^\mu \wedge {\rm d}x^\nu = {\rm d}A $. It is widely known that this theory has a $U(1)_E \times U(1)_M$ 1-form symmetry. 

We show that $U(1)_E$ is a 1-form symmetry, which is generated by $4-1-1=2$-dimensional topological operators and acts on one-dimensional objects. The electric $U(1)_E$ symmetry is generated by the topological operator
\be
\label{U1-UE}
U_E (S) = \exp(i\alpha Q_E(S))
\ee
where $Q_E(S)$ is the total electric charge enclosed by $S$, namely
\be
Q_E(S) = \frac{1}{e^2}\int_S *F = \frac{1}{e^2}\int_S \frac{\epsilon_{\mu\nu\rho\sigma}}{2!\cdot 2!}F^{\mu\nu} {\rm d}S^{\rho\sigma} \sim  \int_S \vb{E}\cdot \vb{{\rm d}S}\,.
\ee
And this operator is indeed topological because of the equation of motion ${\rm d}*F = 0 $, a deformation not hitting electric charges would not change the operator.

Then what is the object on which the operators act? The answer is the Wilson loop. For a given loop $\gamma:[0,1]\to M, \gamma(1)=\gamma(0)$, the Wilson loop operator is
\be 
W_n(\gamma) = \exp(i n  \oint_\gamma A_\mu {\rm d}x^\mu )  = \exp(i n \int_\gamma A )\,. 
\ee
The topological operators (\ref{U1-UE}) act on Wilson loops by (proof will be given later in Proposition \ref{Prop:1})
\be
\expval{U_E(S) W_n(\gamma)} = \exp(in\alpha \expval{S,\gamma}) \expval{W_n(\gamma)}\,.
\ee

\textbf{Remark}:
\begin{enumerate}
\item In the definition of Wilson loop, $n\in \bbZ$. The reason for the quantization is that under a large gauge transformation $A_\mu\mapsto A_\mu + \partial_\mu \epsilon$,
    \[ \oint_\gamma \partial_\mu \epsilon {\rm d}x^\mu  = 2\pi k \quad \ k\in \bbZ\,, \]
    where $k$ is the winding number.  Hence the Wilson loop is only gauge invariant when $n\in\mb{Z}$.
    
    \item This action of topological operator on Wilson loop can be  equivalently treated as shifting $A$ by a flat connection $\lambda$ (flat refering to ${\rm d}\lambda = 0$), namely 
    \be
    \oint_\gamma \lambda = \alpha \expval{S,\gamma}\,.
    \label{Eq:intlambda}\ee
    Also note that the integration is invariant under small deformation of $\gamma$.

    Note that even if $\lambda$ is flat, it cannot be written as a local gauge transformation $\lambda=d\epsilon$, otherwise the symmetry action is trivial.
    
\end{enumerate}

Now we turn to the magnetic $U(1)_M$ symmetry dual to the electric one. The trick here is simple: add the Hodge star $*$ to everything. Obviously we shall have
\be
U_M (S) = \exp(i\alpha Q_M(S))\,,
\ee
\be Q_M(S) = \frac{1}{2\pi}\int_S F = \frac{1}{4\pi}\int_S F_{\mu\nu} {\rm d}S^{\mu\nu} \sim  \int_S \vb{B}\cdot \vb{{\rm d}S} \,.
\ee
This operator is topological because ${\rm d}F = {\rm d^2}A = 0$, and to create a dual version of Wilson line, we define $\tA$ s.t.  ${\rm d}\tA = *\dd A $. The 't Hooft loop is therefore defined as
\be 
T_n(\gamma)  = \exp(i n \oint_\gamma \tA) = \exp(i n \oint_\gamma \tA_\mu \dd x^\mu)\,.
\ee
And of course, the action of $U_M$ operators on 't Hooft loops is
\be
\expval{U_M(S) T_n(\gamma)} = \exp(in\alpha \expval{S,\gamma}) \expval{T_n(\gamma)}\,.
\ee
This also can be interpreted as shifting $\tA$ by a flat connection.

\begin{prop}
\label{Prop:1}
\[ \expval{U_E(S) W_n(\gamma)} = \exp(in\alpha \expval{S,\gamma}) \expval{W_n(\gamma)}  \]
\end{prop}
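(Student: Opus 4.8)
The plan is to reduce the flux operator $U_E(S)$ to a $c$-number phase by invoking Gauss's law in the presence of the Wilson line, so that the phase that emerges is precisely the electric charge enclosed by $S$. First I would write the left-hand side as a path integral weighted by $e^{iS_{\rm Maxwell}}$,
\be
\expval{U_E(S)\, W_n(\gamma)} = \frac{1}{Z}\int \mathcal{D}A\; e^{iS_{\rm Maxwell}}\,\exp\!\lb \frac{i\alpha}{e^2}\int_S *F\rb \exp\!\lb i n\int_\gamma A\rb\,,
\ee
and observe that the Wilson loop acts as a source for the gauge field. Writing $n\int_\gamma A = n\int_M A\wedge \delta_\gamma$ with $\delta_\gamma$ the $3$-form Poincar\'e dual to $\gamma$, the equation of motion obtained by varying $A$ is deformed from $\dd{*F}=0$ to
\be
\frac{1}{e^2}\dd{*F} = n\,\delta_\gamma\,,
\ee
up to a sign fixed by orientation; i.e.\ the Wilson line carries $n$ units of electric charge localized on $\gamma$.

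Next I would choose a $3$-chain $V$ with $\partial V = S$ and combine Stokes' theorem with the sourced equation of motion:
\be
Q_E(S) = \frac{1}{e^2}\int_S *F = \frac{1}{e^2}\int_V \dd{*F} = n\int_V \delta_\gamma = n\,{\rm Int}(V,\gamma) = n\,\expval{S,\gamma}\,,
\ee
where the final equality is exactly the definition of the linking number recalled earlier in the text, $\expval{S,\gamma}={\rm Int}(V,\gamma)$ with $\partial V = S$. Hence, inside the correlator, $Q_E(S)$ evaluates to the $c$-number $n\expval{S,\gamma}$, the exponential $U_E(S)=e^{i\alpha Q_E(S)}=e^{in\alpha\expval{S,\gamma}}$ pulls out of the functional integral, and what remains is $\expval{W_n(\gamma)}$, establishing the claim.

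An equivalent route, matching Remark 2 above, is to implement $U_E(S)$ as the field redefinition $A\mapsto A+\lambda$ by a flat connection $\lambda$ with $\oint_\gamma \lambda = \alpha\expval{S,\gamma}$. Since $\dd\lambda=0$ leaves the Maxwell action invariant while $W_n(\gamma)\mapsto e^{in\oint_\gamma\lambda}\,W_n(\gamma)$, the same phase $e^{in\alpha\expval{S,\gamma}}$ appears; this version makes the symmetry interpretation manifest but hides the Gauss-law content that the first route exposes.

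The main obstacle is making the step $Q_E(S)=n\expval{S,\gamma}$ rigorous as an operator identity rather than a classical saddle-point statement: one must argue that the only gauge-invariant contribution of the flux through $S$ is the charge it surrounds. This is where the topological (deformation) invariance of $U_E(S)$ proved earlier is essential, together with a short-distance regularization of $\delta_\gamma$ controlling the coincident-point limit as $S$ is swept across $\gamma$. Once topological invariance guarantees that the correlator depends only on the homology class of $S$ relative to $\gamma$, the computation localizes to the single transverse crossing and only the linking number $\expval{S,\gamma}$ survives.
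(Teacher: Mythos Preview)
Your first route has the right physics but, as you yourself acknowledge, it stops at a saddle-point statement: substituting $\tfrac{1}{e^2}\dd *F\to n\,\delta_\gamma$ inside the exponentiated insertion $U_E(S)=e^{i\alpha Q_E(S)}$ is at best the linearized Ward identity, and you do not show how it exponentiates. Your second route is circular: the assertion that $U_E(S)$ is implemented by a flat shift $A\to A+\lambda$ is the \emph{content} of the proposition (Remark~2 records it as an interpretation of the result, not an input), so it cannot serve as the starting point of the proof. Note also that a flat shift leaves both the Maxwell action and $\int_S *F$ invariant, so by itself it does nothing to remove the flux insertion from the correlator.

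The paper's proof takes a different, one-step route that avoids both issues. It performs a \emph{non}-flat change of integration variable $A\mapsto A-\alpha\,J(V_S)$, where $J_\nu(V_S)$ is the 1-form Poincar\'e dual of a 3-chain $V_S$ with $\partial V_S=S$. Because the Maxwell action is quadratic, the cross term $\tfrac{\alpha}{e^2}\int \partial_\mu J_\nu\,F^{\mu\nu}$ generated by the shift exactly cancels the flux insertion $\tfrac{i\alpha}{e^2}\int_S *F=-\tfrac{i\alpha}{e^2}\int F^{\mu\nu}\partial_\mu J_\nu$ in the exponent; the $O(\alpha^2)$ leftover is $A$-independent and is absorbed into the normalization. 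The Wilson loop then picks up $\exp\!\bigl(in\alpha\oint_\gamma J(V_S)\bigr)$, and the intersection identity $\oint_\gamma J(V_S)={\rm Int}(V_S,\gamma)=\expval{S,\gamma}$ finishes the argument. No equation of motion is invoked, and it is precisely the non-flatness of $J(V_S)$ (since $\dd J(V_S)$ is the Poincar\'e dual of $S$) that absorbs the flux operator---the opposite of what your flat-shift route attempts.
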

\begin{proof} 
\be {\rm L.H.S} = \int \mathscr{D}[A] \exp{iS + \frac{i\alpha}{e^2}\int_S *F + in \oint_\gamma A} \ee
The idea here is to make a shift to $A$, under which the measure $\mathscr{D}[A]$ is invariant and the exponential power terms varies by a total derivative (which does not contribute given our trivial spacetime) as well as a term involving the linking number. Consider
\begin{align*}
    \int_S *F &= \int_{V_S(\partial V_S = S)} d*F = \int_{V_S} \frac{\epsilon_{\nu\rho\sigma\tau}}{3!} \partial_\mu F^{\mu\nu} \dd v^{\rho\sigma\tau}\\
    &= \int \dd^4 x \ \partial_\mu F^{\mu\nu} J_\nu(V_S) = -\int\dd^4 x \ F^{\mu\nu} \partial_\mu J_\nu(V_S)
\end{align*} 
where we define
\[ J_\nu(V_S) = \int_{V_S} \frac{\epsilon_{\nu\rho\sigma\tau}}{3!} \delta^4(x-y) \dd v^{\rho\sigma\tau}(y)\,. \]
We see how those terms expands around $A-\alpha J(V_S)$, and the corresponding results are
\begin{align}
    W_n(\gamma,A) &= W_n(\gamma,A-\alpha J(V_S)) \exp(in\oint_\gamma J(V_S)) \\
    S[A] &= S[A-\alpha J(V_S)] + \frac{\alpha}{e^2} \int \dd^4 x \ \pmu J_\nu(V_S) \Fmn \ (+ {\rm higher\  derivative \ terms\ in\ }J_\mu) \ .
\end{align}

The higher derivative terms in $J_\mu$ has no dependence on $A$, hence it was omitted as a normalization factor. 

Finally after shifting $A\mapsto A-\alpha J(V_S)$, we arrive at
\be
\expval{U_E(S) W_n(\gamma)} = \exp(in\alpha \oint_\gamma J(V_S)) \expval{W_n(\gamma)} \ .
\ee
Through the definition involving intersection number, one arrive at
\[\oint_\gamma J(V_S) = \expval{S,\gamma}\,.\]

\end{proof}

\subsubsection{Example: $U(1)$ Gauge Theory with Charged Matter}
Now we add the Maxwell theory with charged matters $\{ \phi_i\}$, with electric charge $q_i \in \bbZ$ which means $R_{q_i} (e^{i\alpha}) = e^{iq_i\alpha}$, also we define a positive integer $N = {\rm g.c.d}(q_i)$. 

The main difference in this scenario is that the symmetry current $*F$ of $U(1)_E$ is no longer conserved, instead we have
\be \frac{1}{e^2} \dd *F = *j_E \ , \ee
where $j_E$ is the usual electric current 1-form. 
However, if we still require that the operators
\be U_E(e^{i\alpha}, S) = \exp(\frac{i\alpha}{e^2} \int_S *F) \ , \ee
are topological, we would require that any deformation of $S$ should remain invariant, see Fig.\ref{fig:VS}
\begin{figure}[htbp]
    \centering
    \includegraphics[width = 0.6 \textwidth]{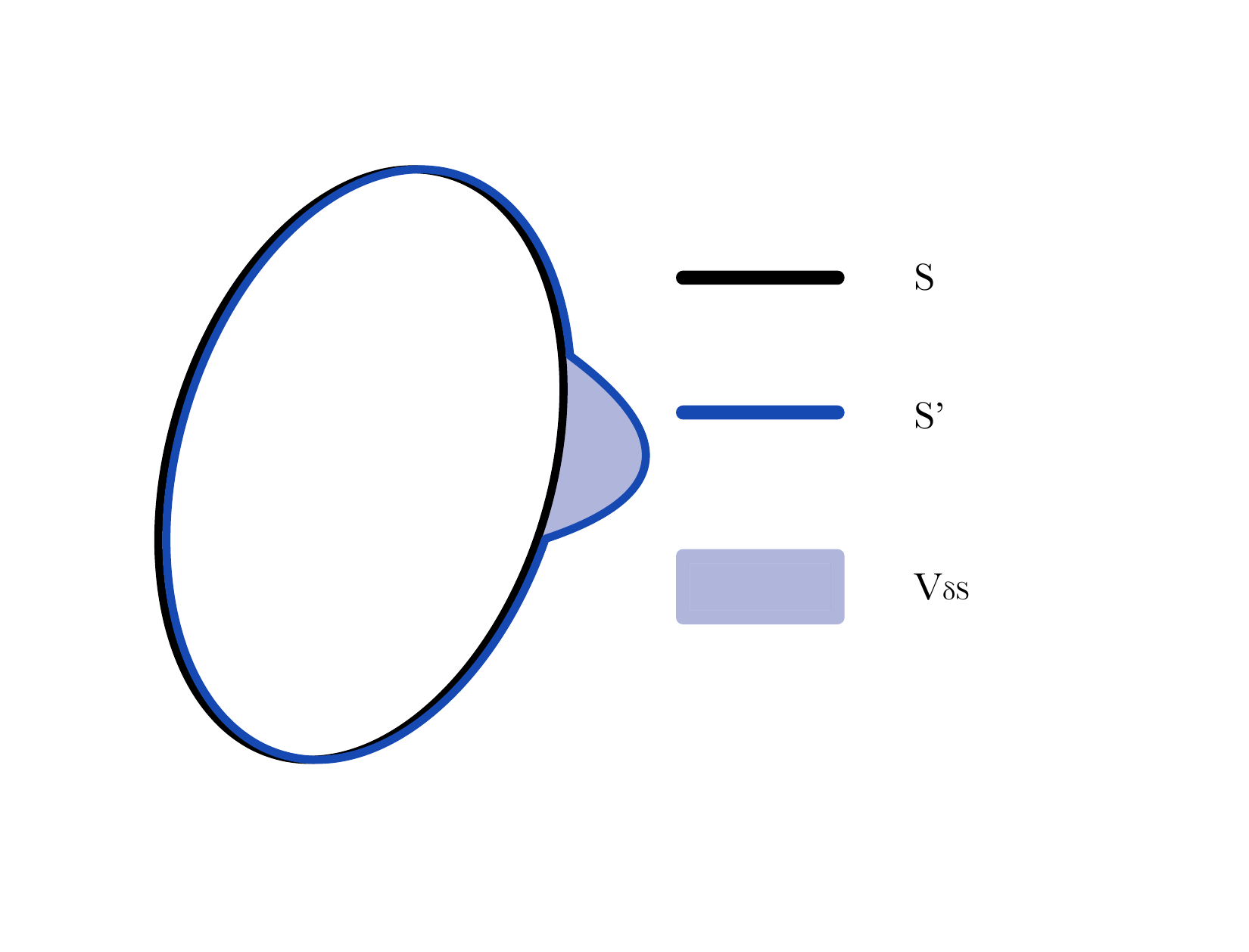}
    \caption{This figure visualizes the deformation from a closed manifold $S$ to $S'$, which are labelled by black and thick blue line. The two surfaces would enclose a volume, upon which the flux integrates to become the difference of the two associated topological operators in Eq.(\ref{eq:VS}).}
    \label{fig:VS}
\end{figure}
\be\ba 
U_E(e^{i\alpha}, S) &= U_E(e^{i\alpha}, S') \\
\Rightarrow \exp(\frac{i\alpha}{e^2}\int_{V_{\delta S}} \dd * F ) &= \exp(i\alpha\int_{V_{\delta S}} \ * j_E ) = 1 \quad \ptl V_{\delta S} = S \cup \overline{S'} \label{eq:VS}
\ea\ee
Charge quantization gives $\int_{V_S} \ *j_E \in N\bbZ $, so in order to satisfy (\ref{eq:VS}), we require that $\alpha \in 2\pi \bbZ_N / N$.

Thus, the $U(1)_E$ electric 1-form symmetry group is broken to a $\bbZ_N$ subgroup. It's also worth noting that $U(1)_M$ is left unchanged when we do not include magnetic monopoles.

This formalism can be generalized into a $U(1)^r$ gauge theory with charged matter fields ${\phi_i}$. The charge of the field $\phi_i$ is  $q_{i,j}\in\bbZ$ under the $j$-th $U(1)$ symmetry. In order to compute the symmetry breaking, we denote the charge matrix by $Q = (q_{i,j})$, where $Q$ is not necessarily a square matrix. The Smith normal decomposition gives us the following result:
\be
\label{SmithDecomp}
U Q V = D = \mqty( \dmat{d_1, d_2, ... ,d_k, 0 & 0 & 0 & ...\\ 0 & 0 & 0 & ...}  )\,.
\ee

Here $U$ and $V$ are invertible square matrices with integer matrix elements. The 1-form symmetry resulted from this is $(\bigoplus_{i=1}^k \bbZ_{d_i})\oplus U(1)^{r-k} $. The magnetic 1-form symmetry is still $U(1)_M=U(1)^r$.

\begin{example}[A theory with $\mb{Z}_3$ electric 1-form symmetry]
As an example, we consider a $U(1)^2$ gauge theory with matter fields whose charges are $\phi_1=(2,-1)$, $\phi_2=(-1,2)$. 
\end{example}

The charge matrix is hence
\be
Q=\bp 2 & -1\\-1 & 2\ep\,.
\ee
After the Smith normal decomposition, we get $D=\rm{diag}(3,1)$, hence the electric one-form symmetry is broken to the $\mb{Z}_3$ subgroup.

In fact, this is exactly the Coulomb phase of a pure $SU(3)$ gauge theory.

\subsubsection{Non-Abelian Gauge Theory in 4D}

Another well known examples are the non-abelian gauge theories, which we point the reader to the gauge theory lecture note by David Tong~\cite{TongGauge}. We consider the pure Yang-Mills theories with gauge group $G$ for simplicity.

For $G = SU(N)$, $\bbZ_N$ is the 1-form electric symmetry generated by the center of $SU(N)$, which acts on Wilson loops  charged under this symmetry. 


Now we discuss the case of finite temperature system on $S^1\times\mb{R}^3$, where the Euclidean time circle satisfies $x^0\sim x^0+\beta$. The $\mb{Z}_N$ 1-form center symmetry is defined in terms of an equivalence class of functions $[h_k]=\{h_k(x^\mu)\in SU(N), h_k(x^\mu)\sim g(x^\mu)h_k(x^\mu)g^{-1}(x^\mu)\}$, which is periodic up to the center of $SU(N)$: $h_k(x^0=\beta)=e^{2\pi k/N}h_k(x^0=0)$. It acts on $A$ as 
\be
A\rightarrow h_kAh_k^{-1}+h_kdh_k^{-1}\,,
\ee
which acts trivially on local operators but non-trivially on the Polyakov loop
\be
P_R=\text{tr}_R(\mathcal{P}\exp\left(\oint_{S^1}A^0\right)
\ee
as
\be
P_R\rightarrow P_R e^{\frac{2\pi i k n_R}{N}}\,.
\ee
Here $R$ is a representation of $SU(N)$ which defines the Wilson loop (Polyakov loop), and $n_R$ is called the ``N-ality'', which equals to the total number of boxes in the Young diagram of $R$ mod $N$.

This action of $\mb{Z}_N$ 1-form symmetry also applies to general Wilson loops:
\be
\ba
U_{k(\in \bbZ_N)}(S^2) W_R(\gamma) &= W_R(\gamma) \Big[A + \frac{2\pi k}{N} I \Big] \quad \expval{S^2 ,\gamma} = 1\cr
&=e^{\frac{2\pi ik n_R}{N}}W_R(\gamma)\,.
\ea
\ee

In the $G = PSU(N) = SU(N)/\bbZ_N$ case, the system exhibits a $\bbZ_N$ magnetic symmetry acting on 't Hooft loops, due to the fact that $\pi_1(PSU(N))=\mb{Z}_N$ \cite{TongGauge}. We can see that in the following description. Consider  magnetic monopole in pure space (a 't Hooft line in spacetime), Wu-Yang monopole construction tells us that by separating the space into northern/southern hemisphere, the gauge potentials are
\be\ba
A_N = &B_0 \frac{1+\cos(\theta)}{2 r^2} \dd \phi \\
A_S = &B_0 \frac{-1+\cos(\theta)}{2 r^2} \dd \phi \\
\ea\ee
and the patching condition at the equator
\be \exp(i B_0\phi) =  \exp(i B_0(\phi+2\pi) )\Rightarrow \exp(2\pi i B_0 ) = I_{PSU(N)} \ee
such choice of $B_0$ is classified by $\pi_1(PSU(N)) = \bbZ_N$.

We note that in the $SU(N)$ theory, there is a center $\bbZ_N^{(1)}$ symmetry but no such magnetic symmetry, since $\pi_1(SU(N)) = 0$. Meanwhile in the $PSU(N)$ theory, there is no center symmetry since its center is trivial, but a magnetic $\bbZ_N^{(1)}$ symmetry. This relation can be generalized, see \cite{Brennan:2023mmt}.

More generally speaking, in a pure Yang-Mills theory with gauge Lie algebra $\mathfrak{su}(N)$ one can classify a loop operator by its charge $(n_E,n_M)\in \bbZ_N\times \bbZ_N$. Two such operators with $(n_E,n_M)$ and $(n_E',n_M')$ are mutually local (and can coexist) if and only if they satisfy the Dirac quantization condition
\be
n_E n_M'-n_M n_E'=0\quad (\text{mod}\ N)\,.
\ee
Such a choice of a set of mutually local operators is called a \textbf{polarization}. For example, if one choose the set of mutually local operators to be all the Wilson loops, then the global form of the gauge theory is an $G=SU(N)$ gauge theory. In constrast, if one choose them to be all the 't Hooft loops, the global form of the gauge theory is an $G=PSU(N)$ gauge theory. Of course, when 
$N$
is not a prime number, we can have cases in between.

After one choose a polarization, the theory is called an \textbf{absolute theory} with a well-defined partition function. If not, the theory is in general a \textbf{relative theory} with only a partition vector. 
The notion of partition vector is first introduced in the context of 6d (2,0) theories and their torus compactifications, where there exists mutually non-local dyonic charged operators, see for example~\cite{Witten:2009at,Freed:2012bs,Tachikawa:2013hya} for more details.

\subsection{Gauging Higher-Form Symmetry}
\label{sec:gauging}

An almost ancient yet refreshing question is: What is gauging? It usually means the procedure of changing a certain global symmetry into a gauge symmetry. But what is a gauge symmetry? In some tales it refers to a symmetry dependent upon the spacetime, but there are certain limitations, for example, how can one make a discrete symmetry dependent upon spacetime and different from global symmetry (if one does not abandon smoothness)? A better understanding is to sum over all possible ``configurations'' of gauge fields so that the gauge transformations are assuredly redundant, and we will be more explicit about these formulations in Section~\ref{sec:three}. In this section, unless specified, the $p$-form symmetry group $G^{(p)}$ is Abelian. We denote the background $(p+1)$-form gauge field of such symmetry to be $A^{(p+1)}$.

To put more precisely, there are two equivalent perspectives of gauging:
\begin{enumerate}
    \item summing over all possible insertions of the operators $U_g(M^{d-p-1})$;
    \item lift $A^{(p+1)}$ to be a dynamical gauge field, then integrate (sum over) all configurations of $A^{(p+1)}$.
\end{enumerate}
In particular, in the cases of gauging a finite abelian group, we have the following proposition:
\begin{prop}
\label{prop:gauge}
    After gauging the finite abelian symmetry $G^{(p)}$ of the theory $Z[A^{(p+1)}]$, the gauged theory $Z'$ would manifest a new symmetry $\hat{G}^{(d-p-2)}$, with $\hat{G}\equiv {\rm Hom}_{\rm Grp}(G,U(1))$ being the Pontryagin dual of $G$. The gauged theory has partition function
    \be Z'[\hat{A}^{(d-p-1)}] = \sum_{[A^{(p+1)}] \in H^{(p+1)}(M^d; G^{(p)})} Z(A^{(p+1)}) \exp(2\pi i \int_{M^d} \expval{\hat{A}^{(d-p-1)},A^{(p+1)}}  ) \ , \ee
    where $\expval{}$ is the inner product at group level. When the group $G^{(p)}$ is continuous, $\expval{}$ corresponds to the wedge product with action of $\Hat{\mathfrak{g}}$-valued element on $\mathfrak{g}$-valued element. If $G^{(p)}$ is discrete, $\expval{}$ should be interpreted as the cup product with action of $\Hat{G}$-valued element on $G$-valued element..
\end{prop}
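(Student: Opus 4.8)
The plan is to realize the gauging procedure in its second incarnation listed just above—promoting $A^{(p+1)}$ to a dynamical field and summing over all its configurations—while coupling it to a background for the would-be dual symmetry, and then to read off both the formula for $Z'$ and the existence of $\hat G^{(d-p-2)}$. First I would fix the degree bookkeeping: a $(d-p-2)$-form symmetry is generated by codimension-$(d-p-1)$, i.e. $(p+1)$-dimensional, topological operators and is coupled to a background $(d-p-1)$-form $\hat A^{(d-p-1)}$. Since $(d-p-1)+(p+1)=d$, the pairing $\langle\hat A^{(d-p-1)},A^{(p+1)}\rangle$ is a top-degree cochain on $M^d$, so the coupling $\exp\big(2\pi i\int_{M^d}\langle\hat A,A\rangle\big)$ is well defined; this is exactly the term in the claimed formula. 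At this stage $Z'[\hat A^{(d-p-1)}]$ holds essentially by construction, up to an overall normalization from the gauging measure (powers of $|H^\bullet(M^d;G)|$) which I would fix by demanding that $Z'$ be well defined.

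The substance of the statement is that $Z'$ genuinely carries a $\hat G^{(d-p-2)}$ symmetry. To exhibit its generators I would introduce, in the gauged theory, the holonomy operators $V_{\hat g}(\Sigma^{p+1})=\exp\big(2\pi i\,\langle\hat g,\oint_{\Sigma^{p+1}}A^{(p+1)}\rangle\big)$ for $\hat g\in\hat G$ and $\Sigma^{p+1}$ a $(p+1)$-cycle, where now $A^{(p+1)}$ is the dynamical field being summed over. Using Poincaré duality to write $\oint_{\Sigma}A=\int_{M^d}A\cup\mathrm{PD}[\Sigma]$, with $\mathrm{PD}[\Sigma]$ a $(d-p-1)$-cocycle, inserting $V_{\hat g}(\Sigma)$ into the sum is equivalent to the shift $\hat A\mapsto\hat A+\hat g\cdot\mathrm{PD}[\Sigma]$ of the dual background. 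This identifies $\hat A$ as the field that couples minimally to the $V_{\hat g}$, confirming both that the $V_{\hat g}$ are the symmetry generators and that they carry the expected charge under $\hat A$.

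Next I would verify the two defining properties of these generators. Topological invariance follows because, after summing over $A^{(p+1)}$, only flat (cocycle) configurations contribute coherently, so $\oint_\Sigma A$ depends only on the homology class $[\Sigma]$; a small deformation of $\Sigma$ across a region containing no charged object leaves $V_{\hat g}$ unchanged. The group law $V_{\hat g}(\Sigma)\,V_{\hat g'}(\Sigma)=V_{\hat g\hat g'}(\Sigma)$ is inherited from the additive structure of $\hat G$, and the pairing $\langle\hat g,\cdot\rangle$ is precisely the evaluation pairing $\hat G\times G\to U(1)$, which is what realizes $\hat G=\mathrm{Hom}_{\mathrm{Grp}}(G,U(1))$. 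That the symmetry group is exactly $\hat G$—no larger, no smaller—I would pin down using the character orthogonality relation $\sum_{g\in G}\chi(g)=|G|\,\delta_{\chi,\mathbf{1}}$, which makes the sum over gauge fields project onto the neutral sector and, dually, promotes each character of $G$ to an independent symmetry transformation. Pontryagin reflexivity $\hat{\hat G}\cong G$ then guarantees that gauging $\hat G^{(d-p-2)}$ recovers the original theory, a useful consistency check.

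The main obstacle is the careful treatment of the pairing and normalization in the discrete case. For continuous $G=U(1)$ the pairing is the ordinary wedge product of a $\hat{\mathfrak g}$-valued and a $\mathfrak g$-valued form, but for discrete $G$ it must be the cup product $H^{d-p-1}(M^d;\hat G)\times H^{p+1}(M^d;G)\to H^d(M^d;U(1))\cong U(1)$ composed with the evaluation pairing on coefficients, and establishing that this cup product is nondegenerate—so that no part of $\hat G$ acts trivially—is the technically delicate step; it relies on Poincaré duality for cohomology with finite coefficients on a closed oriented $M^d$. A related subtlety is the quantization and normalization of the gauging sum, and, in the continuous case, the distinction between summing and integrating over configurations, which must be handled so that the Dirac pairing is integer-valued on the relevant lattices.
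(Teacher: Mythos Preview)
Your proposal is correct and follows essentially the same route as the paper: you identify the generators of the dual $\hat G^{(d-p-2)}$ symmetry as the holonomy operators $V_{\hat g}(\Sigma^{p+1})=\exp\big(2\pi i\,\langle\hat g,\oint_{\Sigma}A^{(p+1)}\rangle\big)$ built from the now-dynamical gauge field and labeled by characters $\hat g\in\hat G=\mathrm{Rep}(G)$, which is exactly the paper's construction $D'(\rho',M^{p+1})=\rho'\big(\exp(i\int_{M^{p+1}}A^{(p+1)})\big)$ with $\rho'\in\hat G$. Your treatment is in fact more thorough than the paper's brief sketch: the Poincar\'e-duality argument that inserting $V_{\hat g}(\Sigma)$ is equivalent to shifting $\hat A\mapsto\hat A+\hat g\cdot\mathrm{PD}[\Sigma]$, the use of character orthogonality to pin down the size of the dual group, and the discussion of nondegeneracy of the cup-product pairing are all refinements the paper does not spell out.
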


\begin{proof}
    Before beginning the proof, let's observe the intuition of the theorem. In a naive sense of a theory with 0-form symmetry, gauging should amount to projecting out the Hilbert space into a physical one, but not quite. The projection is non-invertible, and in order to preserve the information of representation, we can insert a Wilson line (remind that a Wilson line is labelled by a chosen representation).

    So here we do the same. Coupling the term
    \[\exp(2\pi i \int_{M^d} \expval{\hat{A}^{(d-p-1)},A^{(p+1)}}  ) \ , \]
    can be seen as coupling a discrete flux. Following our previous method, the symmetry operators in theory $Z$ should be built by the discrete flux (which is the higer dimensional version of Wilson line)\cite{Schafer-Nameki:2023jdn}
    \be D(M^{d-p-1} ) = \exp(i\int_{M^{d-p-1}} \hat{A}^{(d-p-1)}) \ . \ee
    But something is not right, since $\hat{A}^{(d-p-1)}$ takes value in $\Hat{G}$ but not $G$! The correct understanding here is \textbf{the fusion rule is still the multiplication of $G$}, this is realized by writing the operator correctly as
    \be D(\rho, M^{d-p-1}) = \rho\Big( \exp(i\int_{M^{d-p-1}} \hat{A}^{(d-p-1)})\Big) \ , \ \rho\in {\rm Rep}(\Hat{G}) = G \ .  \ee
    The fusion rule for the symmetry operator is thus
    \be D(\rho_1 ,M ) \otimes D(\rho_2,M) = D(\rho_1\otimes \rho_2 ,M)   \ee
    correctly takes the $G$ multiplication as fusion rule.\footnote{We thank Ruizhi Liu for helpful comments on this part.}

    Similarly, for the gauged theory $Z'$, the symmetry operators are constructed as
    \be D'(\rho', M^{p+1}) = \rho'\Big( \exp(i\int_{M^{p+1}} A^{(p+1)}) \Big)\ , \ \rho'\in {\rm Rep}(G) = \Hat{G} \ . \ee
    With the symmetry operators constructed, we can legitimately say that the $Z'$ theory has a $\Hat{G}^{(d-p-2)}$ symmetry.
\end{proof}

As a corollary, we can actually do another gauging for $\hat{G}^{(d-p-2)}$ and obtain the original theory. This is just a simple exercise, done by observing that ${\rm Hom}( {\rm Hom}(G,\U(1)) , U(1) ) = G$ and $d-(d-p-2)-2 = p$.


\subsubsection{'t Hooft Anomaly for $p$-Form Symmetry}
However, in certain cases the gauging cannot be performed perfectly, there could be obstructions which we call 't Hooft anomaly. It is a phenomenon of gauge transformation rendering change in the theory's partition function. Namely, if we take $\delta_\lambda A$ to be a general transformation of the background gauge field\footnote{For an infinitesimal gauge transformation, it is called a \textbf{local anomaly}. For a large gauge transformation, it is a \textbf{global anomaly}.},
\be Z(A + \delta_\lambda A) = \exp(2\pi i \int_{M^d} I_d)\cdot  Z(A) \  \ee
where $I_d$ is a $d$-form in terms of the gauge field. Of course, we say that the system has an 't Hooft anomaly only if such a phase factor $2\pi i \int_{M^d} I_d$ cannot be cancelled by a gauge-invariant local counter term in the $d$-dimensional action. 

A standard procedure to treat this is through the descent formula. The idea is to find a gauge invariant object to describe the anomaly. the descent formula is
\be
\ba
\dd I_d = \delta_\lambda I_{d+1} \ , \ \dd I_{d+1}  = I_{d+2} \cr
Z(A + \delta_\lambda A) = \exp(2\pi i \int_{M^{d+1}} \delta_\lambda I_{d+1})\cdot  Z(A)
\ea
\ee
where $\ptl M^{d+1} = M^d $. $I_{d+1}$ is the $(d+1)$-form 't Hooft anomaly polynomial, $I_{d+1} $ is a gauge invariant $(d+2)$-form.

This gives hint for the ``anomaly inflow'' mechanism, that a QFT in $d$-dimension can be coupled to a QFT in $(d+1)$-dimension. A general construction is a QFT on $M^d$ with 't Hooft anomaly defined as above can be coupled to a TQFT on $M^{d+1}$ with $\mathcal{L} = -2\pi I_{d+1}$. The entire theory is a well-defined,  anomaly free theory. Such notion is directly related to the concept of symmetry protected topological (SPT) phase~\cite{gu2009tensor,chen2012symmetry,chen2013symmetry,chen2011two,levin2012braiding,PhysRevB.89.035147,chenjie2014,chenjie2015,senthil2015symmetry} in condensed matter physics, where the symmetry is exactly the global symmetry which we are gauging. SPT phases are interacting generalizations of noninteracting topological insulators and superconductors~\cite{TI01,TI02,TI1,TI2,TI3,TI4,TI5}, which exhibit robust boundary states and topological properties protected by symmetry.

We give some examples of the gauging of higher-form symmetry and 't Hooft anomaly.

\subsubsection{Example: Maxwell Theory with Background}
Now we consider a four-dimensional $U(1)$ Maxwell theory with $U(1)_E$, $U(1)_M$ 1-form symmetries  coupled to background gauge fields $B_E$, $B_M$ , the coupled action becomes
\be 
\label{Maxwell-gauged}
S = \frac{1}{2e^2} \int (F-B_E)\wedge * (F-B_E) + \frac{1}{2\pi} \int (F-B_E)\wedge B_M\,,  
\ee
the gauge transformations are
\be  A\mapsto A + \lambda_E  \ , \ B_E \mapsto B_E + \dd \lambda_E \ , \ B_M \mapsto B_M + \dd \lambda_M \ee
thus $(F-B_E)$ is gauge invariant, also note that here we allow $\lambda_E$ and $\lambda_M$ to be non-flat. 

Under such gauge transformation,
\be
\delta S = -\frac{1}{2\pi} \int_{M^4} B_E\wedge \dd\lambda_M\,.
\ee
Hence the action is not gauge invariant and the theory has a mixed 't Hooft anomaly, which means that we cannot simultaneously gauge both $U(1)_E \ $and$ \ U(1)_M$. After uplifting to 5d, the 't Hooft polynomial is
\be I_5=-\frac{1}{(2\pi)^2} \dd B_E \wedge B_M \ . \ee 
Note that the gauged action (\ref{Maxwell-gauged}) is chosen such that there are a minimal number of non-gauge invariant terms.

If we take the combined system
\be S_{\rm 5d-4d} = \frac{1}{2e^2} \int_{M^4} (F-B_E)\wedge * (F-B_E) + \frac{1}{2\pi} \int_{M^4} (F-B_E)\wedge B_M+\frac{1}{2\pi}\int_{M^5}dB_E\wedge B_M\,, \ee
the whole system is gauge invariant.

\subsubsection{Example: 3d U(1) Chern-Simons Theory}
The basic setting for 3d U(1) Chern-Simons theory is
\be\ba S &= \frac{\kappa}{4\pi} \int_{M^3} A\wedge \dd A  \ \ \kappa\in\bbZ \cr  {\rm e. o. m. :} \ \ \dd A &= 0 \ea \ee
here $\kappa\in \bbZ$ is the quantized Chern-Simons level.

This system possesses $\bbZ_\kappa^{(1)}$ 1-form symmetry. In 3d, the 1-form symmetry has $3-1-1=1$ dimensional topological line operators (or defects), while the objects are also one-dimensional lines. Only one gauge invariant line operator is natural in this theory: the Wilson loop\footnote{We do not have 't Hooft line previously mentioned in 4d Maxwell theory, because the requirement for electromagnetic duality lies in the Lagrangian. In other words, the magnetic symmetry of the Maxwell theory is broken by the Chern-Simons term.}. In the following, we denote $g_n\equiv \exp(2\pi i n/\kappa)$. The line operators are
\be  U_{g_n} (\gamma)= \exp(i n \oint_\gamma A) \ , \ \gamma:[0,1]\to M, \gamma(0) = \gamma(1) \ .   \ee
Action on the lines is therefore
\be U_{g_m} (\gamma_1) U_{g_n} (\gamma_2) = \exp(\frac{2\pi i m n}{\kappa} \expval{\gamma_1, \gamma_2}) U_{g_n} (\gamma_2) \ .  \ee
Another perspective for this is the topological operator $ U_{g_n} (\gamma)$ shifts $A$ by a flat connection, $A\mapsto A + \frac{n}{\kappa} \epsilon$.

If we gauge the $\bbZ_\kappa^{(1)}$ symmetry, we would encounter the $3-1-2 = 0$-form $\hat{\bbZ}_\kappa = \bbZ_\kappa $ symmetry. The gauging procedure is performed by summing all possible insertions of $U_{g_n}$. After gauging, the theory has become
\be \ba
A' &= \kappa A\cr
S &= \frac{1}{4\pi \kappa} \int_{M^3} A' \wedge \dd A' \quad, \  \kappa\in\bbZ  \ .
\ea\ee
Now $A'$ is the properly quantized gauge field.

However the new theory is no longer gauge invariant under a large gauge transformation of $A'$. For the readers unfamiliar with CS theory, we sketch the reasonings behind the quantization of Chern-Simons level (also see e. g. Section 8.4 of \cite{TongGauge}). Consider the CS action
\be
S=\frac{\kappa}{4\pi}\int_{M^3} A\wedge \dd A
\ee
defined on $M^3=S^1\times S^2$. We denote the Euclidean time direction  $S^1$ to be the $x^0\sim x^0+2\pi R$ direction, and the $S^2$ to be the $x^1$, $x^2$ directions. We introduce a magnetic flux $\int_{S^2}F=2\pi$. Now we consider a large gauge transformation $A_\mu\rightarrow A_\mu+\ptl_\mu \lambda$, where $\lambda=x^0/R$ winds around the $S^1$. Hence under this gauge transformation, we have $A_0\rightarrow A_0+1/R$, and the gauge field $A_0$ has periodicity $1/R$ now. For the gauge transformation of the CS action, we consider a configuration of constant $A_0$, and we simplify
\be
\ba
S&=\frac{\kappa}{4\pi}\int_{S^1\times S^2} A\wedge \dd A\cr
&=\frac{\kappa}{4\pi}\int_{S^1\times S^2}A_0 F_{12}+A_1 F_{20}+A_2 F_{01}\cr
&=\frac{\kappa}{2\pi}\int_{S^1\times S^2}A_0 F_{12}\,.
\ea
\ee
After the large gauge transformation
\be
\ba
\delta S&=\frac{\kappa}{2\pi}\int_{S^1\times S^2}\frac{1}{R} F_{12}\cr
&=2\pi\kappa\,.
\ea
\ee
Hence in order to keep $e^{iS}$ invariant, one requires $\kappa\in\mb{Z}$.

Now we intend to calculate the 't Hooft anomaly polynomial after gauging the $\mb{Z}_\kappa$ 1-form symmetry. We require $M^3$ to be a spin manifold which is a boundary of a 4-dimensional spin manifold $M^4$. We can rewrite the action as
\be S = \frac{\kappa}{4\pi} \int_{M^3} A\wedge \dd A =  \frac{\kappa}{4\pi} \int_{M^4} F\wedge F = 2\pi \kappa \int_{M^4} \frac{c_1^2}{2}  \ee
where $c_1 = \frac{F}{2\pi}$ is the first Chern class of principal bundle with connection $A$. Next we couple the system to a $\bbZ_\kappa$-valued background gauge field $B\in H^2 (M^3;\bbZ_\kappa)$, changing the action into
\be S \mapsto S =  \frac{\kappa}{4\pi} \int_{M^4} (F -\frac{2\pi}{\kappa} B)\wedge (F -\frac{2\pi}{\kappa} B)  \ ,\ee
which is not gauge invariant if we allow non-flat gauge transform $A\mapsto A +\epsilon/\kappa$, for $B$ is $\bbZ_\kappa$-valued and unable to submit to small gauge transform. Thus the anomaly polynomial term is
\be I_4 = \frac{1}{2\kappa} \int_{M^4} B\cup B  \ .  \ee
If $I_4\notin \bbZ$, we would encounter 't Hooft anomaly. The previous request that $M^4$ is a spin manifold gives the fact that
\be \int_{M^4} B\cup B  \in 2\bbZ \Rightarrow I_4\in \bbZ/\kappa \ , \ee
the theory is anomalous for $\kappa > 1 $.

\subsection{Symmetry Trifecta}
\label{sec:three}

We have now roughly understood higher-form symmetry in the language of topological defects, and in some cases we mentioned it could be also seen as shifting a flat connection. With the previous examples in mind, we are now ready to march forward to a more thorough understanding of higher-form symmetry. The assertion is that higher-form symmetry can be comprehended from three interchangeable viewpoints: \textbf{topological defect network} (TDN), \textbf{flat connection} and \textbf{classifying space}.

Let's start by defining these perspectives within 0-form symmetry, and then we'll extend them to higher-form symmetry.

\begin{enumerate}
    \item \textbf{Topological defect network}. For those readers who are already familiar with topological operators, feel free to skip to the next section on flat connections. We will now delve into the topic with a focus on the lattice perspective.

    Consider an array of sites extending through a $d$-dimensional space, with each site associated with a Hilbert space $\CH_i$, such as $\mathbb C^2$ for a qubit. The system evolves over time until it encounters a topological defect that disrupts its continuity. A visual representation of this concept in a $1+1$-dimensional setting is provided in Fig.~\ref{Fig:lattice-TDN}.

\begin{figure}[htbp]
\centering
\begin{tikzpicture}
\draw[thick, ->] (-1,0)--(2.2,0) node[anchor=west]{$x$} ;
\draw[thick, ->] (0,-0.5)--(0,2) node[anchor = south]{$t$};
\draw[fill = cyan] (-0.4,0) circle (2pt)
    [shift = {(10pt,0)}] (-0.4,0) circle (2pt)
    [shift = {(10pt,0)}] (-0.4,0) circle (2pt)
    [shift = {(10pt,0)}] (-0.4,0) circle (2pt)
    [shift = {(10pt,0)}] (-0.4,0) circle (2pt)
    [shift = {(10pt,0)}] (-0.4,0) circle (2pt)
    [shift = {(10pt,0)}] (-0.4,0) circle (2pt);
\draw[color = cyan, thick] (-0.4,0)--(-0.4,0.95)
    [shift = {(10pt,0)}] (-0.4,0)--(-0.4,0.95)
    [shift = {(10pt,0)}] (-0.4,0)--(-0.4,0.95)
    [shift = {(10pt,0)}] (-0.4,0)--(-0.4,0.95)
    [shift = {(10pt,0)}] (-0.4,0)--(-0.4,0.95)
    [shift = {(10pt,0)}] (-0.4,0)--(-0.4,0.95)
    [shift = {(10pt,0)}] (-0.4,0)--(-0.4,0.95);
\draw[thick ,orange] (-1,1)--(2.2,1) node[anchor=west]{$U_g$} ;
\draw[color = blue, thick] (-0.4,1.1)--(-0.4,1.9)
    [shift = {(10pt,0)}] (-0.4,1.1)--(-0.4,1.9)
    [shift = {(10pt,0)}] (-0.4,1.1)--(-0.4,1.9)
    [shift = {(10pt,0)}] (-0.4,1.1)--(-0.4,1.9)
    [shift = {(10pt,0)}] (-0.4,1.1)--(-0.4,1.9)
    [shift = {(10pt,0)}] (-0.4,1.1)--(-0.4,1.9)
    [shift = {(10pt,0)}] (-0.4,1.1)--(-0.4,1.9);

\end{tikzpicture}
\caption{The 1-dimensional lattice with topological defect.}
\label{Fig:lattice-TDN}
\end{figure}

The topological defect operators must adhere to the group representation, which can be expressed as
\be U_{g} U_{h} = U_{gh}. \ee
This operation can be understood as the fusion of topological defects $U_g$ and $U_h$ to $U_{gh}$ as
\be
    \begin{tikzpicture}
    \draw[color = cyan, thick,dashed] (-1,0.05)--(-1,-1.2)
    [shift = {(10pt,0)}] (-1,0.05)--(-1,-1.2)
    [shift = {(10pt,0)}] (-1,0.05)--(-1,-1.2)
    [shift = {(10pt,0)}] (-1,0.05)--(-1,-1.2)
    [shift = {(10pt,0)}] (-1,0.05)--(-1,-1.2)
    [shift = {(10pt,0)}] (-1,0.05)--(-1,-1.2)
    [shift = {(10pt,0)}] (-1,0.05)--(-1,-1.2)
    [shift = {(10pt,0)}] (-1,0.05)--(-1,-1.2)
    [shift = {(10pt,0)}] (-1,0.05)--(-1,-1.2)
    [shift = {(10pt,0)}] (-1,0.05)--(-1,-1.2)
    [shift = {(10pt,0)}] (-1,0.05)--(-1,-1.2);

    \draw[orange, thick] (-1,0)--(1,0) node[anchor = south]{$U_g$};
    \draw[orange, thick] (-1,-1)--(1,-1) node[anchor = north]{$U_h$};
    \draw[orange, thick] (1,0)--(1.5,-0.5);
    \draw[orange, thick] (1.5,-0.5)--(2.5,-0.5) node[anchor = west]{$U_{gh}$};
    \draw[orange, thick] (1,-1)--(1.5,-0.5);

    \end{tikzpicture}
\ee
blue dashed lines represent the time evolution of states at the sites. It's important to note that the group action by defects should have direction; the inverse direction corresponds to the inverse operator $U_{g^{-1}}$. For instance, we have the following consecutive action of defect operators: $U_g$, $U_{g^{-1}}$, and $U_g$. It should be the same as a single action of $U_g$:
\be
    \begin{tikzpicture}
    \draw[orange , thick , ->] (-1,1)--(-0.5,1) ; 
    \draw[orange , thick,->] (0.5,1)--(0.7,1) node[anchor = west]{$U_g$}; 
    \draw[orange , thick,->] (2,1)--(3,1) node[anchor = west]{$U_g$}; 
    \draw[orange,thick,->] (-0.5,1) arc (90:-90:0.25);
    \draw[orange,thick,->] (-0.5,0.5) arc (90:270:0.25);
    \draw[orange,thick,->] (-0.5,0) arc (-90:0:0.5);
    \draw[orange,thick] (0,0.5) arc (180:90:0.5);
    \draw[cyan,thick,->] (-0.5,-0.5) -- (-0.5,1.5);
    \draw[cyan,thick,->] (2.5,-0.5) -- (2.5,1.5);

    \node[] at (1.7,1) {$=$};
    \end{tikzpicture}
\ee

In addition, the inverse relation of $U_{g^{-1}}$ and $U_g$ can be represented as:
\be\label{Prop:bubble}
    \begin{tikzpicture}
    \draw[thick, orange ,->] (-0.5,0) arc (0:360:0.5);
    \node[] at (0,0) {$=$};
    \node[] at (0.5,0) {id};
    \end{tikzpicture}
\ee
Such a bubble can always be added to the TDN without actually changing the system which respects the symmetry.

    \item \label{para:flatconn}\textbf{Flat connection.} In the context of lattice gauge theory, or equivalently in \Cech cohomology, we assign labels to lattice sites using indices like $i$, $j$, $k$, and so on. The 1-form gauge connection, denoted as $A$, is responsible for associating group elements with the links connecting lattice sites. In other words, if there is a link between site $i$ and site $j$, then $A_{ij} \in G$, the group we are working with. However, this assignment is not arbitrary; it must adhere to specific conditions such as 
    \be A_{ij}\cdot A_{jk} \cdot A_{ki} = {\rm id} = A_{ij} \cdot A_{ji} \ . \label{Eq:Flatcondition}\ee
    This condition characterizes what is known as a flat connection. In fact, Eq.~(\ref{Eq:Flatcondition}) can be interpreted as a line integral
    \be  \oint A = 0. \ee
    in the continuous limit.
    The gauge transformation in this case is given by
    \be\label{gaugetransf1} A_{ij} \mapsto A_{ij}'=g_i^{-1} A_{ij} g_j,  \ee
    and thus,
    \be \int_{i\to j} A = A_{ii_1}\cdot A_{i_1 i_2}\dots A_{i_n j} \mapsto g_i^{-1} \cdot (\int_{i\to j} A) \cdot g_j. \ee
    If we choose a closed loop for the integral, we observe that the Wilson loop is gauge invariant, as expected.

    \item \textbf{Classifying space}. This is a more abstract concept, and in this section, we aim to provide insights rather than rigorous proofs. The main idea is to construct a space $BG$ in such a way that any principal bundle $P\to M$ can be generated by taking a pullback through a map $\phi : M\to BG$, as illustrated in Fig.~\ref{Fig:ClassSpace}. In this context, $EG\to BG$ represents the "universal $G$-bundle," which can be pulled back to yield any $G$-bundle. While this may seem complex, if we harken back to our earlier understanding of flat connections, we can discern that what truly matters is the information concerning the assignment of group elements to links connecting sites. The specific details of the sites themselves become less relevant for our purposes. For a more detailed description, please refer to Example \ref{Eg:GT}. Specifying a gauge field configuration is essentially equivalent to defining the holonomy of paths, which, in turn, corresponds to a mapping from spacetime to the classifying space $BG$.
    \begin{figure}[htbp]
        \centering
            \begin{tikzcd}
    G \arrow[d, hook] \arrow[rd, hook] &                                                     \\
    P \arrow[d]\arrow[r]                        & EG \arrow[d] 
    \\
    M \arrow[r, "\phi"]                & BG                                                 
    \end{tikzcd}
    \caption{Classifying space and universal bundle.}
    \label{Fig:ClassSpace}
    \end{figure}

\begin{definition}
The classifying space of a group $G$, denoted as $BG$, is a topological space that possesses the property $\pi_1(BG) \cong G$, while $\pi_n(BG) = 0$ for all $n$ not equal to 1.
\end{definition}

A map $\phi: M\to BG$ assigns points in the spacetime manifold $M$ to the single point in the classifying space $BG$ and paths in $M$ to group elements, as we've discussed previously. It's essential to recognize that homotopy-equivalent $\phi$ functions lead to equivalent $G$-bundles on the spacetime manifold $M$. From the perspective of gauge field connections, this equivalence corresponds to gauge-equivalent gauge field configurations $A$ and $A'$ in Eq.~(\ref{gaugetransf1}). Thus, the concept of homotopy in mathematics is translated into the notion of gauge transformations in physics.



\end{enumerate}

Extending the previous three perspectives from 0-form symmetry to higher-form symmetry is a straightforward process. Here's the overview:

\begin{enumerate}
    \item \textbf{Topological defect network} (Geometric intuition). In this viewpoint, topological defects are $(d-p-1)$-dimensional entities embedded within a $d$-dimensional spacetime. These entities have the capability to combine or interact with each other along lower-dimensional manifolds. A defect network serves as a distinctive and precise guide for understanding how a higher $p$-form symmetry operates within a given physical system. It outlines the rules and patterns governing the action of this symmetry on the system.  This approach provides us with a geometric intuition, as all operators are associated with submanifolds within the spacetime manifold.

    \item \textbf{Flat connection} (Physical computation). In this context, the connections take the form of $(p+1)$-forms that establish an assignment between group elements and $(p+1)$-simplices within the spacetime manifold. For instance, a 1-form symmetry corresponds to 2-form gauge fields. In a lattice framework, these gauge fields associate group elements, denoted as $g_{ijk}$ for an Abelian group $G$, with the triangles, represented as $\langle ijk\rangle$, in the triangulation of the spacetime manifold. These connections are also subject to the flat connection condition, which implies that the sum of group elements associated with all four triangles on the boundary of a tetrahedron adds up to the trivial element in $G$. This approach has the advantage that we can easily write down the action and gauge transformation rules in terms of gauge fields, making it well-suited for physical computations.
    
    \item \textbf{Classifying space} (Topological classification). We now have the classifying map $\phi: M^d\to B^{p+1}G$, where $B^{p+1}G=K(G,p+1)$ is the Eilenberg-MacLane space defined such that $\pi_{p+1}(B^{p+1}G) \cong G$, and all other homotopy groups are trivial. A map from spacetime manifold $M^d$ to the Eilenberg-MacLane space is equivalent to assigning group elements in $G$ to the $(p+1)$-dimensional simplices of the spacetime manifold triangulation. These assignments automatically satisfy the flat connection conditions. This approach translates the physical problem into a homotopy problem, making it useful for classifications, such as understanding anomalies in higher-form symmetry.
    
    \end{enumerate}

\section{Anomalies and SPT}

In this section, we will review 't Hooft anomalies within the context of our recent comprehensive perspectives on symmetries and demonstrate their relationship with SPT phases~\cite{chen2011two,chen2012chiral,elsenayak2014,kapustin2014anomalous,kapustin2014anomalies,kapustin2014symmetry,thorngren2015higher,wang2015bosonic,chenjie2016,bultinck2018global,wan2018higher,wang2018symmetric,TI4,TI5}. In section~\ref{sec:anomaly-SPT}, our discussion will center around the relationship between 't Hooft anomaly and SPT phases in one dimension higher. For the sake of simplicity, our primary focus will be on anomalies in 1D and 2D. We will illustrate this with a well-known example in the condensed matter community: the anomalous spin-1/2 at the boundary of the celebrated Haldane chain. Section~\ref{sec:gen-SPT} is designed for advanced readers and offers an in-depth exploration of the general classifications of 't Hooft anomalies and SPT phases. This section delves into complex topics such as higher-dimensional systems, higher-form symmetries, fermionic systems, and cobordism. While this section provides valuable insights, beginners may choose to skip this section and return to it later, as it addresses more complicated concepts beyond the introductory level.

\subsection{'t Hooft Anomaly on the Boundary of SPT}
\label{sec:anomaly-SPT}

In the context of a theory with a matter field $\phi$ described by the action $S[\phi]$, it is essential to define what we mean by a \emph{global} symmetry, which can be discrete and not necessarily infinitesimally generated. We begin by specifying the symmetry transformation of the matter field, denoted as $\phi\mapsto \phi^g$ for each $g\in G$, where $G$ represents the symmetry group. The requirement for the quantum system of matter field $\phi$ to possess a global symmetry $G$ is that under such transformations, the following conditions are met:
\begin{align}\label{gaugetransf}
S[\phi] = S[\phi^g] \quad \text{and} \quad \mathcal{D}\phi = \mathcal{D}\phi^g, \quad \forall g\in G.
\end{align}
In this case, the theory defined by the partition function $Z = \int \mathcal{D}\phi \exp(iS[\phi])$ is said to exhibit a global symmetry characterized by the group $G$.

Now, let's couple a background gauge field, denoted as $A$, into the theory and explore its implications. In this context, a background gauge field means that we perform the integration only over the matter field $\phi$, while keeping the gauge field $A$ fixed:
\be Z[A] = \int\mathcal{D}\phi e^{iS[A,\phi]} \ . \ee
A 't Hooft anomaly arises when $Z[A]$ is not equal to $Z[A^g]$, indicating that $Z[A]$ does not remain invariant under \emph{local} gauge transformations.
Indeed, because $Z[A]$ lacks gauge invariance, integrating out the gauge field $A$ becomes problematic. This issue arises because we are effectively integrating over the gauge orbit of the gauge field $A$. To account for this, we introduce the concept of an obstruction to gauging $A$ within $Z[A]$. This obstruction is described by the equation:
\be
Z[A^g] = Z[A]\cdot Z_{\text{anomaly}} [A,g],
\ee
where $g$ represents a \emph{local} gauge transformation, rather than a global one in Eq.~(\ref{gaugetransf}). The phase factor $Z_{\text{anomaly}}[A,g] \in U(1)$ characterizes the 't Hooft anomaly of the theory.

From the three perspectives on symmetries in a physical system discussed in Sec.~\ref{sec:three}, 't Hooft anomalies reveal the following implications:
\begin{enumerate}
\item \textbf{TDN perspective}. In this view, the $G$-symmetry defects lack the topological properties. Notably, when we alter the network's topology through local operations like the reconnection of topological defects in the spacetime manifold, additional phase factors may emerge. These additional phase factors are indicative of the symmetry's anomaly.
\item \textbf{Flat connection perspective}. Here, the local transformations of the TDN can be translated into gauge field gauge transformations. Consequently, the non-invariance of the partition function under local TDN changes translates into distinct partition functions for equivalent gauge field connections. In other words, $Z[A^g]\neq Z[A]$.
\item \textbf{Classifying space perspective}. In this perspective, gauge transformations are related to the homotopy of the map $\phi: M^d\rightarrow B^{p+1}G$. Therefore, the partition function $Z: [M^d, B^{p+1}G]\rightarrow \mathbb{C}$ with anomaly is not a well-defined function on $[\phi]$, which represents the homotopy classes of maps from $M^d$ to $B^{p+1}G$. Instead, it depends on the specific choice of the map $\phi$ in the homotopy class $[\phi]$.
\end{enumerate}

\subsubsection{0+1D 't Hooft Anomaly and 1+1D SPT}
\label{subsubsec:0+1D}

Consider a 0+1D system featuring a 0-form global symmetry denoted as $G$. In this one-dimensional spacetime, the effect of an element $g\in G$ is illustrated by the insertion of a point operator $U_g$, which acts on the system's Hilbert space. Notably, the precise location of this insertion of $U_g$ holds no significance due to the global symmetry's nature. Furthermore, the sequential action of $U_h$ and $U_g$ is entirely equivalent to the action of a single operator $U_{gh}$. This equivalence can be conceptually understood as a local move, altering the topology of the topology of TDN. Mathematically, this relationship is expressed as
\be
U_gU_h=U_{gh}.
\ee
This equation signifies that the action of $U_g$ essentially forms a linear representation of the group $G$ within the Hilbert space of the 0+1D system.

On the contrary, if we relax the assumption that the TDN remains invariant under local moves, we introduce a modified equation
\be\label{UgUhUgh}
U_gU_h = \omega(g,h) U_{gh}.
\ee
Here, $\omega: G\times G \rightarrow U(1)$ introduces an additional phase factor. It's essential to note that this phase factor, denoted as $\omega$, is not arbitrary. The $G$-action on the Hilbert space must adhere to associativity, giving rise to the requirement
\be\ba
&\quad (U_g\cdot U_h)\cdot U_k = U_g\cdot (U_h\cdot U_k)\cr
&\Rightarrow \omega(g,h)\omega(gh,k) U_{ghk} = \omega(g,hk)\omega(h,k) U_{ghk}\cr
&\Rightarrow\omega(g,h)\omega(gh,k)  = \omega(g,hk)\omega(h,k). \cr
\ea\ee
This consistency condition can be interpreted as a 2-cocycle condition within group cohomology, precisely:
\be (\dd_2 \omega) (g,h,k) := \frac{\omega(h,k)\omega(g,hk)}{\omega(gh,k)\omega(g,h)} = 1. \ee
For a more comprehensive exploration of group cohomology, please refer to Appendix~\ref{App:GrpChmlg}.

The definition of the symmetry action $U_g$ acting on the Hilbert space of the system can be modified through the introduction of a phase factor $\lambda(g)$:
\be
U_g \mapsto U_g' = \lambda(g) U_g.
\ee
This modification has the effect of transforming the 2-cocycle $\omega$ into a new function
\be
\omega(g,h)\mapsto \omega'(g,h )= \omega(g,h)\lambda(h)   \lambda(g)\lambda(gh)^{-1} =(\omega\cdot \dd_1\lambda)(g,h) \ . 
\ee
Notice that the elements in ${\text{im}(\dd_1)}$ can be interpreted as a global $U(1)$ phase, which is a choice we have when describing the quantum system but has no physical significance. Therefore, these should be quotiented out in our descriptions. Therefore,the phase factor $\omega$ are elements of the second cohomology group of $G$, defined as:
\be H^2_{\rm grp}(G,U(1))  =  \frac{{\rm ker}(\dd_2)} {{\rm im}(\dd_1)} \ . \ee
Specifying a phase factor $\omega \in H^2_{\text{grp}}(G, U(1))$ is tantamount to defining a projective representation of the group $G$.

The analysis we have discussed reveals an intriguing connection between 't Hooft anomalies and the cohomology group $H^2_{\rm grp}(G,U(1))$ in the context of 0+1D quantum field theories. This cohomology group is instrumental in classifying the projective representations of the group $G$. Interestingly, the cohomology group $H^2_{\rm grp}(G,U(1))$ also has another crucial role in physics: it serves as a classification scheme for 1+1D SPT states~\cite{SPT1D1,SPT1D2,SPT1D3,SPT1D4}. SPT phases are gapped states of matter that feature non-trivial protection due to symmetries. Remarkably, a direct correspondence exists between 0+1D quantum field theories exhibiting 't Hooft anomalies characterized by $\omega \in H^2_{\rm grp}(G,U(1))$ and 1+1D $G$-SPT phases that also bear the same $\omega$. This correspondence suggests that a 0+1D quantum field theory with a 't Hooft anomaly can be viewed as the boundary theory of a 1+1D SPT phase. Consequently, this means that a non-trivial SPT phase cannot possess a boundary that is both gapped and symmetric without topological orders, as a boundary should inherently exhibit a non-trivial 't Hooft anomaly.


\begin{center}
\begin{tikzpicture}
    \draw[ultra thick, cyan] (0,0) ellipse (0.5 and 1) (-0.5,0)node[left]{0+1D QFT with $G^{(0)}$-anomaly} ;
    \draw (0,1) .. controls (4,1.2) and (4,-1.2) .. (0,-1) (2,1)node[above]{1+1D SPT with $G^{(0)}$ symmetry};
\end{tikzpicture}
\end{center}

To gain insights into the relationship between 't Hooft anomalies in 0+1D and SPT phases in 1+1D, one can commence by constructing the partition function of a 1+1D SPT system on a lattice explicitly. The step-by-step procedure for defining this partition function is outlined as follows:
\begin{enumerate}
    \item Choose a triangulation for the 1+1D closed spacetime manifold $M$.
    \item Label each edge in the triangulation with a group element from $G$. This should satisfy the flat connection conditions within each triangle, represented as:
    \be\label{eq:tri}
        \begin{tikzpicture}
            \draw[thick,->] (0,0)--(0,1);
            \draw[thick] (0,1)--(0,2) (0,1)node[left]{$gh$};
            \draw[thick,->] (0,0)--(0.5,0.5)  node[right]{$g$};
            \draw[thick] (0.5,0.5)--(1,1);
            \draw[thick,->] (1,1)--(0.5,1.5) node[right]{$h$};
            \draw[thick] (0.5,1.5)--(0,2);
        \end{tikzpicture}
    \ee
    This edge labeling essentially offers a discrete representation of a flat connection $A: M \to BG$, where each distinct path in $M$ corresponds to a group element, obtained by multiplying the respective group elements along the links.
    \item Each basic triangle (or simplex in general dimensions) contributes an element $\omega_2(g, h)^{\pm 1} \in U(1)$ to the partition function. The sign $\pm$ varies with the orientation of the triangle. This function $\omega_2$ can be interpreted as a map $ BG \to B^2 U(1) $, and falls within the second cohomology group $H^2(BG, U(1))$, given the isomorphisms 
    \be [BG, B^2U(1)]\cong H^2(BG,U(1))\cong H_\mathrm{grp}^2(G,U(1)).  \ee
    \item The partition function for any flat connection configuration $A$ is expressed as the product of $\omega_2$ over all triangles, accounting for their orientations~\cite{chen2013symmetry}:
    \be Z[A] =  \exp{i \int_M L} = \prod_{\Delta} e^{iL(\Delta)} = \prod_{\Delta} \omega_2 (A(\Delta)) = \prod_{\Delta}A^* (\omega_2)(\Delta).
    \ee
\end{enumerate}
This construction serves as a universal way for defining the partition function of a 1+1D SPT system. It employs edge labeling within a triangulation and incorporates contributions from individual triangles, thereby establishing a compelling correspondence between the partition function and the cohomology class $ \omega_2 \in H^2_\mathrm{grp}(G,U(1))$.

The intricate relationship between 0+1D 't Hooft anomalies and 1+1D SPT phases becomes particularly transparent when we consider an open 1+1D spacetime manifold. Specifically, let's focus on a triangle near the boundary, as exemplified in Eq.~(ref{eq:tri}). The removal of such a triangle can be understood as a local boundary operation, altering the number of boundary edges by one unit. In this context, each group element label on an edge corresponds to the insertion of the operator $U_g$ in the dual TDN representation. When a triangle near the boundary is removed, the symmetry action at the boundary transitions from $U_{gh}$ to $U_gU_h$. This change induces a discrepancy in the phase factor, denoted as $\omega(g,h)$ in Eq.~(\ref{UgUhUgh}), which serves as a signature of the 't Hooft anomaly. Remarkably, this phase factor precisely counterbalances the contribution that the removed triangle made to the 1+1D bulk SPT partition function. This unveils a generic anomaly inflow mechanism: the combined system, which includes both the boundary and the bulk, remains physical consistency and gauge invariance. This insight confirms that a non-trivial SPT state cannot have a trivially gapped symmetric boundary, as the boundary theory manifests a non-trivial 't Hooft anomaly.

\subsubsection{Example: Anomalous spin-1/2 on the Boundary of a Haldane Chain}

To delve into the intricate phenomenon of bulk-boundary correspondence, let us examine a seminal example: the relationship between the 1+1D Haldane chain~\cite{haldane1983nonlinear,haldane1983continuum} and its anomalous 0+1D spin-$\frac{1}{2}$ boundary. This example illustrates how 't Hooft anomalies are related to the SPT phase in one higher dimension.

The Haldane chain is a 1+1D spin-$s$ chain with a Heisenberg interaction given by
\be\label{eq:Haldane}  H = \sum_{\expval{ij}} \vb{S}_i\cdot \vb{S}_j. \ee
In the low-energy limit, Haldane demonstrated that this model can be effectively described by a continuous quantum field theory characterized by the partition function 
\be\ba 
Z &= \int \mc{D} \vb{n} \exp(i\int_{M^2} (\ptl_\mu \vb{n} )^2 \dd^2 x + i  \theta \mathcal{N}[\vb{n}]  ),
\ea\ee
where $\mathcal{N}$ represents the winding number or mapping degree of the map of the field $\mathbf{n}: M^2 \rightarrow S^2$. The parameter $\theta=2\pi s$ in the second term is related to the spin $s$ of the chain.

For integer spins $s$, the second term $e^{i\theta \mathcal N}$ simplifies to $e^{2\pi i} = 1$ when integrated over closed manifolds $M^2$ as $\mathcal{N}$ is an integer. Consequently, with only the first non-linear sigma model, the theory will flow to a gapped symmetric disorder state under renormalization group flow. In contrast, for half-odd-integer spins $s$, this term can take values of $\pm 1$ depending on whether $\mathcal{N}$ is even or odd, leading to destructive interference in the path integral. This is believed to leads to a gapless state. The gapped/gapless dependence of $s$ for the antiferromagnetic 1+1D Heisenberg model is known as the Haldane conjecture~\cite{haldane1983nonlinear,haldane1983continuum}.

The non-trivial edge state of the spin-1 Haldane chain can be understood through the partition function of an open 1+1D chain. Along the 0+1D boundary, the $\theta$ term becomes a Wess-Zumino-Witten model of a spin-1/2 system~\cite{ng1994edge}. The key distinction between the spin-1 degrees of freedom (d.o.f.) in the 1+1D bulk and the spin-1/2 d.o.f. on the boundary lies in their symmetry groups. The spin-1 d.o.f. forms a linear representation of the spin rotational symmetry group $SO(3)$, while the spin-1/2 d.o.f. constitutes a linear representation of $SU(2)$ and a projective representation of $SO(3)$. From this perspective, we recognize that the Haldane chain is an SPT state protected by the $SO(3)$ spin rotational symmetry~\cite{gu2009tensor}. Its boundary corresponds to a spin-1/2 system, which exhibits fractionalized charges and cannot be gapped out symmetrically. Consequently, the $SO(3)$ 't Hooft anomaly of 0+1D spin-1/2 is canceled by the 1+1D SPT bulk protected by the same $SO(3)$ group. This intriguing relationship between the 't Hooft anomaly at the boundary and the SPT phase in the bulk elucidates the role of symmetry and topology in the unique properties of the Haldane chain.

A more explicit way to directly observe that the boundary of the spin-1 Haldane chain is a spin-1/2 with a projective representation of $SO(3)$ is through the AKLT model~\cite{affleck2004rigorous}. This model can be visualized as follows:
\begin{center}
    \begin{tikzpicture}
        \filldraw[cyan] (0,0) circle (0.15);
        \filldraw[shift = {(1,0)},cyan] (0,0) circle (0.15);
        \filldraw[shift = {(2,0)},cyan] (0,0) circle (0.15);
        \filldraw[shift = {(3,0)},cyan] (0,0) circle (0.15);
        \filldraw[shift = {(4,0)},cyan] (0,0) circle (0.15);
        \filldraw[shift = {(5,0)},cyan] (0,0) circle (0.15);
        \filldraw[shift = {(6,0)},cyan] (0,0) circle (0.15);
        \filldraw[shift = {(7,0)},cyan] (0,0) circle (0.15);
        \draw[thick] (0.5,0) ellipse (0.8 and 0.4 );
        \draw[shift = {(2,0)}, thick] (0.5,0) ellipse (0.8 and 0.4 );
        \draw[shift = {(4,0)}, thick] (0.5,0) ellipse (0.8 and 0.4 );
        \draw[shift = {(6,0)}, thick] (0.5,0) ellipse (0.8 and 0.4 );
        \draw[thick,blue] (1,0)--(2,0);
        \draw[shift = {(2,0)},thick,blue] (1,0)--(2,0);
        \draw[shift = {(4,0)},thick,blue] (1,0)--(2,0);
        
    \end{tikzpicture}
\end{center}
Here, the blue dots represent spin-1/2 particles. The ellipses enclosing two spin-1/2 particles represent the projection onto the spin-1 triplet states on each site. A blue line connecting two spin-1/2 particles signifies a singlet state between them. It can be shown that the Hamiltonian~\cite{affleck2004rigorous}
\be
H_\mathrm{AKLT} = \sum_{\expval{ij}} \left[ \vb{S}_i\cdot \vb{S}_j + \frac{1}{3} (\vb{S}_i\cdot \vb{S}_j)^2 \right]
\ee
has exactly the ground state depicted in the above picture. Numerical calculations confirm that the AKLT model and the antiferromagnetic Heisenberg model in Eq.~(\ref{eq:Haldane}) belong to the same phase without phase transitions. On a closed 1D spatial ring, the ground state of the AKLT model is unique. On an open chain, however, the ground state is four-fold degenerate due to the presence of dangling spin-1/2 particles at the two ends of the chain. Therefore, we can explicitly see that the boundary of the spin-1 Haldane chain is a spin-1/2 system with a $SO(3)$ symmetry anomaly manifested as a projective representation rather than a linear representation.

From the perspective of SPT phases, the 1+1D Haldane phase is protected by the symmetry group $SO(3)$ or its subgroup $\mathbb{Z}_2 \times \mathbb{Z}_2$. Both the 1+1D SPT phase and the 0+1D 't Hooft anomaly can be classified using the projective representation of the symmetry group. Specifically, we can examine the second group cohomology of $SO(3)$ and $\mathbb{Z}_2 \times \mathbb{Z}_2$ with $U(1)$ coefficient:
\be  H_\mathrm{grp}^2 (SO(3),U(1)) = H^2_\mathrm{grp} (\mathbb Z_2\times\mathbb Z_2,U(1)) = \bbZ_2. \ee
The non-trivial element in $H_{\text{grp}}^2 (SO(3),U(1)) = H^2(BSO(3),U(1)) $ is the second Stiefel-Whitney class $\frac{1}{2}w_2$. Therefore, the action of the 1+1D $SO(3)$-SPT is given by
\be
S=\frac{1}{2}\int \dd^2 x\ w_2.
\ee
Additionally, the presence of a boundary spin-$\frac{1}{2}$ also corresponds precisely to the non-trivial projective representation of $SO(3)$ or $\mathbb{Z}_2 \times \mathbb{Z}_2$. Its 't Hooft anomaly is also captured by the second Stiefel-Whitney class.

This example effectively illustrates the concept of bulk-boundary correspondence where a quantum field theory residing on the boundary, exhibiting a 't Hooft anomaly, corresponds to the boundary of the corresponding SPT state.

\subsubsection{Example: 1+1D Anomaly and 2+1D SPT}

Now, as we lift the spacetime dimension by one, the local move for the $G$ connection in 1+1D is associated with the 2D Pachner move. This move introduces a phase factor $\omega_3$ that depends on three group elements:
\begin{center}
    \begin{tikzpicture}
        \draw[thick , ->] (0,0) -- (0.5,0) node[below] {$ghk$};
        \draw[shift={(4,0)}, thick , ->] (0,0) -- (0.5,0) node[below] {$ghk$};
        \draw[thick ] (0.5,0) -- (1,0);
        \draw[shift={(4,0)}, ,thick ] (0.5,0) -- (1,0);
        \draw[thick,->] (0,0)--(0,0.5) node[left]{$g$};
        \draw[shift={(4,0)},thick,->] (0,0)--(0,0.5) node[left]{$g$};
        \draw[thick ] (0,0.5) -- (0,1);
        \draw[shift={(4,0)},thick ] (0,0.5) -- (0,1);
        \draw[thick, ->] (0,1) -- (0.5,1) node[above] {$h$};
        \draw[shift={(4,0)},thick, ->] (0,1) -- (0.5,1) node[above] {$h$};
        \draw[thick] (0.5,1)--(1,1) ;
        \draw[shift={(4,0)},thick] (0.5,1)--(1,1) ;
        \draw[thick , ->] (1,1)--(1,0.5) node[right]{$k$} node[right]{$\  \ \ = \omega_3(g,h,k)$};
        \draw[shift={(4,0)},thick , ->] (1,1)--(1,0.5) node[right]{$k$};
        \draw[thick] (1,0.5)--(1,0);
        \draw[thick,shift={(4,0)}] (1,0.5)--(1,0);
        
        \draw[thick, ->, blue] (0,0)--(0.5,0.5);
        \draw[thick, blue] (0.5,0.5)--(1,1) node[above]{$gh$};
        \draw[shift={(4,0)},thick,-> ,blue] (0,1)--(0.5,0.5);
        \draw[thick, blue] (0.5,0.5)--(1,1) node[above]{$gh$};
        \draw[shift={(4,0)},thick,blue] (0.5,0.5)--(1,0) node[right]{$hk$} ;
    \end{tikzpicture}.
\end{center}
From the 2+1D SPT bulk perspective, the phase factor $\omega_3(g,h,k)$ is assigned to a tetrahedron of a flat connection $A: M^3 \to BG$ as
\begin{center}
    \begin{tikzpicture}
        \draw[->,thick] (0,0)--(0.5,-0.5) node[left]{$g$} ;
        \draw[thick] (0,0)--(1,-1)  ;
        \draw[->,thick] (1,-1)--(1,0.3) node[left]{$h$} ;
        \draw[thick] (1,0)--(1,1) ;
        \draw[->, thick] (1,1)--(1.5,0.5) node[right]{$k$};
        \draw[thick] (1.5,0.5)--(2,0) ;
        \draw[->, ultra thick,cyan,densely dotted] (1,-1)--(1.5,-0.5) node[right]{$hk$} ;
        \draw[ultra thick,cyan, densely dotted] (1.5,-0.5)--(2,0);
        \draw[ultra thick, orange,densely dotted] (0,0)--(2,0) node[right]{$ghk$};
        \draw[ultra thick, orange,densely dotted,->] (0,0)--(1,0);
        \draw[ultra thick, magenta,densely dotted,->] (0,0)--(0.5,0.5) node[left]{$gh \ $};
        \draw[ultra thick, magenta,densely dotted] (0,0)--(1,1);
        \node (a) at (5,0) {$ \mapsto  \ \ \omega_3(g,h,k)$.};
    \end{tikzpicture}
\end{center}
Here, $M^3$ represents a three-dimensional spacetime manifold, and $BG$ denotes the classifying space of the symmetry group $G$.
The partition function of the 2+1D SPT state can be represented as the product of phase factors associated with each tetrahedron in the spacetime. By accounting for the contributions from all these phase factors, we observe that the 't Hooft anomaly originally present in the 1+1D theory is cancelled by the influence of the 2+1D bulk SPT. This cancellation mechanism ensures the consistency of both symmetry and topological properties in the combined system.

There exists a consistency condition governing the phase factors $\omega_3$. Much like the 0+1D anomalies where $\omega_2$ needs to be a 2-cocycle, the phase factor $\omega_3$ must adhere to a similar requirement, which can be visually represented through the following commutative diagram:
\begin{center}
\begin{tikzcd}
&  & (g(hk))l \arrow[rr, "{\omega_3(g,hk,l)}"] & & g((hk)l) \arrow[rrd, "{\omega_3(h,k,l)}"] &  & \\
((gh)k)l \arrow[rrrd, "{\omega_3(gh,k,l)}"] \arrow[rru, "{\omega_3(g,h,k)}"] &  & & &   &  & g(h(kl)) \\
&  & & (gh)(kl) \arrow[rrru, "{\omega_3(g,h,kl)}"] & &  &         
\end{tikzcd}
\end{center}
The condition above is precisely the 3-cocycle condition for $\omega_3$:
\be
(\dd \omega_3)(g,h,k,l) = \frac{\omega_3(h,k,l) \omega_3(g,hk,l) \omega_3(g,h,k)}{\omega_3(gh,k,l) \omega_3(g,h,kl)} = 1.
\ee
The 3-cocycle condition, as a special case of the pentagon equation, is a fundamental requirement for maintaining the consistency and coherence of the phase factors $\omega_3$ across different tetrahedra. By satisfying this condition, the phase factors ensure a well-defined and topologically invariant SPT state. This condition is crucial for establishing the robustness and stability of the SPT phase, as it guarantees that the physical properties and symmetries of the system are preserved throughout different triangulations or partitions of the spacetime.

Similar to the 0+1D anomaly and 1+1D SPT, the 2+1D SPT and 1+1D anomaly also exhibit a ``gauge transformation'' for the cocycle $\omega_3$. This transformation arises from the redefinition of the symmetry action through local counterterms. These counterterms introduce a subgroup referred to as 3-coboundaries, which needs to be factored out in the classification of both the anomaly and SPT phases. Consequently, the classification is determined by the third cohomology group $H_{\mathrm{grp}}^3(G,U(1))$~\cite{chen2012symmetry,chen2013symmetry}.


\subsection{General Classifications of 't Hooft Anomalies and SPT}
\label{sec:gen-SPT}

Previously, we discussed the 't Hooft anomaly of a physical system with 0-form symmetry $G$. The 't Hooft anomaly is captured by a phase factor associated with the gauge transformation of gauge connections or the local move of the TDN of the group. We also demonstrated that this phase factor corresponds to the construction data of the $G$-SPT partition function in one higher dimension. In the following, we will generalize previous discussions to higher dimensions, higher-form symmetries, fermionic systems, and cobordism classifications. This content is more advanced, and beginners might find it beneficial to revisit this section later.

\subsubsection{Higher Dimensions}

This construction can be easily generalized to 't Hooft anomalies in $d$ spacetime dimensions. In this case, we begin by triangulating the $d$-dimensional spacetime manifold. A TDN of $G$ connections is then constructed by assigning a group element $g_{ij}$ to each link $\langle ij \rangle$ of the triangulation. Under local retriangulations, such as Pachner moves, the connections should be equivalent. However, the system undergoes a change by a $U(1)$ phase factor $\omega_{d+1}$. This phase factor characterizes the 't Hooft anomaly of the $G$ symmetry in the system.

Furthermore, the Pachner move of triangulation in $d$ dimensions can be glued to a $(d+1)$-dimensional simplex. This $(d+1)$-dimensional simplex can be understood as the building block for the $G$-SPT partition function in one higher dimension. As a result, a direct relationship emerges between the $d$-dimensional 't Hooft anomaly and the $(d+1)$-dimensional $G$-SPT phases in one higher dimension. The consistency equation, similar to associativity in 0+1D, is captured by the cocycle condition $\dd \omega_{d+1}=1$ for the phase factor $\omega_{d+1}$. Additionally, gauge transformations can be applied to these phase factors, which is equivalent to modding out all the $(d+1)$-coboundaries. 
Therefore, the 't Hooft anomaly in $d$ dimensions and the $G$-SPT phases in $d+1$ dimensions are both classified (in low dimensions) by the $(d+1)$-dimensional group cohomology~\cite{chen2012symmetry,chen2013symmetry}
\be\label{eq:HSPT}
H_{\mathrm{grp}}^{d+1}(G,U(1)) \cong H^{d+1}(BG,U(1)) \cong [BG,B^{d+1}U(1)].
\ee
Here, $H_{\mathrm{grp}}^{d+1}$ represents the group cohomology, $H^{d+1}$ denotes the usual (singular) cohomology of topological spaces, and $[X,Y]$ represents the homotopy classes of continuous maps between the topological spaces $X$ and $Y$.

\subsubsection{Higher-Form Symmetries}

A direct generalization applies to $p$-form symmetries in spacetime dimensions $d$. In this scenario, we label the $(p+1)$-dimensional simplices of the triangulated spacetime using elements from the Abelian group $G$. This labeling is equivalent to assigning a mapping $A: M^d \rightarrow B^{p+1}G$, where $A$ can be regarded as a $(p+1)$-form gauge field. When the system undergoes retriangulation or a Pachner move, representing a gauge transformation between equivalent connections, the change is characterized by a phase factor $\omega_{d+1}$.

The phase factor $\omega_{d+1}$, associated with the retriangulation, can be understood as the phase factor of a $(d+1)$-dimensional simplex in the construction of the $p$-form $G$-SPT partition function. The consistency condition implies the cocycle condition for $\omega_{d+1}$: $\mathrm{d} \omega_{d+1} = 1$. By excluding the counterterm redefinitions or coboundaries, both the $p$-form 't Hooft anomaly and $p$-form SPT phases in one higher dimension are classified by
\be
H^{d+1}(B^{p+1}G,U(1)) \cong [B^{p+1}G,B^{d+1} U(1)].
\ee
This equation represents a straightforward generalization of Eq.~(\ref{eq:HSPT}) from 0-form symmetry to $p$-form symmetry~\cite{thorngren2015higher,kapustin2017higher}.

\subsubsection{Fermionic Systems}

The relationship between 't Hooft anomalies in $d$-dimensional spacetime and SPT phases in $(d+1)$-dimensional spacetime is not limited to bosonic systems; it extends to fermionic systems as well.

In fermionic systems, the symmetry structure is more complicated compared to bosonic systems. Unlike bosonic systems, fermionic systems always possess an unbreakable symmetry $\mathbb{Z}_2^f$ generated by fermion parity $(-1)^F$. A generic fermionic symmetry $G_f$ is an extension of fermion parity and a bosonic symmetry $G_b$, and it can be classified by the short exact sequence~\cite{wang2020construction}
\be
1\rightarrow \mathbb Z_2^f \rightarrow G_f \rightarrow G_b \rightarrow 1,
\ee
which is governed by a 2-cocycle $\lambda_2\in H^2_\mathrm{grp}(G_b,\mathbb Z_2^f)$.

Given the relationship between 't Hooft anomalies and SPT phases, in order to understand the 't Hooft anomaly in a fermionic system, we can explore the fermionic SPT phases in one higher dimension. Similar to the classification of bosonic systems using group cohomology, there exists a classification of fermionic SPT phases utilizing general group super-cohomology as a generalized cohomology theory~\cite{gu2014symmetry,wang2018towards,Xiong_2018,gaiotto2019symmetry,wang2020construction}.

In this classification scheme, multiple layers of cohomology groups come into play. For instance, a bosonic SPT phase can be regarded as a fermionic SPT phase, although it might become trivial in a fermionic system. Thus, the top layer corresponds to the bosonic layer represented by $\omega_{d+1} \in H^{d+1}(G,U(1))$. Furthermore, there exists a distinctive super-cohomology layer represented by $n_d \in H^d(G,\mathbb{Z}_2)$. This layer essentially involves the decoration of complex fermions into the interaction lines of the TDN in spacetime. The relationship between $n_d$ and $\omega_{d+1}$ is regulated by consistency condition. In the case of 1+1D anomaly or 2+1D fermionic SPT phases, the condition take a special form known as the super-pentagon equation~\cite{gu2014lattice,gu2015classification}. More generally, the consistency condition for fermionic SPT phases is expressed as~\cite{gu2014symmetry,wang2020construction}
\be
\dd \omega_{d+1} = (-1)^{\lambda_2 \smile n_d + Sq^2(n_d)},
\ee
Here, $Sq^2(n_d) = n_d \smile_{d-2} n_d$ denotes the cohomology operation of the Steenrod square~\cite{steenrod1947products}, which is related to the higher cup product. Notably, this equation reveals that $\omega_{d+1}$ is no longer a cocycle but only a cochain.

The third layer, often referred to as the Majorana or Kitaev chain layer denoted as $n_{d-1} \in H^{d-1}_\mathrm{grp}(G_b,\mathbb{Z}_2)$, plays a crucial role. It involves decorating the Kitaev chain onto the interaction surface of the TDN in spacetime, thereby introducing additional twists to the equations governing $n_d$ and $\omega_{d+1}$. While obstruction functions for this layer are well-known in lower dimensions~\cite{wang2018towards,wang2020construction,kapustin2017fermionic,brumfiel2016pontrjagin,brumfiel2018pontrjagin}, they become notably more intricate as the dimension increases.

Furthermore, apart from the necessity of satisfying the consistency equations, we must also account for ``gauge transformations'' of coboundaries stemming from local counterterms. It's noteworthy that fermionic systems exhibit a more intricate web of gauge redundancies. In principle, whenever there's an anomalous SPT state residing on the boundary of another SPT state, this boundary SPT state should be considered trivial~\cite{wang2019anomalous}. This is because there exists a symmetric, fermionic, local unitary transformation that connect this state to a product state.

By carefully accounting for these obstructions and gauge redundancies or trivializations, we can systematically construct and classify fermionic SPT phases and the associated 't Hooft anomalies in systems that are one dimension lower.

One can also extend the study of higher-form symmetry to fermionic systems, delving into the classification of higher-form symmetry-protected fermionic topological phases. This exploration closely relates to our understanding of 't Hooft anomalies of higher-form symmetry in quantum field theories involving fermions in dimensions one lower. Fortunately, we can adapt existing mathematical tools, including supercohomology theory and spin cobordism theory, to handle higher-form symmetries in these fermionic systems. The extension involves replacing $BG_b$ with $B^{p+1}G_b$ to accommodate higher-form gauge fields in all decoration data and obstruction functions. Importantly, the obstruction functions remain unchanged, with the understanding that their arguments now refer to the corresponding simplices labeled by higher-form gauge fields. Thus all the construction and classification schemes remain applicable.

\subsubsection{Cobordism Classification}

Previously, we discussed the correspondence between a $d$-dimensional 't Hooft anomaly and a $(d+1)$-dimensional SPT phase, both of which are classified by the cohomology group of the symmetry. However, this correspondence, while accurate in many cases, encounters limitations in higher dimensions. To address this and provide a more comprehensive framework, we introduce a refined mathematical tool known as bordism.

Let's consider the case of a 1+1D SPT state on the spacetime manifold $S^2$. As we discussed earlier, the partition function for such a theory is obtained by multiplying the phase factors $\omega_2$ associated with each triangle of the triangulation. To illustrate this, we can start by triangulating the 2-sphere $S^2$ as a single tetrahedron:
\begin{center}
    \begin{tikzpicture}
        \draw[->,thick] (0,0)--(0.5,-0.5) node[left]{$g$} ;
        \draw[thick] (0,0)--(1,-1)  ;
        \draw[->,thick] (1,-1)--(1,0.3) node[left]{$h$} ;
        \draw[thick] (1,0)--(1,1) ;
        \draw[->, thick] (1,1)--(1.5,0.5) node[right]{$k$};
        \draw[thick] (1.5,0.5)--(2,0) ;
        \draw[->, thick] (1,-1)--(1.5,-0.5) node[right]{$hk$} ;
        \draw[ thick] (1.5,-0.5)--(2,0);
        \draw[ thick,densely dotted] (0,0)--(2,0) node[right]{$ghk$};
        \draw[ thick,densely dotted,->] (0,0)--(1,0);
        \draw[ thick,->] (0,0)--(0.5,0.5) node[left]{$gh \ $};
        \draw[ thick] (0,0)--(1,1);
    \end{tikzpicture},
\end{center}
In this case, the partition function $Z[S^2]$ is given by the product of the phase factors associated with the tetrahedron. Mathematically, we can express it as:
\be Z[S^2] = \omega(g,h) \omega(gh,k)\omega^{-1} (g,hk) \omega^{-1} (h,k)=1. \ee
The key observation in the last step is that the cocycle condition of $\omega_2$ is satisfied, which ensures that the partition function is equal to 1. This means that the 1+1D SPT state on the 2-sphere is trivial. By applying this procedure recursively, we can generalize the result to the partition function of the boundary $Z[\partial M_3]$, which is also equal to 1. This result holds true not only for 1+1D, but also for higher dimensions.

Considering now the boundary decomposition $\partial M_3 = M_2 \sqcup \overline{M_2'}$, we can utilize two counter-oriented surfaces $M_2$ and $\overline{M_2'}$ to cancel out the partition function. This leads to the equation:
\be\ba  1 = Z[{M_2}]\cdot Z[\overline{M_2'}]  &= Z[M_2]\cdot Z[M_2']^{-1}   \\  
\Rightarrow Z[M_2]  &=  Z[M_2'] \ea\ee
This observation suggests the introduction of a new equivalence class.
\begin{definition}
Two $d$-manifolds $M_d$ and $M_d'$ are said to be \textbf{bordism equivalent} if there exists a $(d+1)$-dimensional manifold $M_{d+1}$ such that its boundary can be decomposed as $\partial M_{d+1} = M_d \sqcup \overline{M_d'}$.
\end{definition}

Two manifolds belonging to the same bordism class will have the same partition function under this topological construction. In addition, we define
\begin{definition}
    The bordism ring of manifolds is the triple $(\Omega_n^{SO},\sqcup,\times)$.    $\Omega_n^{SO}$ represents the set of $n$-dimensional oriented manifolds modulo the bordism equivalence relation. The operation $\sqcup$ denotes the disjoint union, which acts as the ``addition'' operation in the bordism ring. The operation $\times$ corresponds to the Cartesian product and serves as the ``multiplication'' operation in the bordism ring.
\end{definition}

In the construction of invertible topological phases, we not only consider manifolds but also manifolds equipped with flat $G$-connections. In this case, the theory assigns a value from the group $G$ to each edge, subject to the flatness condition. Thus, we have the maps:
\be  A: M_n/{\rm \{ bordism \ equivalence\} } \to BG.   \ee
Similarly, we can define the bordism class of manifolds with flat $G$-connections. The corresponding bordism group is denoted as $\Omega_n^{SO}(BG)$, which requires the connection to be consistent in both the bulk $M_{n+1}$ and the boundary.

The partition function $Z$ of invertible topological phases with symmetry $G$ is defined as a map from the bordism group to $U(1)$:
\be\label{eq:cobor}  Z: \Omega_n^{SO}(BG)\rightarrow U(1).   \ee
This partition function allows us to classify invertible topological orders by the isomorphism classes of the above maps. The classification is represented by the cobordism group $\Omega^n_{SO}(BG)$, which is the Pontryagin dual of the bordism group. If we disregard the specific gauge group $G$ and consider the case where $BG$ reduces to a point $pt$, we have $\Omega^n_{SO}(pt) = \Omega^n_{SO}$, as there is only one unique map from any space to a point.

The 't Hooft anomaly and SPT phases in one higher dimension are described by the cobordism~\cite{Kapustin:2014tfa,Kapustin:2014dxa,kapustin2014symmetry,freed2014shortrange,brumfiel2016pontrjagin,campbell2017homotopy,brumfiel2018pontrjagin,Wan:2018bns,Wan:2019soo,yonekura2019cobordism,freed2021reflection}. This new classification takes into account not only changes in the $G$-connection but also allows for changes in the topology of the spacetime manifold itself. It captures the intricate interplay between the symmetry $G$ and the underlying spacetime structures, providing a comprehensive framework for understanding 't Hooft anomaly and SPT phases.

On the other hand, the cohomology classification serves as an approximation that is often easier to calculate. in general, we have a group homomorphism from the group cohomology classification to the cobordism classification. However, this group homomorphism is neither injective nor surjective. the good news is that in low dimensions, this map is a isomorphism, so the two classifications coincide.

\begin{example}
The simplest examples of bordism groups for a point is list as follows.
\begin{table}[htbp]
    \centering
    \begin{tabular}{|c|c|c|c|c|c|c|c|c|c|}
    \hline
        $n $ & $0$ & $1$ & $2$ & $3$ & $4$ & $5$ & $6$ & $7$ & $8$ \\
        \hline
        $\Omega_n^{SO}(pt)  $ & $\bbZ$ & $0$ & $0$ & $0$ & $\bbZ$ & $\bbZ_2$ & $0$ & $0$ & $\bbZ^2$ \\
        \hline
    \end{tabular}
    \caption{List of bordism classification of a point.}
    \label{tab:bordpts}
\end{table}

In this example, the zero dimension is classified by difference of numbers of points with $\pm$ directions (a pair of $+$/$-$ direction points cancels each other); in dimension 4, it is classified by the first Pontryagin class $p_1$; in dimension 5, the product of Stiefel-Whitney classes $w_2w_3$ classifies the manifold.
\end{example}

The general classification of $\Omega^n(BG)$ can be quite intricate. However, there is a practical method known as the Atiyah-Hirzebruch spectral sequence that allows us to calculate $\Omega^n(BG)$ using the knowledge of $\Omega^n(pt)$, which we have discussed earlier for lower dimensions. The Atiyah-Hirzebruch spectral sequence has a physical interpretation as decorating bosonic invertible orders onto the domain walls of the group $G$~\cite{chen2014symmetry, wang2021domain}. This approach provides a useful tool for understanding and classifying symmetries and their associated topological orders.

The above bordism classification of 't Hooft anomalies and SPT phases can be easily extended to higher-form symmetries, at least conceptually. The key modification is to replace the 1-form gauge field $A$ with a $(p+1)$-form gauge field. As a result, the bordism group is now denoted as $\Omega_n(B^{p+1}G)$. In practice, the Atiyah-Hirzebruch spectral sequence can still be employed as a useful tool for calculating these bordism groups and understanding the higher-form symmetries involved. This generalization allows us to explore a wide range of topological phenomena and classify exotic phases of matter with higher-form symmetries.

In the context of fermionic systems, the notion of bordism equivalence extends to manifolds with spin (or other tangential) structures. Interestingly, in lower dimensions, the bordism classification and the general group super-cohomology classification coincide with each other. This correspondence allows us to leverage the domain wall decoration picture provided by the Atiyah-Hirzebruch spectral sequence, offering a more accessible method for calculating 't Hooft anomalies in fermionic systems.




\section{Applications of Higher-Form Symmetry}
\label{sec:applications}

In this section, we will illustrate some applications of higher-form symmetry to string theory and condensed matter physics.

We understand that our readers come from diverse academic backgrounds. Therefore, we encourage readers to engage with the topics that align with their current understanding and interests. Feel free to focus on familiar applications and skip or return to others as needed.

\subsection{Applications in String Theory}
\label{sec:string}

In this section we discuss the interplay between higher-form symmetry and string theory, which is a popular research direction in the recent years~\cite{DelZotto:2015isa,Bergman:2020ifi,Morrison:2020ool,Albertini:2020mdx,Bah:2020uev,Closset:2020scj,DelZotto:2020esg,Apruzzi:2020zot,Cvetic:2020kuw,BenettiGenolini:2020doj,Cordova:2020tij,DelZotto:2020sop,Gukov:2020btk,Heidenreich:2020pkc,Closset:2020afy,Apruzzi:2021phx,Apruzzi:2021vcu,Hosseini:2021ged,Cvetic:2021sxm,Buican:2021xhs,Braun:2021sex,Cvetic:2021maf,Cvetic:2021vsw,Apruzzi:2021mlh,Closset:2021lwy,Apruzzi:2021nmk,DelZotto:2022fnw,Genolini:2022mpi,Benini:2022hzx,Cvetic:2022imb,DelZotto:2022joo,Apruzzi:2022dlm,Hubner:2022kxr,Heckman:2022xgu,Heckman:2022suy,Damia:2022bcd,GarciaEtxebarria:2022vzq,Apruzzi:2022rei,vanBeest:2022fss,Heckman:2022muc,Antinucci:2022vyk,Grimm:2022xmj,Etheredge:2023ler,Amariti:2023hev,DelZotto:2023ahf,Acharya:2023bth,Cvetic:2023plv,Dierigl:2023jdp,Lawrie:2023tdz,Bah:2023ymy,Apruzzi:2023uma,Cvetic:2023pgm}. The general procedure here is to first geometrically construct QFTs from string theory, and then compute higher-form symmetries using geometric methods.

Roughly speaking, there are two ways to construct QFTs from string theory,
\begin{enumerate}
    \item AdS/CFT scenario. In this case we consider superstring/M-theory on ${\rm AdS}_{d+1} \times  M_{10-d / 9-d}$ ((10-d) for the 11D M-theory, (9-d) for the 10D superstring theory), which is dual to a certain QFT$_d$, often interpreted as the world-volume theory of brane objects.
    \item Geometric engineering scenario. In this case we consider superstring/M-theory on $\bbR^{d-1,1}\times X$ with $X$ being a non-compact (and possibly singular) space s.t.  the gravity sector is decoupled.

    But what do we mean by ``decoupled''? Consider the (oversimplified) model of Einstein-Hilbert action, we start from
    \be \int_{\mb{R}^{d-1,1}\times X}\frac{1}{G_{11}} \sqrt{-g_{11}} R_{11}  \mapsto \int_{\mb{R}^{d-1,1}}\frac{{\rm vol}(X)}{G_{11}} (\sqrt{-g_{d}} R_{d} + \cdots)  \ee
    where the subscript $_{11}/ _{d}$ means the quantities in the 11/d-dimensional theory. Then obviously the effective $d$-dimensional Einstein-Hilbert coupling constant shall have
    \be G_d = \frac{G_{11}}{{\rm vol}(X)} \ee
    as $X$ goes non-compact, ${\rm vol}(X)\to \infty$, the gravity coupling in $d$-dimension goes to $0$, and that is the picture of ``decoupled''\footnote{Note that in the limit of vol$(X)\rightarrow\infty$, there would be additional massless KK modes. Nonetheless in such geometric engineering setups, one only consider the localized modes near the origin (singular point).}.
\end{enumerate}

Now we introduce more specific setting of such theories, and start to observe the higher-form symmetries and 't Hooft anomaly polynomials in these theories.

Consider the IR limit of a 11D M-theory, which turns out to be a 11D supergravity (SUGRA) theory with $C_3$ gauge field (and to that extent, $G_4 = \dd C_3$ the field strength), and M2, M5 branes are the fundamental objects in M-theory.

\subsubsection{Example: AdS$_{d+1}$/QFT$_{d}$}

In this setting, we put the 11-dimensional M-theory on AdS$_{d+1}\times M_{11-d}$, where $M_{11-d}$ is a compact space with positive curvature. Now the general procedure is to expand $C_3/G_4$ along the cohomology elements of $M_{11-d}$, which give rise to (a general $(p+1)$-form) gauge field and field strength of the supergravity theory in AdS$_{d+1}$. Then, observe that with proper boundary conditions, the dynamical gauge fields in the bulk would also correspond to the background gauge fields of $p$-form global symmetries in QFT$_{d}$. The intuition for this part is that the global symmetries on the ``boundary'' corresponds to gauge symmetries in the ``bulk''.

The t' Hooft anomaly polynomial can be computed by the reduction of the 11d topological coupling
\be\int C_3\wedge G_4 \wedge G_4\ee
onto $M_{11-d}$. More generally, one can also compute the more general  SymTFT (symmetry topological field theory) action from the differential cohomology version, which includes discrete symmetries as well~\cite{Apruzzi:2021nmk,vanBeest:2022fss,Kaidi:2022cpf,Kaidi:2023maf,Chen:2023qnv}.

\subsubsection{M-Theory on $\bbR^{d-1,1}\times X_{11-d}$}
\label{sssect:M-th}
Here we require that $X_{11-d}$ has a cone structure, with a singularity at the tip of the cone,
\begin{figure}[htbp]
    \centering
    \begin{tikzpicture}
        \draw[thick] (0,0) -- (-1.92,-3.23);
        \draw[thick] (0,0) --  (1.92,-3.23) (1.5,-1.5) node[above]{$X$};
        \draw[thick] (0,-3.5) ellipse (2 and 1)  (2,-3.5) node[right]{$\partial X$};
        \filldraw[cyan] (0,0) circle (.1) (0,0) node[left]{singularity} ;
    \end{tikzpicture}
\end{figure}
for example, if $X$ is locally an intersection of equations
\be  f_1(x_1,\dots ,x_n) = f_2(x_1,\dots ,x_n) = \dots =f_k(x_1,\dots ,x_n) =0\subset \bbR[x_1,\dots , x_n] \ee
then the ``boundary'' $\ptl X$ is given by the subspace
\be  f_1 = f_2 =\dots =f_k = \sum_{i=1}^n x_i^2 -\varepsilon^2 = 0  \ee
which forms a $S^{n-1}$ with radius $\varepsilon$.

Now suppose that $\ptl X$ is smooth, and there are non-trivial torsional topological cycles $[ \alpha ]\in H_i(\ptl X,\bbZ)$ with torsion degree $l$ s.t.  $l[\alpha] = 0$. We can construct a subcone $\Sigma_{i+1}[\alpha]\subset X$ as the cone over $[\alpha]$, shown in Fig.~\ref{fig:cones}. Then we can construct the charged object under higher-form symmetries as M2 or M5 branes wrapping $\Sigma_{i+1}[\alpha]$\footnote{Precisely speaking, $\Sigma_{i+1}[\alpha]\in \frac{H_{i+1}(X,\ptl X,\mb{Z})}{H_{i+1}(X,\mb{Z})}$, where $H_{i+1}(X,\ptl X,\mb{Z})$ is the relative homology group of $(X,\ptl X)$.}, explicitly speaking,

\begin{enumerate}
    \item M2-brane over $\Sigma_{i+1}[\alpha]$ is charged under $\bbZ_l^{(2-i)}$, the electric $(2-i)$-form symmetry.
    \item M5-brane over $\Sigma_{i+1}[\alpha]$ is charged under $\bbZ_l^{(5-i)}$, the magnetic $(5-i)$-form symmetry.
\end{enumerate}

Note that the two symmetries listed above are generated by non-commutative torsional flux, and they are not mutually local~\cite{DelZotto:2015isa,Albertini:2020mdx}. In a well defined (absolute) theory, one need to choose a polarization, analogous to the case of $SU(N)$ gauge theory. Namely, one can either choose to keep the M2-branes, leading to the electric $(2-i)$-form symmetry, or the M5-branes, which leads to the magnetic $(5-i)$-form symmetry.

The mechanism can be also applied to  IIA/IIB string theory, where the charged objects are D$p$-branes instead of M2/M5-branes.

\begin{figure}[htbp]
    \centering
    \begin{tikzpicture}
        \draw[thick] (0,0) -- (-1.92,-3.23);
        \draw[thick] (0,0) --  (1.92,-3.23) (1.5,-1.5) node[above]{$X$};
        \draw[thick] (0,-3.5) ellipse (2 and 1)  (2,-3.5) node[right]{$\partial X$};
        \draw[thick,orange] (0,-3.5) ellipse (1 and 0.5)  (1,-3.5) node[right]{$\alpha$};
        \draw[thick,orange] (0,0) -- (1,-3.5);
        \draw[thick,orange] (0,0) -- (-1,-3.5) (-0.5,-1.75)node[left]{$\Sigma$} ;
        \filldraw[cyan] (0,0) circle (.1) (0,0) node[left]{singularity} ;
    \end{tikzpicture}
    \caption{}
    \label{fig:cones}
\end{figure}

\subsubsection{5D $\mathcal{N}=1$ SCFT}
In five dimensions, an important feature is that all the (SUSY-)gauge theories are strongly coupled in the UV, for example, the Yang-Mills action,
\be  S_{YM} = \int \dd^5 x \frac{1}{g_{YM}^2} (F_{\mu\nu}F^{\mu\nu}) \ \Rightarrow \ [g_{YM} ] = -\frac{1}{2}  \ee
as energy scale $\to \infty$, the unit of $g_{YM}\to 0$ and the gauge theory becomes infinitely strongly coupled. Under certain conditions, a SUSY gauge theory can be UV-completed into an $\cN=1$ SCFT with 8 supercharges, see for example~\cite{Seiberg:1996bd,Morrison:1996xf,Intriligator:1997pq,Aharony:1997bh,Benini:2009gi,Kim:2012gu,Bergman:2013aca,Zafrir:2014ywa,Hayashi:2015zka,Xie:2017pfl,Ferlito:2017xdq,Hayashi:2018lyv,Jefferson:2018irk,Bhardwaj:2018vuu,Closset:2018bjz,Cabrera:2018jxt,Apruzzi:2018nre,Bhardwaj:2018yhy,Apruzzi:2019vpe,Apruzzi:2019opn,Apruzzi:2019enx,Apruzzi:2019kgb,Bhardwaj:2019fzv,Bhardwaj:2020gyu,Eckhard:2020jyr,Bhardwaj:2020ruf,BenettiGenolini:2020doj,Bhardwaj:2020avz,Closset:2020scj,Closset:2021lwy,Tian:2021cif,Genolini:2022mpi,DelZotto:2022fnw,Collinucci:2022rii,DeMarco:2022dgh,Bourget:2023wlb}.

If in the vector multiplet\footnote{A supermultiplet is a representation of the supersymmetry algebra, which consists of particles with different spins. In the case of a massless vector multiplet, there is the vector field $A_\mu$, a Majorana spinor $\chi$ and a real scalar $\phi$.} $(A_\mu,\chi,\phi)$, the real scalar $\phi$ is given a non-zero vacuum expectation value (VEV, i.e. $\expval{\phi}\ne 0$). This non-zero VEV breaks the original possibly non-Abelian gauge theory/SCFT to a $U(1)^r$ ($r$ is the rank of the original gauge group) Abelian gauge theory with the charged matter being hypermultiplet\footnote{In the hypermultiplet, there is a majorana spinor $\psi$ and two complex scalars $\phi$ and $\Tilde{\phi}$.} $(\psi,\phi,\Tilde{\phi})$.

This admits a M-theory realization, the tool we developed in Sec. \ref{sssect:M-th} comes in handy. A 11D M-theory compactified on Calabi-Yau 3-fold $X$
\footnote{The 3 here denotes the complex dimension, in terms of real dimension is 6. Being Calabi-Yau means that it has a trivial canonical bundle.} 
with singularity. The process from UV SCFT to Coulomb branch is incorporated to the crepant resolution\footnote{Formally, a crepant resolution is a birational map  which keeps the canonical divisor invariant.}, which is a smoothing of the singularity while preserving the supersymmetry of the field theory.
\begin{center}
    \begin{tikzpicture}[scale = 0.8]
        \draw[thick] (0,0) ellipse (2 and 1);
        \draw[thick] (1.95,0.2) -- (0,3.5)  (-1.95,0.2) -- (0,3.5) ;
        \filldraw[cyan]  (0,3.5) circle (0.12);
        \node at (3,2) { crepant \ resolution};
        \node at (-1.1,1.75) [left] {$X$: the UV complete};
        \node at (7.1,1.75) [right] {$\Tilde{X}$: the Coulomb branch};

        \draw[->] (2,1.7)--(4,1.7);
        \draw[shift = {(6,0)},thick] (0,0) ellipse (2 and 1);
        \draw[shift = {(6,0)},thick] (1.95,0.2)--(0.236,3.1) (-1.95,0.2)--(-0.235,3.1);
        \draw[shift = {(6,0)}, thick] (0,3.1) ellipse (0.236 and 0.117);
    \end{tikzpicture}
\end{center}

In the resolved $\Tilde{X}$, there are compact 4-cycles $S_i \ (i\in {1,\cdots ,r})$, where each corresponds to a gauge group $U(1)_i$ on the Coulomb branch. Using the expansion of M-theory $C_3$ gauge field we have
\be C_3 = \sum_{r=1}^r A_i \wedge \omega_i  \ee
where $\omega_i$ are Poincar\'{e} dual 2-form of $S_i$, and $A_i$ are the $U(1)_i$ gauge fields. There are also compact 2-cycles $C_j$, the M2-branes wrapping around $C_j$ gives BPS particles in $5$D, whose electric charges under $U(1)_i$ is defined as
\be q_{i,j} = {\rm Int}(C_j,S_i)|_{\Tilde{X}}\,,  \ee
the intersection number.

With certain conditions, we can take the geometric limit from $\Tilde{X}$. Suppose that each $s_i$ admits a $S^2$-fibration structure, and take the limit of ${\rm vol}(S^2)\to 0$. The picture is that this is the limit of a non-Abelian gauge theory, since M2-branes over $S^2$ are W-bosons, so by taking the volume to zero, we make the corresponding (w.r.t. index $i$) W-bosons massless. This is consistent with the IR physical description of a 5D non-Abelian gauge theory.

As an detailed example, we consider the case of a toric CY3 singularity $X$, which is a cone over a Sasaki-Einstein fivefold\footnote{A Sasaki-Einstein manifold $X$ is defined as a manifold whose metric cone $\mb{R}_{>0}\times X$ is Kahler and Ricci-flat~\cite{Sparks:2010sn}. For $Y^{N,k}$, it can be described as an $S^1$-bundle over $S^2\times S^2$, whose first Chern class $c_1$ is parametrized by $N$ and $k$.} $Y^{N,k}$. We draw the crepant resolution of $X$ as a toric diagram (for the notations of toric CY3, see for example  \cite{Xie:2017pfl,Eckhard:2020jyr})
\begin{center}
    \begin{tikzpicture}[scale = 0.8]
        \draw[thick] (0,1) -- (0,2) (0,1)-- (1,1) (0,2) -- (1,1) (0,2) -- (-1,-2) (0,1) -- (-1,-2) (0,-1) -- (0,-2) (0,-1) -- (1,1) (0,-2) -- (1,1) (0,-1) -- (-1,-2) (0,-2) -- (-1,-2);
        
        \node at (0,2.5) { $(0,N)$};
        \node at (0,-2.5) { $(0,0)$};
        \node at (-1.7,-2.5) { $(-1,0)$};
        \node at (2,1) { $(1,N-k)$};
        \node at (0,0.3) { $\cdot$};
        \node at (0,0) { $\cdot$};
        \node at (0,-0.3) { $\cdot$};
    \end{tikzpicture}
\end{center}
The compact divisors $S_i$ $(i=1,\dots,N-1)$ are all $S^2$ fibrations over $S^2$. The theory has an IR non-Abelian gauge theory description $SU(N)_k$, where $k$ is the Chern-Simons level.

One can compute the 1-form symmetry using three different methods, which give the identical result. Here we choose the polarization such that we have the maximal 1-form symmetry for the given geometric setup.

\begin{enumerate}
\item{One can compute the Smith normal form of the charge matrix $q_{i,j}$, see \ref{SmithDecomp}. From the Smith normal form, one can read off the 1-form symmetry $\Gamma^{(1)}=\mb{Z}_{\text{gcd}(N,k)}$.}
\item{From the homology of the link fivefold $Y^{N,k}$, we have
\be
\ba
&H_0(Y^{N,k},\mb{Z})=\mb{Z}\ ,\ H_1(Y^{N,k},\mb{Z})=\mb{Z}_{\text{gcd}(N,k)}\ ,\ H_2(Y^{N,k},\mb{Z})=0\ ,\ \cr
&H_3(Y^{N,k},\mb{Z})=\mb{Z}_{\text{gcd}(N,k)}\ ,\ H_4(Y^{N,k},\mb{Z})=0\ ,\ 
H_5(Y^{N,k},\mb{Z})=\mb{Z}\,.
\ea
\ee
}
\item{From field theory arguments, the 1-form center symmetry $\mb{Z}_N$ of $SU(N)$ is broken to 
the subgroup $\mb{Z}_{\text{gcd}(N,k)}$ by the 5d Chern-Simons term.}
\end{enumerate}

\subsection{Applications in Condensed Matter Physics}
\label{sec:appl-CMP}

A natural framework for studying generalized symmetries is provided by 2+1D topological orders with anyonic excitations, which have been extensively studied in the field of condensed matter physics. These systems are mathematically described by unitary modular tensor categories, which provide a powerful formalism for capturing the fusion and braiding structures of the anyons~\cite{KITAEV20062}. The worldlines traced out by the anyons in these systems can be interpreted as the topological defect lines (TDL) of the system. The crucial point of these operators is that, unlike global symmetry actions that act on the entire system, they only act on a subset of the degrees of freedom in the system~\cite{PhysRevB.72.045137,pnas.0803726105,NUSSINOV2009977}. In this sense, it is a generalization of gauge symmetry transformation that acts on each individual point of the system.

When the anyons are Abelian, all TDLs in the system are invertible. In this case, the symmetries generated by the worldlines of the anyons correspond to 1-form symmetries. However, if the anyons are non-Abelian, implying that the TDL is non-invertible, the symmetries associated with the anyon worldlines are non-invertible symmetries.

To illustrate these concepts, we will consider the famous toric code model as an explicit example. We will demonstrate that the 1-form symmetry arising from the anyon worldlines in this 2+1D system is anomalous. Moreover, we will show that this system with an anomalous 1-form symmetry can be realized as the boundary of a 3+1D 1-form SPT state.

\subsubsection{Toric Code Model}

The toric code model~\cite{KITAEV20032} is a well-known example of a 2+1D lattice gauge theory with a $\mathbb{Z}_2$ gauge group. In this model, we consider a 2D space with a square lattice, where each lattice link is associated with a spin-$1/2$ degree of freedom. The Hilbert space of the system is given by the tensor product of complex two-dimensional vector spaces $\mathbb{C}^2$ for each link:
\be
\mathcal{H} = \bigotimes_{\text{each link}} \mathbb{C}^2.
\ee

We define the following two kinds of operators in the toric code model:
\begin{center}
\begin{tikzpicture}
    \draw[thick] (-0.5,0)--(0.5,0) (-1,0)node[left]{$A_s = $};
    \draw[thick] (0,-0.5)--(0,0.5) (1,0) node[right]{$ = \bigotimes_{l, \ptl l\supset {x}} \sigma^{x}_{l}$};
    \filldraw[orange] (0,0) circle (2pt) (-0.2,0)node[above]{$s$};
    \draw[thick] (-0.5,-2) rectangle (0.5,-1) (-1,-1.5)node[left]{$B_p = $} (0,-1.2)node[below]{$p$} (1,-1.5)node[right]{$ = \bigotimes_{ l\in \ptl p} \sigma^{z}_{l}$};
\end{tikzpicture}
\end{center}
with each vertex labeled by $s$ and plaquette by $p$. It is easy to verify the following properties of the operators
\be\ba 
A_s^\dag &= A_s,  \ \ A_s^2 = 1 \Rightarrow \ {\rm Spec}(A_s) = {\pm 1} ,\\ 
B_p^\dag &= B_p, \ \ B_p^2 = 1 \Rightarrow \ {\rm Spec}(B_p) = {\pm 1} ,\\
[A_s,A_{s'}] &= [B_p,B_{p'}]= [A_s,B_p]  = 0 , \ \forall s,s',p,p' .
\ea\ee

The Hamiltonian of the toric code model is a summation of all commuting projector terms associated with $A_s$ and $B_p$ operators, given by:
\be
H = -\left(\sum_{s} A_s + \sum_{p} B_p\right).
\ee
Since all terms in the Hamiltonian commute with each other, we can easily determine the entire spectrum of the model:
\be\ba
{\rm Ground \ states} &:  A_s\ket{0} = B_p\ket{0} = \ket{0} \\
{\rm Excited \ states} &: A_s \ket{\Psi} = -\ket{\Psi}, \text{ or } B_p \ket{\Psi} = -\ket{\Psi},  \ {\rm for \ some \ }s,p
\ea\ee
This simplifies the process of finding the ground state and excited states, making it feasible to analyze the behavior of the toric code model and study its excitations.

There is an intuitive correspondence between the toric code model and electromagnetism, which is described by a $U(1)$ (lattice) gauge theory.
\begin{table}[htbp]
    \centering
    \begin{tabular}{ccc}
        Discrete $\bbZ_2$ gauge theory & $\longrightarrow$  &  $U(1)$-gauge theory  \\
        $\sigma^z_l$ &   $\longrightarrow$  &  Wilson line $\exp(i\int A_{ij})$  \\
        $B_p$ &   $\longrightarrow$  &  $\exp(i\oint A)=  \exp(i\int F) \sim \exp(ia^2 B) $  \\
        $\sigma^x_{l}$ &   $\longrightarrow$  &  $\exp(i\int_l E)$  \\
        $A_s$ &   $\longrightarrow$  & $\exp(i \nabla\cdot E)$   \\
        $H = -(\sum_s A + \sum_p B)$ &   $\longrightarrow$  &  $\vb{B}^2 + (\nabla\cdot \vb{E})^2$  \\
        $B_p = \pm 1$ &  $\longrightarrow$   &  $ \int B =0 \ / \ \ne 0$  \\
        $A_s = \pm 1$ &   $\longrightarrow$  &  $\nabla E = 0\ / \ \ne 0$  \\

    \end{tabular}
    \label{tab:Z2U1}
\end{table}
This correspondence also helps us in constructing excitations in the toric code model. The 1-form symmetry can be understood through the string operators associated with these excitations, as shown in the following figure:
\begin{center}
\begin{tikzpicture}
    \draw[] (-3.3,3.2)--(3.3,3.2);
    \draw[shift = {(0,-0.8)}] (-3.3,3.2)--(3.3,3.2);
    \draw[shift = {(0,-1.6)}] (-3.3,3.2)--(3.3,3.2);
    \draw[shift = {(0,-2.4)}] (-3.3,3.2)--(3.3,3.2);
    \draw[shift = {(0,-3.2)}] (-3.3,3.2)--(3.3,3.2);
    \draw[shift = {(0,-4.0)}](-3.3,3.2)--(3.3,3.2);
    \draw[shift = {(0,-4.8)}](-3.3,3.2)--(3.3,3.2);
    \draw[shift = {(0,-5.6)}](-3.3,3.2)--(3.3,3.2);
    \draw[shift = {(0,-6.4)}](-3.3,3.2)--(3.3,3.2);
    \draw[] (3.3,-3.2)--(3.3,3.2);
    \draw[shift = {(-0.8,0)}] (3.2,-3.3)--(3.2,3.3);
    \draw[shift = {(-1.6,0)}] (3.2,-3.3)--(3.2,3.3);
    \draw[shift = {(-2.4,0)}] (3.2,-3.3)--(3.2,3.3);
    \draw[shift = {(-3.2,0)}] (3.2,-3.3)--(3.2,3.3);
    \draw[shift = {(-4.0,0)}] (3.2,-3.3)--(3.2,3.3);
    \draw[shift = {(-4.8,0)}] (3.2,-3.3)--(3.2,3.3);
    \draw[shift = {(-5.6,0)}] (3.2,-3.3)--(3.2,3.3);
    \draw[shift = {(-6.4,0)}] (3.2,-3.3)--(3.2,3.3);
    \draw[orange, ultra thick]  (1.6,3.2)--(1.6,2.4)--(1.6,1.6)--(0.8,1.6)--(0.8,0.8)--(1.6,0.8)--(2.4,0.8)--(2.4,-1.6) (5,1.5)node[right]{Wilson line operator, $W_E(L) = \prod_{l\in L} \sigma^z_l  $};
    \draw[blue,ultra thick, densely dotted] (0.4,1.2)--(-0.4,1.2)--(-0.4,0.4)--(-0.4,-0.4)--(-0.4,-1.2)--(-0.4,-2.0)--(0.4,-2.0)--(1.2,-2.0)--(2.0,-2.0)--(2.0,-2.8)--(2.8,-2.8);
    \draw[blue,ultra thick] (0,0.8)--(0,1.6)     (5,-1.5)node[right]{'t Hooft line operator, $W_M(L^*) = \prod_{l\perp L^*} \sigma^x_l$};
    \draw[blue,ultra thick] (0,0.8)--(-0.8,0.8);
    \draw[shift = {(0,-0.8)},blue,ultra thick] (0,0.8)--(-0.8,0.8);
    \draw[shift = {(0,-1.6)},blue,ultra thick] (0,0.8)--(-0.8,0.8);
    \draw[shift = {(0,-2.4)},blue,ultra thick] (0,0.8)--(-0.8,0.8);
    \draw[shift = {(2.4,-3.2)},blue,ultra thick] (0,0.8)--(-0.8,0.8);

    \draw[shift = {(0,-3.2)},blue,ultra thick] (0,0.8)--(0,1.6);
    \draw[shift = {(0.8,-3.2)},blue,ultra thick] (0,0.8)--(0,1.6);
    \draw[shift = {(1.6,-3.2)},blue,ultra thick] (0,0.8)--(0,1.6);
    \draw[shift = {(2.4,-4)},blue,ultra thick] (0,0.8)--(0,1.6) ;
    
\end{tikzpicture}
\end{center}
With the above definition of Wilson and 't Hooft operators, one can show the commutators of then are (when there are non-zero intersections)
\be\ba
[ W_E(L), B_p ] &= 0,  \ \ [W_M(L^*),B_p]\ne 0 \text{ iff } p\in\partial L^\ast \\
[W_E(L),A_s]&\ne 0 \text{ iff }s\in \partial L, \ \ [W_M(L^*), A_s] = 0  \\
[W_E(L) ,H] &= [W_M(L^*),H] = 0 \ {\rm if} \ \ptl L = \partial L^\ast=  0\\
[W_E(L), W_M(L^*) ] &= 0  \ , \ {\rm if \ } L \cap L^\ast = 0 
\ea\ee
Now we can state the following fact that \textbf{excitations of toric code system are endpoints of open line operators.} More specifically, we have
\be
 \ba
s &\notin \ptl L \  : \ A_s W_E(L) \ket{0} = W_E(L) A_s \ket{0} =  W_E(L) \ket{0} \\
s &\in \ptl L \ : \ A_s W_E(L) \ket{0} = -W_E(L) A_s \ket{0} =  -W_E(L) \ket{0}
\ea  \ee
So electric charge $e$ are ends of the Wilson line. Similarly, for the magnetic part, we have
\be
 \ba
p &\notin \ptl L^* \  : \ B_p W_M(L) \ket{0} = W_M(L^*) B_p \ket{0} =  W_M(L^*) \ket{0} \\
p &\in \ptl L^* \ : \ B_p W_M(L) \ket{0} = -W_M(L^*) B_p \ket{0} =  -W_M(L^*) \ket{0}
\ea   \ee
so magnetic charges $m$ are excitations at the ends of 't Hooft lines.

In contrast to open lines give excitations, the closed loops are conserved charges classifying degenerate states. We are interested in ground states. Let's assume the global structure of this lattice is a torus $T^2 = S^1\times S^1$, then there are two types of non-trivial loops. Considering $W_E(L_{1,2}) = \pm 1 $ or $W_M(L_{1,2})=\pm 1$, the degeneracy is at least 4.

Through a simple calculation of Pauli matrices acting on Hilbert space of 2D spatial manifold, one can see that
\be\label{EM} W_E(L) W_M(L^*) = (-1)^{{\rm Int}(L,L^*)}   W_M(L^*)W_E(L), \ee
which means the 't Hooft lines are the objects that Wilson lines act upon, and vice versa. Here, $\mathrm{Int}(L,L^\ast)$ is the intersection number of the two lines $L$ and $L^\ast$ on 2D spatial manifold. With that, we are now ready to talk about the statistics of the excitations.

Suppose we are also considering the time dimension, and the lines are now drawn within 2+1 dimensional spacetime. 
Reordering the lines results in a negative sign due to the anticommutation of $\sigma_l^x$ and $\sigma_l^z$ along the same link. We can illustrate this concept visually as
\begin{center}
\begin{tikzpicture}
    \draw[orange , thick ] (0,0)--(0.45,0.45)  (0.55,0.55)--(1,1);
    \draw[blue, thick]  (1,0)--(0,1);
    \draw[thick , -> ] (-0.2,-0.2)--(-0.2,1.2);
    \draw[thick] (-0.2,0)--(-0.2,1) (1.1,0.5)node[ right]{$= \ (-1)$};
    \draw[shift = {(3,0)},thick , -> ] (-0.2,-0.2)--(-0.2,1.2);
    \draw[shift = {(3,0)},orange ,thick] (0,0)--(1,1);
    \draw[shift = {(3,0)},blue, thick] (0,1)--(0.45,0.55) (0.55,0.45)--(1,0);
\end{tikzpicture}
\end{center}
This implies the ``Aharonov-Bohm effect'' of this system
\begin{center}
    \begin{tikzpicture}
        \draw[thick,->] (-0.20,-0.1)--(-0.20,4.1) node[left]{$t$};
        \draw[shift = {(3.7,0)},thick,->] (-0.20,-0.1)--(-0.20,4.1) node[left]{$t$};

        \draw[thick] (-0.20,0)--(-0.20,1) (2,2)node[right]{$= (-1)$};
        \draw[orange,thick] (0,0) arc (180:90:1) (1,1) arc (-90:87:1) (0,4) arc (180:268:1);
        \draw[thick, blue] (1,0)--(1,0.9) (1,1.1)--(1,4);
        \draw[shift = {(3.7,0)},thick, blue] (1,0)--(1,4) ;
        \draw[shift = {(3.7,0)},thick, orange] (0.2,0)--(0.2,4) ;

    \end{tikzpicture}.
\end{center}
and the value of a Hopf link if we close up the end points of the lines:
\begin{center}
    \begin{tikzpicture}
        \draw[thick , orange] (1,0) arc (0:295:1) (1,0) arc (0:-55:1) ;
        \draw[thick , blue] (2,0) arc (0:-235:1) (2,0) arc (0:115:1);
        \draw[thick,->] (-1.2,-1)--(-1.2,1) node[left]{$t$};
        \draw[thick] (-1.2,0)--(-1.2,0.1) (2.0,0) node[right]{$= \ (-1)$} (6,0) node[right]{$= \ (-1)$} ;
        \draw[shift = {(1.6,0)},thick , orange] (3,0) arc (0:360:0.4) ;
        \draw[shift = {(2,0)},thick , blue] (3.5,0) arc (0:360:0.4) ;
      
    \end{tikzpicture}
\end{center}

From the perspective of higher-form symmetries, it is evident that the 2+1D toric code model exhibits a $\mathbb{Z}_{2,E} \times \mathbb{Z}_{2,M}$ 1-form symmetry, characterized as follows:
\be\ba
\bbZ_{2,E}^{(1)}: W_E(L) W_M(L^*) &= (-1)^{{\rm Link}(L,L^*)}  W_M(L^*), \\
\bbZ_{2,M}^{(1)}: W_M(L^\ast) W_E(L) &= (-1)^{{\rm Link}(L^\ast,L)}  W_E(L) .
\ea\ee
These equations should be understood in the context of 2+1D spacetime, rather than 2D space as in Eq.~(\ref{EM}). It's noteworthy that the Wilson line operator and the 't Hooft line operator can be interchanged as the 1-form symmetry operator and the charged operator for these two distinct 1-form symmetries.

This leads us to the question of anomalies of this 1-form symmetry. In fact, the two $\Z_2$ 1-form symmetries have a mixed 't Hooft anomaly. It can be interpreted from three different perspectives:
\begin{enumerate}
    \item The configurations of Wilson loops and 't Hooft loops in 2+1D spacetime correspond to a defect network of the 1-form symmetry. As the two $\Z_2$ 1-form symmetries are distinct, their defect lines should remain independent of each other. However, as discussed earlier, the defect network is not topological in nature:
    \be
    \begin{tikzpicture}
    \label{grf:link}
        \draw[thick , orange] (1,0) arc (0:295:1) (1,0) arc (0:-55:1) ;
        \draw[thick , blue] (2,0) arc (0:-235:1) (2,0) arc (0:115:1);
        \draw[thick,->] (-1.2,-1)--(-1.2,1) node[left]{$t$};
        \draw[thick] (-1.2,0)--(-1.2,0.1) (2.0,0) node[right]{$= \ (-1)$} (6,0) node[right]{$= \ (-1)$} ;
        \draw[shift = {(1.6,0)},thick , orange] (3,0) arc (0:360:0.4) ;
        \draw[shift = {(2,0)},thick , blue] (3.5,0) arc (0:360:0.4) ;
      
    \end{tikzpicture}
\ee
This phenomenon can be understood as a mixed anomaly, akin to the previously mentioned breaking of associativity in the 1+1D case.
    \item Another perspective to consider this anomaly is through the lens of gauge transformations. The action of toric code can be written as a Chern-Simons theory
    \be S = \int_{M_3}  \frac{2}{2\pi} b^{(1)}\wedge \dd a^{(1)}  \ee
    with the Wilson and 't Hooft line operators
    \be\left\{ \ba
    W_E(L) &= \exp(i\int_L a),  \ \ \ \ \bbZ_{2,M}^{(1)} {\rm \ charged\ operator} \\
    W_M(L) &= \exp(i\int_L b), \ \  \ \ \bbZ_{2,E}^{(1)} {\rm \ charged\ operator} 
    \ea \right. \ee
    Now, if we couple this theory to background 2-form gauge fields, denoted as $B_{E,M}^{(2)}$, in an attempt to gauge the $\bbZ_{2,E}^{(1)}\times \bbZ_{2,M}^{(1)}$ 1-form symmetry, we obtain the following action
    \be S = \int_{M_3}  \frac{2}{2\pi} b^{(1)}\wedge \dd a^{(1)}  + \int_{M_3} \frac{1}{2\pi} (a^{(1)}\wedge B_M^{(2)}+ b^{(1)}\wedge B_E^{(2)}). \ee
    Under a $\bbZ_{2,E}^{(1)}$ gauge transformation
    \be\left \{ \ba
    B_E &\mapsto B_E - \dd \chi \\
    a &\mapsto a+ \frac{1}{2} \chi 
    \ea \right. \ee
    it's evident that the action is not gauge-invariant due to the second term, even though the other two terms remain invariant. Similarly, for the magnetic gauge transformation, the action also lacks gauge invariance. However, if we couple the theory to a 3+1D bulk with the action
    \be\label{eq:Sbulk} S_{\rm bulk} = \frac{1}{4\pi^2}\int_{M_4} B_E \wedge B_M, \ee
    we find that the total theory becomes gauge-invariant, indicating that the mixed anomaly of the two 1-form symmetries is governed by the above anomaly polynomial in one higher dimension.
    \item From a classifying space perspective, let's consider two gauge fields as maps from spacetime to the classifying space:
    \be B,B': M_3 \to B^2(\bbZ_{2,E}\times \bbZ_{2,M} ) \ .\ee
    Suppose these two gauge field maps correspond to the left and right-hand side defect network configurations of nontrivial and trivial links in Eq.~\ref{grf:link}. Then, the maps of the two configurations are homotopic since the two 1-form symmetries are independent of each other. In a non-anomalous theory, the partition function should be the same for both configurations. However, in this case, the partition functions differ by a $(-1)$ factor in Eq.~\ref{grf:link}, signifying the presence of a mixed anomaly.
\end{enumerate}

\subsubsection{$\bbZ_{2,E}^{(1)}\times \bbZ_{2,M}^{(1)}$ SPT in 3+1D}

Expanding our comprehension of previous $\bbZ_{2,E}\times \bbZ_{2,M}$ 1-form symmetry anomaly, we can try to understand it from a 3+1D 1-form SPT point of view. The 3+1D $\bbZ_{2,E}^{(1)}\times \bbZ_{2,M}^{(1)}$ SPT is classified by
\be\ba  
H^4( B^2 \bbZ_{2,E}\times  B^2\bbZ_{2,M} ,U(1)) &\supset H^2(B^2 \bbZ_{2,E} , H^2( B^2\bbZ_{2,M} , U(1))) \\
&\cong  H^2(B^2 \bbZ_{2,E} ,\bbZ_2) \cong \bbZ_2 .
\ea\ee
It is one of the terms in the spectral sequence calculation of the cohomology groups. Remarkably, this $\bbZ_2$ classification is generated by the term $B_E^{(2)}\wedge B_M^{(2)}$. The resulting action for the 3+1D SPT is expressed as:
\be S_\mathrm{SPT} = \frac{1}{4\pi^2}\int_{M_4} B_E \wedge B_M.  \ee
This outcome neatly corroborates the anomaly polynomial approach detailed in Eq.~(\ref{eq:Sbulk}). Essentially, it reaffirms that the mixed anomaly of the two 1-form symmetries in 2+1D can be comprehended through a 3+1D SPT characterized by the wedge product of two 2-form gauge fields.

Now, let's consider why the anomalies from the bulk and boundary actions are cancelled from the point of view of topological defect networks. The line operator in 2+1D can be naturally extended to a surface operator in the 3+1D bulk. Given that the total spacetime dimension is four, the intersection of two such surface operators occurs at discrete points within 3+1D spacetime. Applying Poincar\'{e} duality, this intersection can be equivalently understood as the integral of the wedge product of $B_E$ and $B_M$:
\be  B_E\wedge B_M \stack{\text{Poincar\'{e} dual}}{\longleftrightarrow} S_E\cap S_M. \ee
Consequently, the action of the 3+1D 1-form SPT is essentially a count of the parity of the intersection number between the two surfaces that are dual to the 2-form gauge fields.

This geometrical interpretation beautifully elucidates the cancellation of anomalies between the bulk and boundary actions. In fact, when we try to extend the configuration of a Hopf link in 2+1D spacetime [shown on the left-hand side of Eq.~(\ref{grf:link})] to one higher dimension, the most straightforward extension to a surface in 4+1D would involve a single intersection point. Following this geometric reasoning, the combined bulk and boundary theories as a whole becomes non-anomalous:
\be  (-1)^{\rm Intersection \ number\ in\ 3+1D} \cdot (-1)^{\rm Linking \ number\ in\ 2+1D} = 1.  \  \ee
From this perspective, we gain a understanding of the anomaly cancellation purely from a geometric perspective.

\begin{center}
    \begin{tikzpicture}[scale = 0.8]
        \draw (0.5,0) ellipse (3 and 2);
        \draw[shift = {(0.5,0)}] (-3,0) .. controls (-3.3, 5.5) and (3.3,5.5) .. (3,0);
        \draw[ultra thick, orange] (-1,0) .. controls (-1.1,2.5) and (1.1,2.5) .. (1,0);
        \draw[shift = {(1,0)},ultra thick, blue] (-1,0) .. controls (-1.1,2.5) and (1.1,2.5) .. (1,0);

        \draw[thick , orange] (1,0) arc (0:295:1) (1,0) arc (0:-55:1) ;
        \draw[thick , blue] (2,0) arc (0:-235:1) (2,0) arc (0:115:1);

    \end{tikzpicture}
\end{center}

One can further enhance the understanding by examining explicit lattice models of 1-form SPT and its boundary model. This allows for a closer investigation of how bulk symmetry operators interact with boundary operators and how anomaly cancellation occurs. An illustrative example involving $\mathbb{Z}_N$ 1-form symmetry is provided in Ref.~\citeonline{PhysRevB.101.035101}.

\section{Higher Group Symmetry}
\label{sec:higher-group}

In this chapter we present another way to generalize symmetry using the language of category theory. We will observe how to describe a group as a 1-category, and how to generalize it into a higher category language.

For readers more focused on practical applications than mathematical details, skipping the initial sections and starting with section~\ref{sec:2-group-phys} is recommended. This section presents the physical interpretations in a user-friendly manner. Subsequently, section~\ref{sec:2-group-gauging} includes practical physical examples, making the concepts approachable and easier to grasp.

\subsection{A Touch of Category Theory}
\label{App:Cat}
Here we present some basic knowledge on category theory, 
\begin{definition}
A {\rm\textbf{category}} C has the following structure:
\begin{enumerate}
    \item A collection of \textbf{objects}, denoted ${\rm Obj}(C)$
    \item A collection of \textbf{morphisms}, denoted ${\rm Hom}_C(\cdot , \cdot )$. Morphisms can be thought of as relations connecting two objects. For example, a morphism from $X \in {\rm Obj}(C) $ to $Y \in {\rm Obj}(C)$ can be $f:X\to Y$ as a usual map (we will see examples where morphisms are not maps). The set of morphisms from $X$ to $Y$ is denoted ${\rm Hom}_C(X,Y)$.
    \item $\forall X\in {\rm Obj}(C), \exists {\rm id}_X \in {\rm Hom}_C(X,X)$ as the identity map from X to itself, namely, $\forall f\in {\rm Hom}_C(X,Y), f\cdot id_X = id_Y\cdot f = f$.
    \item Composition. $\cdot :{\rm Hom}_C(Y,Z)\times {\rm Hom}_C(X,Y)\to {\rm Hom}_C(X,Z),\quad (f,g)\mapsto f\cdot g$
    \item Associativity law, $f(gh) = (fg)h$.
\end{enumerate}
\end{definition}

We usually use \textbf{commutative diagrams} to express these relations. For example, the composition is written as: The following diagram commute

\begin{center}
\begin{tikzcd}
X \arrow[rd, "g"'] \arrow[rr, "f\circ g"] &                    & Z \\
                                          & Y \arrow[ru, "f"'] &  
\end{tikzcd}
\end{center}

We can see that this category structure can describe many things we have known by now.

\begin{example}
The collection of all groups, {\rm \textbf{Grp}}, is a category. Obviously, its objects are groups, and we assign the morphisms to be group homomorphisms. One can check that it satisfies all requirement we need for a category.
\end{example}

\begin{example}
The collection of all topological spaces, {\rm \textbf{Top}}, is a category, with objects being the collection of all topological spaces and morphisms being continuous maps.
\end{example}

A frequently asked question is: do we require morphisms to be maps? The answer is \textbf{NO}! If so, we would not have to assert the associative law, $(fg)h = f(gh)$. In the next example we will see a category with its morphism not being maps.

\begin{example}
Consider a category {\rm \textbf{C}}, with{\rm
\begin{enumerate}
    \item Obj(\textbf{C}) = \{Finite dimensional vector spaces\}
    \item \be\ba  {\rm Hom}_C (V,W) &= \{ 1\}, \ {\rm if} \ {\rm dim}(V) = {\rm dim}(W) \\ {\rm Hom}_C (V,W) &=\{ 0 \} , \ {\rm if} \ {\rm dim}(V) \ne {\rm dim}(W) \ea\ee
    Composition of morphisms is multiplication of numbers.
\end{enumerate}}
We can observe that the morphism is \textbf{definitely not} a map from one vector space(an object) into another vector space (another object), but rather signifies an equivalence of dimension.
\end{example}

From the example above, we can see that morphism cannot be accurately interpreted as maps among objects, but rather, ``connections" between objects that could be rather casually assigned.\footnote{A even more peculiar case is when the objects are all the sets and the morphism being Cartisian product. Why? Because it does not even have to satisfy the composition law! Instead, it has a weakened composition law with $(X\times Y)\times Z$ isomorphic to $X\times (Y\times Z)$.}

Many other examples can be raised. To get to our point, we want to see how to add structures so that the category will look more like a group. Some intuition would be to make our ``connection" between objects ``invertible", so that we could talk about the ``isomorphism" or ``equivalence" between objects.

\begin{definition}
A {\rm\textbf{groupoid}} $\CG$ is a category with inverse structure in morphisms. Namely, $\forall f\in {\rm Hom}_{\CG}(X,Y),\exists f^{-1}\in {\rm Hom}_{\CG}(Y,X),st.  \ f^{-1}f = id_X,\  ff^{-1} = id_Y $
\end{definition}

The name groupoid already suggests that it is closely linked to group, but somehow a primitive version. We will introduce a rather important groupoid, and from it we shall see why it is closely linked to group.

\begin{example}
Given a manifold M, the {\rm\textbf{path groupoid}} $\mathcal{P}_1(M)$ is defined with:
\begin{enumerate}
    \item objects are points in manifold M, ${\rm Obj}(\mathcal{P}_1(M)) = M$
    \item morphisms between points $x,y\in M$ are ``thin homotopy classes" of (smooth) curves connecting $x$ and $y$. Namely, such curve can be written as $\gamma : [0,1]\to M, 0\mapsto x, 1 \mapsto y$, and the corresponding morphism is the class of $\gamma$ with equivalence up to a reparameterization, denoted $[\gamma]$.
    \item composition and inverse are trivially defined.
\end{enumerate}
\end{example}

The path groupoid is explicit when viewing the objects seriously as points in manifold, namely,

\begin{center}
\begin{tikzcd}
x & y \arrow[l, "\gamma_2"', bend right] & z \arrow[l, "\gamma_1"', bend right] \arrow[ll, "\gamma_3", bend left]
\end{tikzcd}
\end{center}

This groupoid describes the paths of a manifold, exactly what we need for gauge theory. But how is groupoid related to group? We may remember that it is not the homotopy classes of curves in the manifold that forms a group, but that starting and ending at a specific point! This gives us the clue to obtain a group from groupoid. The narration is rather elegant(we intend not to lose the elegance).
\begin{definition}
A {\rm\textbf{group}} G is a groupoid with only one object.
\end{definition}

One can check that this indeed align with our usual definition of group, with the morphisms being group elements and composition as multiplication. Another interesting fact is that this definition of group $G$ looks like the classifying space $BG$ (introduced in \ref{sec:three})  visually. We will see in the following that these two are actually equivalent.

The next thing we care about is how to describe gauge theory using $\mathcal{P}_1(M)$ and $G$. Obviously, we need to build some sort of map that translates paths of manifolds into group elements. Such ··homomorphism" between categories, that preserve the categorical structure is called a \textbf{functor}.

\begin{definition}
A {\rm\textbf{(covariant) functor}} between category $C$ and $D$ has the following structure:
\begin{enumerate}
    \item Maps of objects. $F:{\rm Obj}(C)\to {\rm Obj}(D), x\mapsto F(x)$.
    \item Maps of morphisms. For $f\in {\rm Hom}_C(x,y),F(f)\in {\rm Hom}_D(F(x),F(y))$.
    \item Preservation of identity. $F({\rm id}_x) = {\rm id}_{F(x)}$.
    \item Composition. $F(fg) = F(f)F(g)$.
\end{enumerate}
\end{definition}

We can see that functor just has everything we intuitively desire for a ``homomorphism" (or maps that preserves much properties of the category) between categories, and explains how much one system can internally be associated to another.

Now after so much math, at the end of the first chapter, we can finally spoil ourselves a little bit with some physics candy!

\begin{example}{(\textbf{Gauge Theory \cite{Baez_2010}})}
\label{Eg:GT}
{\rm  In gauge theory, holonomy can be seen as a functor:
\[ {\rm hol}: \mathcal{P}_1(M)\to G \]
Let's check what we have got. Maps of objects are trivial, since group $G$ is defined with only one object.Given a path $\gamma$ in the manifold M, we get
\[ {\rm hol}: \gamma \mapsto {\rm hol}(\gamma)\in G \]
Composition is preserved, so
\[ {\rm hol}(\gamma \delta) =  {\rm hol}(\gamma)  {\rm hol}(\delta)  \]
or in explicit terms,
\[ {\rm hol}({\gamma\delta}) = ({\rm hol}{\gamma})\cdot ({\rm hol}{\delta}) \]
And preservation of identity
\[ {\rm hol}({\rm id}_x) = 1_G  \]

All this information is neatly captured by saying ``{\rm hol} is a functor"\footnote{More mathematically rigorously speaking, ${\rm hol}$ should be an anafunctor, which means that the map of $\CP_1(M)$ to $G$ shall be mediated by another category, which in this case is the category with objects as points on open covers of the manifold $M$, morphisms are sewed paths passing through various patches. In the proof of Thm.\ref{Thm:GaugeTh}, we evade this weakness by stipulating that we consider a trivial $G$-bundle.}. One may also add other restraints, for example, considering homotopy classes of paths would give a gauge theory with flat connection.
Hereby we state the main theorem for this introductory section.

\begin{theorem}
\label{Thm:GaugeTh}
Given a manifold M and a Lie group G, there is a one to one correspondence between:
\begin{enumerate}
    \item connections on the trivial G-bundle over M
    \item smooth functors
    \[ {\rm hol}: \mathcal{P}_1(M)\to G \ ,  \]
     where $\mathcal{P}_1(M)$ is the path groupoid of M.
\end{enumerate}
\end{theorem}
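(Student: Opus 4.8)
The plan is to construct both directions of the correspondence explicitly and then show they are mutually inverse. Since the bundle is trivial, a connection is just a global $\mathfrak{g}$-valued one-form $A\in\Omega^1(M,\mathfrak{g})$. From such an $A$ I build a functor by \emph{parallel transport}: for a smooth path $\gamma:[0,1]\to M$, let $g:[0,1]\to G$ solve
\[ \dot g(t) = -A(\dot\gamma(t))\,g(t), \qquad g(0)=e, \]
and set ${\rm hol}(\gamma):=g(1)$, the path-ordered exponential $\mathcal{P}\exp(-\int_\gamma A)$. I would then verify the functor axioms in turn: a constant path gives $g\equiv e$, so ${\rm hol}({\rm id}_x)=1_G$; concatenation corresponds to composing the flows, giving ${\rm hol}(\gamma\delta)={\rm hol}(\gamma){\rm hol}(\delta)$; and reparameterization leaves $g(1)$ fixed because the transport equation is reparameterization-covariant.

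The crucial point is invariance under \emph{thin} homotopy, i.e.\ homotopies sweeping zero area, since this is precisely what $\mathcal{P}_1(M)$ quotients by. Here I would compute the variation $\partial_s g$ for a one-parameter family $\gamma_s$ of paths and show, via the non-abelian (surface-ordered) Stokes theorem, that the change in ${\rm hol}(\gamma_s)$ is governed by the curvature $F_A$ integrated over the swept surface; for a thin homotopy this area vanishes, so ${\rm hol}$ descends to a well-defined functor on $\mathcal{P}_1(M)$. One also checks that the dependence on $\gamma$ is smooth, so the functor is smooth.

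Conversely, from a smooth functor ${\rm hol}$ I recover a connection infinitesimally. For $x\in M$ and $v\in T_xM$, choose a curve $\gamma$ with $\gamma(0)=x$, $\dot\gamma(0)=v$, write $\gamma_t$ for its restriction to $[0,t]$, set $g(t):={\rm hol}(\gamma_t)$, and define
\[ A_x(v) := -\dot g(0)\in T_eG=\mathfrak{g}. \]
Smoothness of the functor is exactly what guarantees this derivative exists and depends smoothly on $(x,v)$; linearity in $v$ and independence of the choice of $\gamma$ follow from applying the composition law of the functor to nearby paths. To close the argument I would check the two constructions are mutually inverse: differentiating the parallel-transport holonomy recovers $A$ since $\dot g(0)=-A(v)$ by construction, and conversely the connection extracted from a functor reproduces the original holonomy because both solve the same first-order transport equation with the same initial condition, so uniqueness of solutions forces them to agree along every path.

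The main obstacle is making the notion of \emph{smooth functor} precise: one must equip the space of thin-homotopy classes of paths with a suitable (diffeological or Fr\'echet) smooth structure so that smoothness has content, and then ensure the infinitesimal derivative extracting $A$ genuinely exists and varies smoothly. Tied to this is the thin-homotopy invariance in the first direction, where the surface-ordered computation behind the non-abelian Stokes theorem must be carried out carefully. Once these two technical points are settled, the remaining verifications reduce to standard uniqueness and smoothness arguments for the transport ODE.
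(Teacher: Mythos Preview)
Your proposal is correct and follows essentially the same approach as the paper: from a connection $A$ build ${\rm hol}(\gamma)$ as the path-ordered exponential $\mathcal{P}\exp(\int_\gamma A)$, and conversely recover $A$ by differentiating ${\rm hol}(\gamma_s)$ at $s=0$ along a curve with $\gamma(0)=x$, $\gamma'(0)=v$. The paper's proof is in fact much terser---it explicitly sets aside the issues of path ordering, smoothness, and thin-homotopy invariance that you take care to address---so your version is a more fleshed-out rendition of the same two-directional construction.
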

\begin{proof}
{\rm 
Of course in our common sense, given a connection $A$( a $\mathfrak{g}$-valued 1-form on manifold $M$), we can construct
\[ {\rm hol}: \gamma \mapsto \mathcal{P}\exp(\int_\gamma A) \]
This is the form that we are familiar with, and look good enough unless we search for problems like path ordering and smoothness.

The other side follows a similar construction. Given a holonomy functor {\rm hol}, the $\mathfrak{g}$-valued 1-form $A$ is
\[ A(v) = \frac{\rm d}{{\rm d}s} {\rm hol}(\gamma_s)\big|_{s=0}  \]
where $v$ is any vector at point x,
\[ \gamma (0) = x\,, \quad \gamma'(0) = v\,. \]
}
\end{proof}

}\end{example}

This view of gauge theory aligns with our classifying space perspective in \ref{sec:three}.

\subsection{Towards Defining a 2-Group}
\label{sec:2-group}

The above showed how to generalize the symmetry from acting on local operators to non-local ones. Now we seek to expand our definition to a higher categorical setting, which is a deviation away from the group structure. Previously we have described paths on a manifold by assigning morphisms (or equivalently, group elements) to be paths, and now we want surfaces. An intuition would be to build a surface by filling two paths starting and ending at the same points, which is to build a ``morphism" between morphisms. This leads us to the following definition of 2-category.

Another perspective from category theory is to see the following relation. In sets, we can only claim if two elements are equal or not; in categories, we can say two different (not equal) objects are ``connected" or not, in some occasions (like groupoids), we can even say if two objects are isomorphic or not! With the philosophy that \textbf{every equality is up to some isomorphism} in mind, we naturally ask: why don't we extend the same courtesy to morphisms? Indeed, why not?

\begin{definition}
A {\rm\textbf{2-category}} $\mathcal{C}$ is constituted by the following data:
\begin{enumerate}
    \item Objects.
    \item Morphisms with ordinary compositions.
    \item 2-Morphisms, i.e. binary relations between morphisms. A 2-morphism can be $\alpha : {\rm Hom}_C(x,y)\to {\rm Hom}_C(x,y), \ f\mapsto g$.
    \item Identity. Identity 2-morphism for each morphism exists.
    \item Compositions. There are two kinds of compositions, horizontal one and vertical one, see the diagrams below.

    Horizontal composition (we denote by $\circ$):
\be
\begin{tikzcd}
      & {} \arrow[dd, "\alpha", Rightarrow] &                                                                      & {} \arrow[dd, "\beta", Rightarrow] &                                                                      &   &       & {} \arrow[dd, "\alpha\circ\beta", Rightarrow] &                                                                                     \\
\cdot &                                     & \cdot \arrow[ll, "f"', bend right=71] \arrow[ll, "f'", bend left=71] &                                    & \cdot \arrow[ll, "g"', bend right=71] \arrow[ll, "g'", bend left=71] & = & \cdot &                                               & \cdot \arrow[ll, "f\cdot g"', bend right=71] \arrow[ll, "f'\cdot g'", bend left=71] \\
      & {}                                  &                                                                      & {}                                 &                                                                      &   &       & {}                                            &                                                                                    
\end{tikzcd}
\ee

Vertical composition (we denote by $\cdot$):
\be
    \centering
\begin{tikzcd}
      & {} \arrow[d, "\alpha", Rightarrow] &                                                                                     &   &       & {} \arrow[dd, "\beta\cdot \alpha", Rightarrow] &                                                                                \\
\cdot & {} \arrow[d, "\beta", Rightarrow]  & \cdot \arrow[ll, "f"', bend right=71] \arrow[ll, "h", bend left=71] \arrow[ll, "g"] & = & \cdot &                                                & \cdot \arrow[ll, "f" description, bend right=67] \arrow[ll, "h", bend left=67] \\
      & {}                                 &                                                                                     &   &       & {}                                             &                                                                               
\end{tikzcd}
\ee
We demand that both compositions satisfy the associativity law.
    \item Compatibility of two 2-morphism compositions shall be compatible, meaning for
    \be
\begin{tikzcd}
      & {} \arrow[d, "\alpha_1", Rightarrow] &                                                                     & {} \arrow[d, "\alpha_2", Rightarrow] &                                                                     \\
\cdot & {} \arrow[d, "\beta_1", Rightarrow]  & \cdot \arrow[ll, bend left=67] \arrow[ll, bend right=67] \arrow[ll] & {} \arrow[d, "\beta_2", Rightarrow]  & \cdot \arrow[ll, bend right=67] \arrow[ll, bend left=67] \arrow[ll] \\
      & {}                                   &                                                                     & {}                                   &                                                                    
\end{tikzcd}
    \ee
    two ways to composite should be strictly equal, i.e., 
    \be (\beta_1\circ \beta_2)\cdot(\alpha_1 \circ \alpha_2) = (\beta_1\cdot \alpha_1)\circ (\beta_2\cdot \alpha_2) \ee
\end{enumerate}

\end{definition}

And likewise, we can build 3-morphisms which connects 2-morphisms, $n$-morphisms which connects $(n-1)$-morphisms, the ladder goes on\footnote{How about an infinite one?}. Like what we did with 1-categories, we would like to get an inverse or isomorphic structure between morphisms, more formally, 2-groupoid.

\begin{definition}
A {\rm\textbf{2-groupoid}} is a 2-category with:
\begin{enumerate}
    \item 1-Morphisms invertible (like in groupoid)
    \item 2-Morphisms invertible (in both horizontal and vertical ways)
\end{enumerate}
The invertibility for 2-morphisms is explicitly written as:

And written as($\alpha: {\rm Hom}_C(x,y)\to {\rm Hom}_C(x,y), f\mapsto g$)
\[ \alpha\circ\alpha_h^{-1} = {\rm id}_{{\rm id}_y};\quad \alpha_h^{-1}\circ\alpha = {\rm id}_{{\rm id}_x}\]
\[ \alpha\cdot\alpha_v^{-1} = {\rm id}_g; \quad \alpha_v^{-1}\cdot \alpha = {\rm id}_f \]
\end{definition}

So, likewise we can define 2-group in an elegant and natural way.

\begin{definition}
A {\rm \textbf{(strict) 2-group}} $\mathcal{G}$ is a 2-groupoid with one object.
\end{definition}

Also, we can build this structure up to n-groupoid and n-group. And we can interpret this structure physically. For normal gauge theory, we describe the \textbf{worldline} a point particle moving in spacetime manifold (base manifold) by assigning it with an group element. Now we attempt to do similar thing with \textbf{worldsheet} of a string.

The structure of a strict 2-group can be equivalently described by a more explicit structure: crossed module.

\begin{definition}
    A {\rm \textbf{crossed module}} is defined by a tuple $(G,H,\partial ,\rhd)$, where
    \begin{enumerate}
        \item $G, \ H$ are groups.
        \item $\ptl: H\to G$ is a group homomorphism, an equivalent expression is $\ptl\in{\rm Hom}_{\rm Grp}(H,G)$
        \item $\rhd : G\to {\rm Aut}(H)$.
        \item Constraint that $\ptl(g\rhd h) = g(\ptl h) g^{-1}$ and $\ptl h\rhd h' = hh'h^{-1}$.
    \end{enumerate}
\end{definition}

Now we explain how all these sums up to become equivalent to a strict 2-group.

\begin{enumerate}
    \item The group $G$ corresponds to the 1-morphisms,

    \begin{center}
    \begin{tikzcd}
    \cdot & \cdot \arrow[l, "g_1"] & \cdot \arrow[l, "g_2"] & = & \cdot & \cdot \arrow[l, "g_1g_2"]
    \end{tikzcd}
    \end{center}
    
    \item The group $H$ corresponds to the 2-morphisms whose source is ${\rm id}_\cdot$, and the result of this action is $\ptl(h)\in G$,
    \begin{center}\begin{tikzcd}
      & {} \arrow[dd, "h", Rightarrow] &                                                                                           \\
\cdot &                                & \cdot \arrow[ll, "\partial h", bend left=67] \arrow[ll, "{\rm id}_\cdot"', bend right=67] \\
      & {}                             &                                                                                          
    \end{tikzcd}\end{center}
    This also give the horizontal composition 
\begin{center}\begin{tikzcd}
      & {} \arrow[dd, "h", Rightarrow] &                                                                                           & {} \arrow[dd, "h'", Rightarrow] &                                                                                            &   &       & {} \arrow[dd, "h\circ h'", Rightarrow] &                                                                                            \\
\cdot &                                & \cdot \arrow[ll, "\partial h", bend left=67] \arrow[ll, "{\rm id}_\cdot"', bend right=67] &                                 & \cdot \arrow[ll, "\partial h'", bend left=67] \arrow[ll, "{\rm id}_\cdot"', bend right=67] & = & \cdot &                                        & \cdot \arrow[ll, "\partial(hh')", bend left=67] \arrow[ll, "\rm id_\cdot"', bend right=67] \\
      & {}                             &                                                                                           & {}                              &                                                                                            &   &       & {}                                     &                                                                                           
\end{tikzcd}\end{center}

\item  Action of $G$ on $H$ by $\rhd$ is categorically explained as
\begin{center}
\begin{tikzcd}
      & {} \arrow[dd, "{\rm id}_{g}", Rightarrow] &                                                                     & {} \arrow[dd, "h", Rightarrow] &                                                                                           & {} \arrow[dd, "{\rm id}_{g^{-1}}", Rightarrow] &                                                                               &   &       & {} \arrow[dd, "g\rhd h", Rightarrow] &                                                                                                                      \\
\cdot &                                           & \cdot \arrow[ll, "g"', bend right=67] \arrow[ll, "g", bend left=67] &                                & \cdot \arrow[ll, "\partial h", bend left=67] \arrow[ll, "{\rm id}_\cdot"', bend right=67] &                                                & \cdot \arrow[ll, "g^{-1}", bend left=67] \arrow[ll, "g^{-1}"', bend right=67] & = & \cdot &                                      & \cdot \arrow[ll, "\partial(g\rhd h) = g(\partial h)g^{-1}", bend left=67] \arrow[ll, "\rm id_\cdot"', bend right=67] \\
      & {}                                        &                                                                     & {}                             &                                                                                           & {}                                             &                                                                               &   &       & {}                                   &                                                                                                                     
\end{tikzcd}
\end{center}
The upper row composes into ${\rm id}_\cdot$ still, but the bottom line is changed, this what a $G$ action does to 2-morphisms, and it give us an explanation to one of the constraints we imposed  earlier.

\item (Pieffer identity)We will obtain the second constraint in the definition of 2-group using the interchange law.

\begin{center}
    \begin{tikzcd}
        h h' h^{-1} & = & \cdot & & \cdot  \arrow[ll, bend right=75,"{\rm id}_{\cdot}"' ,""{name = L1}] \arrow[ll, bend left=75,"\partial(h)" ,""{name = L2} ] &  & \cdot  \arrow[ll, bend right=75,"{\rm id}_{\cdot}"' ,""{name = M1}]  \arrow[ll, bend left=75,"\partial (h')" ,""{name = M2}] 
        & & \cdot  \arrow[ll, bend right=75,"{\rm id}_{\cdot}"' ,""{name=R1}]  \arrow[ll, bend left=75,"\partial (h)^{-1}" ,""{name=R2}]
        \arrow[Rightarrow, from=R1, to=R2,"h^{-1}"]
        \arrow[Rightarrow, from=M1, to=M2,"h'"]
        \arrow[Rightarrow, from=L1, to=L2,"h"]
        \\
        \\
        \\
        \\
         & = & \cdot & &  \cdot  \arrow[ll, bend right=75,"{\rm id}_{\cdot}"' ,""{name = L1}] \arrow[ll, bend left=75,"\partial(h)" ,""{name = L2} ]
         \arrow[ll, ""{name = L3} ]& & \cdot  \arrow[ll, bend right=75,"{\rm id}_{\cdot}"' ,""{name = M1}]  \arrow[ll, bend left=75,"\partial (h')" ,""{name = M2}] \arrow[ll, ""{name = M3} ]
        & & \cdot  \arrow[ll, bend right=75,"{\rm id}_{\cdot}"' ,""{name=R1}]  \arrow[ll, bend left=75 ,""{name=R2},"\ptl(h^{-1})"]\arrow[ll, ""{name = R3} ]
        \arrow[Rightarrow, from=L1, to=L3, "h"]
        \arrow[Rightarrow, from=M1, to=M3, "{\rm id}_{\rm id_\cdot}"]
        \arrow[Rightarrow, from=R1, to=R3, "h^{-1}"]
        \arrow[Rightarrow, from=R3, to=R2, "{\rm id}_{\ptl h^{-1}}"]
        \arrow[Rightarrow, from=M3, to=M2, "h'"]
        \arrow[Rightarrow, from=L3, to=L2 , "{\rm id}_{\ptl h}"]
        \\ \\ \\ \\
        & = & \cdot & & \cdot \arrow[ll, bend right = 75, "{\rm id}_\cdot" , ""{name = 1}]\arrow[ll, bend right = 0, "{\rm id}_\cdot" , ""{name = 2}]\arrow[ll, bend left = 75, "\ptl (hh'h^{-1})" , ""{name = 3}]
        \arrow[Rightarrow, from = 1, to = 2, "{\rm id}_{\rm id_\cdot}"]
        \arrow[Rightarrow, from = 2, to = 3, "\ptl h \rhd h'"] & = & \ptl h \rhd h'

    \end{tikzcd}
\end{center}
    
\end{enumerate}

It can be proved that there is a one to one correspondence between strict 2-groups and crossed modules. Now we attempt to extract more information from the crossed module.

\subsection{Crossed Extension}
\label{sec:cross-extension}

In this section we seek to derive a certain equivalence relation within the crossed modules, and derive the ``weak" 2-group\footnote{Most literature introduces weak 2-group first because the weak 2-groups carry physical information, but in our formalism, strict 2-groups are more natural to define}.

Notice that the existence of group homomorphism $\ptl$ in a crossed module $(G,H,\ptl , \rhd)$ can lead to an exact sequence
\be 1\to {\rm ker}\ptl \stack{i}{\longrightarrow} H \stack{\ptl}{\longrightarrow} G  \stack{p}{\longrightarrow} {\rm coker}\ptl \to 1 \ee
We denote $\Pi_2 \equiv {\rm ker}\ptl, \  \Pi_1\equiv {\rm coker}\ptl = G/ {\rm im}\ptl  $, which is legitimate since  $g(\ptl h)g^{-1} = \ptl(g\rhd h)\in {\rm im}\ptl$(${\rm im}\ptl$ is normal subgroup). This exact sequence is called the \textbf{crossed extension}.

Now we can define an equivalence relation as follows, two crossed modules $(G_1,H_1,\ptl_1 , \rhd_2)$ and $(G_2,H_2,\ptl_2 , \rhd_2)$ are weakly equivalent if the diagram
\be
\label{2-group-equiv}
\begin{tikzcd}
	&& {H_1} & {G_1} \\
	1 & {\Pi_2} & {} && {\Pi_1} & 1 \\
	&& {H_2} & {G_2}
	\arrow[from=2-1, to=2-2]
	\arrow["{i_1}", from=2-2, to=1-3]
	\arrow["{i_2}"', from=2-2, to=3-3]
	\arrow["{\partial_1}",from=1-3, to=1-4]
	\arrow["{\partial_2}",from=3-3, to=3-4]
	\arrow["{t_H}"', from=1-3, to=3-3]
	\arrow["{t_G}"', from=1-4, to=3-4]
	\arrow["p_1",from=1-4, to=2-5]
	\arrow["p_2",from=3-4, to=2-5]
	\arrow[from=2-5, to=2-6]
\end{tikzcd}
\ee
is commutative, and is compatible with the two actions $\rhd_1, \ \rhd_2 $, meaning
\be t_H(g\rhd h) = t_G(g)\rhd' t_H(h) , \ \forall g\in G_1,\ H\in H_1  \ee

This equivalence class can be nicely classified by $(\Pi_1,\Pi_2,\alpha,\beta)$, where $\Pi_{1,2}$ are previously mentioned,  $\alpha$ is the action $\alpha : \Pi_1\to {\rm Aut}(\Pi_2)$ induced by $\rhd$, and most notably, $\beta\in H^3_{\rm grp}(\Pi_1,\Pi_2)$ is a group-cohomological information named the \textbf{Postnikov class}. 

\subsection{Postnikov Class}
\label{sec:Postnikov}

How do we understand and compute the Postnikov class information? Notice we have previously mentioned group cohomology in Section \ref{subsubsec:0+1D}. There the group cohomology serves as an element to measure the breaking of associavity in the projective representation. Here, as we will soon see, the Postnikov class will play a similar role classifying the ``twist" of group action in the crossed module.

Given a crossed extension
\be 1\to \Pi_2 \stack{i}{\longrightarrow} H \stack{\ptl}{\longrightarrow} G  \stack{p}{\longrightarrow} \Pi_1 \to 1\,, \ee
we compute the Postnikov class in the following procedures:
\begin{enumerate}
    \item Choose a lifting map $s: \Pi_1\to G$ st. $p\circ s = {\rm id}_{\Pi_1}$. Note that it does not have to be a homomorphism, for we are about to define $f: \Pi_1\times \Pi_1 \to G$ as
    \be  s(g)s(h) = s(g,h)f(g,h) \ . \ee
    Also, the function $f(g,h)$ satisfies the 3-cocycle condition
    \be
    s(g)f(h,k)s(g)^{-1}f(g,hk)=f(g,h)f(gh,k)\,.
    \ee
    \item Now we uplift $f$ to a function $F:\Pi_1\times\Pi_1\rightarrow H$ by requiring that $\ptl (F(g))\equiv f(g)$ for all $g\in\Pi_1$. 
    \item The breaking of cocycle condition for $F(g,h)$ can be explicitly written as
    \be
    \label{cocycle-F}
    (s(g)\rhd F(h,k))F(g,hk)=i(c(g,h,k))F(g,h)F(gh,k)\,,
    \ee
    where $c:\Pi_1\times\Pi_1\times\Pi_1\rightarrow\Pi_2\in H^3(B\Pi_1,\Pi_2) $ is the Postnikov we so desire.
\end{enumerate}

\subsubsection{Example: $G = H = \bbZ_4$}
This is one of the most simple non-trivial example, the crossed extension is explicitly written as
\be 1\to \bbZ_2 \stack{\times 2}{\longrightarrow} \bbZ_4 \stack{\times 2}{\longrightarrow} \bbZ_4  \stack{\text{mod}\ 2}{\longrightarrow} \bbZ_2 \to 1 \ee
Note that we use $\{0,1,\dots,N-1\}$ to denote the elements in the additive group $\mb{Z}_N$.

Now since ${\rm Aut}(\bbZ_4) \cong \bbZ_2$, where the non-trivial element $\alpha$ stands for the action $(1,3)$(permuting 1 and 3 while leaving other elements unchanged). The group cohomology here is \be H^3_{\rm grp} (B\bbZ_2,\bbZ_2)=\bbZ_2 \ee 
If we choose the action $\rhd = {\rm id}\in {\rm Aut}(\bbZ_4)$, then the Postnikov class is $0\in H^3_{\rm grp} (B\bbZ_2,\bbZ_2)=\bbZ_2$; If we choose the non-trivial action $\rhd = \alpha $, then the Postnikov class is the non-trivial element $1\in H^3_{\rm grp} (B\bbZ_2,\bbZ_2)$.

\subsubsection{Example: $G = SU(2), \ H = \bbZ_{2l}$}
We shall observe a much richer structure when we turn our attention on Lie groups. This time the crossed extension is
\be 1\to \bbZ_{l} \stack{\times 2}{\longrightarrow} \bbZ_{2l} \stack{ {\rm mod} 2 \cdot I }{\longrightarrow} SU(2)  \stack{p}{\longrightarrow} SO(3) \to 1 \ee
We want to break this long and complicated sequence up to some short sequence that we can under stand, and we can do it!
\be
    \begin{tikzcd}
	{1}\arrow{r} & {\mb{Z}_{l}}\arrow{r}{i}\arrow{d} & {\mb{Z}_{2l}}\arrow{r}\arrow{d}{\ptl = {\rm mod} 2 \cdot I} & {\mb{Z}_2 = {\rm im}\ptl}\arrow{r}\arrow{d} & {1} \\
	{1}\arrow{r} & {\mb{Z}_2 ={\rm im}\ptl}\arrow{r} & {SU(2)}\arrow{r}{p} & {SO(3)}\arrow{r} & {1}
\end{tikzcd}
\ee
For any crossed extension, this procedure is well-defined. The important information is:
\begin{itemize}
    \item The first row
    \be\begin{tikzcd}
    \label{eq:1strow}
        {1}\arrow{r} & {\mb{Z}_{l}}\arrow{r}{i} & {\mb{Z}_{2l}}\arrow{r} & {\mb{Z}_2 = {\rm im}\ptl} \arrow{r} & {1}
    \end{tikzcd} \ee
    is a central extension of ${\rm im}\ptl$ by $\Pi_2$, and is classified by $\omega_1 \in H^2_{\rm grp}( \bbZ_2, \bbZ_{2l} )$.
    \item Also, the second row
    \be\begin{tikzcd}
        {1}\arrow{r} & {\mb{Z}_2 ={\rm im}\ptl}\arrow{r} & {SU(2)}\arrow{r}{p} & {SO(3)}\arrow{r} & {1}
    \end{tikzcd}\ee
    is also a central extension, classified by $\omega_2 \in H^2_{\rm grp}( \Pi_1, {\rm im}\ptl ) = H^2_{\rm grp}(SO(3),\bbZ_2)$, which describes the obstruction of lifting a $SO(3)$-bunlde into a $SU(2)$-bundle, which is exactly the second Stiefel-Whitney class. The bundle-lifting interpretation also stands for other continuous $\Pi_1$ and $G$.
    \item What about the Postnikov class we talked about? What is the relation between the 3rd order Postnikov class and the 2nd order Stiefel-Whitney class? This have to do with the first row. First we introduce a famous tool in homology theory.
    \begin{theorem}[Bockstein]
        Given an exact sequence of Abelian groups
        \be 0\to P\to Q\to R\to 0  \ee
        there is a homomorphism of cohomology group $B_i : H^i(C,R)\to H^{i+1}(C,P)$, and thus a long exact sequence
        \be  \cdots \to H^{n}(C,P) \to H^{n}(C,Q) \to H^{n}(C,R) \to
 H^{n+1}(C,P)  \to H^{n+1}(C,Q)\to \cdots  \ee
    \end{theorem}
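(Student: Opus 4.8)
The plan is to recognize the asserted sequence as the long exact cohomology sequence induced by a short exact sequence of coefficient groups, and to produce it from the \emph{zig-zag lemma} once I have a short exact sequence of \emph{cochain complexes}. I regard $H^\bullet(C,A)$ as the cohomology of $C^\bullet(C,A)=\mathrm{Hom}(C_\bullet(C),A)$, where $C_\bullet(C)$ is the chain complex computing the (co)homology of $C$ (singular chains for a space, or the bar resolution for a group as in Section~\ref{subsubsec:0+1D}), and $\delta$ is the induced coboundary. The one structural fact I need is that each $C_n(C)$ is \emph{free abelian}, so that $\mathrm{Hom}(C_n(C),-)$ is an exact functor on abelian groups.

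First I would apply $\mathrm{Hom}(C_n(C),-)$ to $0\to P\to Q\to R\to 0$ in each degree $n$. Left-exactness of $\mathrm{Hom}$ is automatic, while surjectivity of $\mathrm{Hom}(C_n(C),Q)\to\mathrm{Hom}(C_n(C),R)$ follows by lifting the images of a free basis of $C_n(C)$ through $Q\twoheadrightarrow R$. Since the coefficient maps commute with $\delta$, this assembles into a short exact sequence of cochain complexes
\[ 0\to C^\bullet(C,P)\to C^\bullet(C,Q)\to C^\bullet(C,R)\to 0 . \]
Second, I would feed this into the zig-zag lemma, which immediately delivers the long exact sequence, with the maps $H^n(C,P)\to H^n(C,Q)\to H^n(C,R)$ induced by the coefficient morphisms and connecting homomorphisms $B_n\colon H^n(C,R)\to H^{n+1}(C,P)$. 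I would spell $B_n$ out, since that is the content the Postnikov-class discussion actually uses: given a cocycle $r\in Z^n(C,R)$, choose a lift $q\in C^n(C,Q)$; then $\delta q$ maps to $\delta r=0$ in $R$, so by exactness $\delta q=\iota(p)$ for a unique $p\in C^{n+1}(C,P)$; one checks $\delta p=0$ from $\delta^2=0$ and injectivity of $\iota$, and sets $B_n[r]:=[p]$.

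The only genuine obstacle is the exactness of the Hom functor, i.e. the surjectivity $\mathrm{Hom}(C_n(C),Q)\twoheadrightarrow\mathrm{Hom}(C_n(C),R)$; this is precisely where freeness (projectivity) of the chain groups is indispensable, and without it one recovers only the left-hand segment of the sequence. Everything downstream — well-definedness of $B_n$ (independence of the representative $r$ and of the lift $q$, and the fact that changing choices alters $p$ only by a coboundary) together with exactness at each node — is the standard snake-lemma diagram chase and introduces no new idea. The one point I would flag is that the statement is for ordinary (untwisted) coefficients; in the crossed-module application the same argument runs verbatim with $\mathrm{Hom}$ replaced by the appropriate equivariant Hom on a free resolution, the freeness again guaranteeing the exactness that drives the construction.
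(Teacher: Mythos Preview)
Your argument is correct and is the standard proof of the Bockstein long exact sequence: freeness of the chain groups makes $\mathrm{Hom}(C_n(C),-)$ exact, yielding a short exact sequence of cochain complexes, and the zig-zag (snake) lemma then produces the long exact sequence together with the explicit connecting homomorphism you describe.

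Note, however, that the paper does \emph{not} actually prove this theorem. It is simply quoted as a known tool from homological algebra and immediately applied to the short exact sequence $0\to\Pi_2\to H\to\mathrm{im}\,\partial\to 0$ in order to identify the Postnikov class as the image of $\omega_2$ under the Bockstein map $H^2_{\mathrm{grp}}(\Pi_1,\mathrm{im}\,\partial)\to H^3_{\mathrm{grp}}(\Pi_1,\Pi_2)$. So there is nothing to compare: your proposal supplies a proof where the paper gives none, and the construction you spell out for $B_n$ (lift $r$ to $q$, take $\delta q$, pull back along $\iota$) is exactly the recipe the paper tacitly relies on when computing the Postnikov class in the surrounding discussion.
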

    With this theorem, it is explicit that since Eq.~(\ref{eq:1strow}) is an exact sequence, we can induce a Bockstein homomorphism
    \be  B: H^2_{\rm grp}(\Pi_1,{\rm im}\ptl) \to H^3_{\rm grp}(\Pi_1,\Pi_2) , \ \ \omega_2\mapsto \beta \ee
    which is exactly the Postnikov class.
\end{itemize}

\subsection{Physical Interpretation}
\label{sec:2-group-phys}

In this section we will see how 2-group describes the higher-symmetry structure in physical system. We will derive how $\Pi_1$ describes 0-form symmetry and how $\Pi_2$ describes 1-form symmetry, and $\alpha,\beta$ describes how these two incorporate.

Now we use the topological defect network perspective to see the above. The $\alpha$ is understood as the 0-form defect ($\Pi_1$-defect) twisting the intersecting 1-form defect ($\Pi_2$-defect, one dimensional lower),
\begin{figure}[htbp]
\centering
\begin{tikzpicture}
    \draw [yslant = -0.5, fill = cyan] (0,0) rectangle (1,2) (1,2.5)node[above , cyan] {$g\in\Pi_1 $};
    \draw[densely dotted, thick, orange] (-1,0.8) node[anchor = south] {$\alpha_{g}(h)$} -- (0.5,0.8)  ;
    \draw[ultra thick, orange] (0.5,0.8) -- (1.7,0.8) node[anchor = south]{$h\in \Pi_2$};
\end{tikzpicture}
\end{figure}
The 1-form symmetry element changes $h\mapsto \alpha_g(h)=g\rhd h$ after crossing the generator $U_g$ of 0-form symmetry.

The information of $\beta$ is encoded in the fusion of 1-form defects.
\begin{figure}[htbp]
    \centering
    \begin{tikzpicture}[scale = 0.5]
    \draw[thick] (0,0) -- (2,-4) (0,0) node[above] {$g$};
    \draw[shift = {(8,0)} ,thick] (0,0) -- (2,-4) (0,0) node[above] {$g$};
    \draw[thick] (4,0) -- (2,-4) (4,0) node[above]{$k$}; 
    \draw[shift = {(8,0)}, thick] (4,0) -- (2,-4) (4,0) node[above]{$k$}; 
    \draw[thick] (2,-4) -- (2,-6) (2,-6) node[below]{$ghk$};
    \draw[shift = {(8,0)}, thick] (2,-4) -- (2,-6) (2,-6) node[below]{$ghk$};
    \draw[thick] (2,0) -- (1,-2) (2,0) node[above]{$h$};
    \draw[thick] (10,0) -- (11,-2) (10,0) node[above]{$h$} (6,-3) node[above]{$= \ \ \ \ \beta(g,h,k)$} ;
    
    \end{tikzpicture}
\end{figure}

The above procedure from the left to the right is sometimes called F-move in accordance with the fusion rule in CFT. After a F-move, a new generator for 1-form symmetry $\beta(g,h,k)$ emerges.

The weak 2-group background gauge fields are $A$ for the 0-form symmetry and $B$ for the 1-form symmetry. In the simplicial set (flat connection) language,
\begin{enumerate}
    \item For the 0-form part, the transition functions $A_{ij}\in C^1(M,\Pi_1)$, and the 1-cocycle condition would suggest for each 2-simplex $\{ ijk\}$, we have
    \be A_{ij}A_{jk}A_{ki} = {\rm id}_{\Pi_1}  \ee
    
    \item For the 1-form part, the gauge field $B\in C^2(M,\Pi_2)$, satisfies a twisted-shifted cocycle condition
    \be\ba
    \delta_A B &= A^* \beta \\
    \Leftrightarrow \alpha_{A_{ij}} (B_{jkl}) B_{ikl}^{-1} B_{ijl} B_{ijk}^{-1} &= \beta(A_{ij},A_{jk},A_{kl})
    \ea\ee
    this is twisted because of the derivative $\delta_A$ has an additional $\Pi_1$-action than the usual \Cech derivative $\delta$; shifted because it does not equal 0, instead, it matches the Postnikov invariant.
\end{enumerate}

The gauge transformations shall preserve the flatness condition, there are two types of gauge transformations,
\begin{enumerate}
    \item the first one is
    \be A \mapsto A , \ \ B\mapsto B + \delta_A \Lambda  \ee
    $A$ field remains intact, and $B$ shifted by a twisted-exact term.
    \item the second one is more complicated
    \be\ba 
    A_{ij} &\mapsto A^f_{ij} = f_i A_{ij} f_j^{-1} \\   
    B&\mapsto \alpha_f(B) + \xi(A,f) , \\ 
    \xi(A,f) \in C^2(M,\Pi_2) \ \ &{\rm and} \ \delta_{A^f}\xi(A,f) = A^{f*} B-\alpha_f (A^*\beta) \ .
    \ea\ee
\end{enumerate}

One of the naturally occurring 2-group structures in condensed matter systems is the symmetry-enriched gauge theories, which can be viewed as special cases of symmetry-enriched topological (SET) states~\cite{etingof2010fusion,PhysRevB.87.155115,PhysRevB.87.165107,Chang_2015,PhysRevB.93.155121,PhysRevB.94.235136,PhysRevB.96.115107,PhysRevB.100.115147,10.21468/SciPostPhys.8.2.028,PhysRevB.106.115104}.

For example, let's consider a (2+1)D system with an Abelian gauge group $A$ and a symmetry group $\Pi_1$. The anyons or excitations of the $A$ gauge theory are described by the quantum double of the gauge group $A$. In the Abelian setting, this corresponds to $A\times\hat{A}$, where $A$ and $\hat{A}=\text{Hom}(A,U(1))$ describe the gauge flux and gauge charge in the $A$ gauge theory, respectively. The world lines of such anyons can be understood as the TDL of the 1-form symmetry $\Pi_2=A\times\hat{A}$. On the other hand, the 0-form symmetry $\Pi_1$ acts on the anyons precisely through the action $\alpha$.

All the data and consistency conditions for the 2-group theory can be traced back to the classification data and obstructions for SET states~\cite{etingof2010fusion,PhysRevB.100.115147,wang2021domain}. More generally, if we consider non-Abelian $A$, then the quantum double $\Pi_2=DA$ as 1-form symmetry will be non-invertible. Consequently, the SET states will correspond to systems with non-invertible 1-form symmetry and invertible 0-form symmetry. The interaction of these two types of symmetries will make the classification more involved~\cite{etingof2010fusion,PhysRevB.100.115147}. The lattice model for a 2-gauge theory can be expressed as a discrete nonlinear $\sigma$-model. This representation arises from the triangulation of spacetime into a classifying space associated with a 2-group, as discussed in detail in Ref.~\citeonline{PhysRevB.100.045105}.

\subsection{Example: Gauging with 't Hooft Anomaly }
\label{sec:2-group-gauging}

Here we present a physical application of the weak 2-group symmetry.

Consider a 3d QFT with 0-form global symmetry $\hat{\Pi}_2\times \Pi_1$, where for now we just take $\hat{\Pi}_2$ to be an Abelian group. This system has a mixed 't Hooft anomaly described by
\be S = 2\pi \int_{M_4} C\cup A^*\beta  \ee
where $C$ is the background 1-form gauge field of $\hat{\Pi}_2$, $A$ is the background 1-form gauge field of $\Pi_1$ and $\beta$ the Postnikov class. With the presence of the 't Hooft anomaly, we cannot gauge the entire $\hat{\Pi}_2\times \Pi_1$ symmetry, but after gauging $\hat{\Pi}_2$, we can obtain a new 1-form symmetry $\Pi_2\cong {\rm Hom}(\hat{\Pi}_2,U(1))$ and the corresponding 2-form background gauge field B.

The new topological action after the procedure is 
\be S = 2\pi \int_{M_3} C\cup B  + 2\pi \int_{M_4} C\cup A^*\beta \ee
which should be gauge invariant under the $\Pi_2$ gauge transformation, thus requiring
\be  \delta B + A^* \beta = 0 \ee
which aligns with our previous definition of a weak 2-group (up to a sign notation, if you watch carefully). So, in this theory, $\Pi_1$, $\Pi_2$ constitutes a 2-group symmetry $(\Pi_1,\Pi_2,1,\beta)$ where $1$ denotes trivial action.

\subsection{Generalization: Weak $n$-Group}
\label{sec:n-group}
As we previously said, the condition of ``equal" can always be loosen to ``equivalence", and the language chosen to describe such equivalence relation varies.

We can definitely imagine the sketch of a weak $n$-group: there are $n$-morphisms filling the gap of two $(n-1)$-morphisms with the same source and target,... and the ladder goes down until 1-morphism. But here what we would like to describe is a relatively simple case, which is a mere generalization of the previous example.

Remind that at Prop.\ref{prop:gauge}, when we gauge a $p$-form symmetry, we shall have a $(d-p-2)$-form symmetry. Now consider gauging a 0-form symmetry $\hat{\Pi}_2$ in the complete $\hat{\Pi}_2\times \Pi_1$ in $(n+1)$-dimensional spacetime with 't Hooft anomaly
\be  S = 2\pi \int_{M_{n+2}} C\cup A^*\beta   \ee
where $\beta\in H^{n+1}(\Pi_1,\Pi_2)$ is the Postnikov class. This special version of weak n-group have $\Pi_1$ the 0-form symmetry describing the 1-morphisms, $\Pi_2$ the $(n-1)-$form symmetry describing the n-morphisms.

\subsection{A Touch of Non-Invertible Symmetries}
\label{sec:non-invertible}
In this lecture, the generalized symmetries we encountered are all invertible symmetries, which can be described with groups or higher groups. A broader class of symmetry is the non-invertible symmetries, including the more general higher categorical symmetries. For such symmetries, the topological generators would obey the general fusion rule
\be
U_g(M)U_h(M)=\sum_i U_{g_i}(M_i)
\ee
instead of the group associativity law. Note that the r.h.s. can even consist of topological operators defined on manifolds $M_i$ with different dimensions. The realm of non-invertible symmetries is fastly evolving, and we will point the interested readers to the recent lecture note~\cite{Schafer-Nameki:2023jdn}.

From the perspective of condensed matter physics, the creation operator of non-Abelian anyons within a 2+1D topological order can be interpreted as the generators of a non-invertible symmetry. This symmetry has the ability to operate on the 1+1D boundary, imposing non-invertible symmetry constraints on potential boundary theories~\cite{PhysRevResearch.2.043086,kong_mathematical_2020,kong_mathematical_2021,PhysRevResearch.2.033417,ji2022unified,PhysRevB.107.155136,PhysRevB.108.075105}.

We give one example of the $Spin(4N)$ Yang-Mills theory in $d$-dimensions~\cite{Bhardwaj:2022yxj}. The theory has a $\mb{Z}_2^{(1)}\times\mb{Z}_2^{(1)}$ 1-form center symmetry, whose background 2-form gauge fields are denoted by $B_2$ and $C_2$. There is also an $\mb{Z}_2^{(0)}$ outer automorphism of $D_{2N}$ roots, whose background gauge field is $A_1$. The $\mb{Z}_2^{(0)}$ symmetry acts on the Dynkin diagram of $D_{2N}$ as

\begin{center}
\begin{tikzpicture}
    \filldraw[] (0,0) circle (0.15) (1,0) circle (0.15) (3,0) circle (0.15) (3.85,0.5) circle (0.15) (3.85,-0.5) circle (0.15);
    \draw[thick] (0,0)--(1,0) (2.5,0)--(3,0) (3,0)--(3.85,0.5) (3,0)--(3.85,-0.5);
    \draw[thick] (1,0)--(1.5,0) node[anchor = west]{\textbf{$ \ \cdots$}};
    \draw[thick,->] (4.1,-0.35) arc (-20:20:1);
    \draw[thick,->] (4.1,0.35) arc (20:-20:1);
    \node[] at (4.7,0){$\bbZ_2^{(0)}$};
\end{tikzpicture}
\end{center}
which maps the representation $\mbf{R}$ of $D_{2N}$ into the conjugate representation $\overline{\mbf{R}}$. Now after gauging the $\mb{Z}^{(1)}$ symmetry with $B_2$ background gauge field, we obtain the $SO(4N)$ Yang-Mills theory, with a mixed 't Hooft anomaly 
\be
\mc{A}=\pi\int_{M_{d+1}} A_1\cup C_2\cup \hat{B}_{d-2}
\ee
Here $\hat{B}_{d-2}$ is the background gauge field for the $(d-3)$-form dual symmetry.

Finally, if we further gauge $A_1$ and $\hat{B}_{d-2}$, we will get the $Pin^+(4N)$ theory with non-invertible symmetry defects.

In general, if a non-invertible symmetry arises from such kind of partial gauging, it is called a \textbf{non-intrinsically non-invertible symmetry}. Otherwise, it is an \textbf{intrinsically non-invertible symmetry}.

\section*{Acknowledgements}

We thank Lakshya Bhardwaj, Chi-Ming Chang, Max Hubner, Ruizhi Liu, Sakura Schafer-Nameki, Fajie Wang, Jingxiang Wu and Yi Zhang for discussions on the related subjects. We would also like to thank the audiences at Peking University for the discussions and questions. QRW is supported by
National Natural Science Foundation of China under Grant No. 12274250. YNW is  supported by National Natural Science Foundation of China under Grant No. 12175004, by Peking University under startup Grant No. 7100603667, and by Young Elite Scientists Sponsorship Program by CAST (2022QNRC001).

\newpage
\newpage

\appendix

\section{Čech Cohomology}
In this appendix we intend to explain how to formulate a method for properly assigning group elements to links in manifolds as described in the perspective of flat connection(\ref{para:flatconn})

In the construction of manifolds, we describe a manifold $M$ through the union of locally $\bbR^n$ patches, and glue them together with a set of transition function $f_{ij} : \phi_i(U_i\cap U_j)\to \phi_j(U_i\cap U_j) $ as we all very well know. In the following, we choose a good cover, within which all the intersections $U_{ij}$ are simply connected and isomorphic to $\bbR^n$.

The \Cech (co)homology is basically a way to define a (co)homology and takes idea from intersection of patches. The construction is as follows:
\begin{enumerate}
    \item Each $U_i$ represents a 0-simplex;
    \item Each $U_{ij} \equiv U_i\cap U_j$ represents a 1-simplex;
    \item Each $k$-simplex is represented by 
    \be 
    U_{i_0,\cdots , i_k} \equiv \bigcap_{m = 0}^k U_{i_m} \ ,
    \ee
    graphically drawn as
    \begin{center}
        \begin{tikzpicture}
            \filldraw[ cyan!50,fill opacity=0.5 ] (0,0) circle (1.5) ; 
            \filldraw[] (0,0) circle (0.08);
            \filldraw[ red!50,fill opacity=0.5 ] (1.9,0) circle (1.5);
            \filldraw[] (1.9,0) circle (0.08);
            \filldraw[ orange!50,fill opacity=0.5 ] (0.95,-1.64) circle (1.5);
            \filldraw[] (0.95,-1.64) circle (0.08);

            \draw[thick]  (0,0)--(1.9,0)  ;
            \draw[thick]  (0.95,-1.64)--(1.9,0)   (1.425,-0.82)node[right]{$U_{jk}$};
            \draw[thick]  (0,0)--(0.95,-1.64)  (0.475, -0.82)node[left]{$U_{ik}$};
            \node[above] at (0,0) {$U_i$};
            \node[above] at (01.9,0) {$U_j$};
            \node[below] at (0.95,-1.64) {$U_k$};
            \node[above] at (0.95,0) {$U_{ij}$};
            \node[] at (0.95, -0.55) {$U_{ijk}$};
        \end{tikzpicture}
    \end{center}
    \item The boundary operator is defined to be\footnote{A mathematician may wonder how the addition structure comes up, we claim that we are now working in an Abelian group freely generated by $U_{i_1,\cdots,i_k}$s with coefficient in $\bbZ$ (other Abelian groups are also okay).} 
    \be
    \ptl U_{i_0,\cdots ,i_k} = \sum_{m=0}^k (-1)^m U_{i_0,\cdots , \Hat{i}_m ,\cdots , i_k }
    \ee
    For example,
    \be  \ptl U_{ijk} = U_{jk} -U_{ik} + U_{ij}  \ee
    graphically,
    \begin{center}
        \begin{tikzpicture}
            \filldraw[] (0,0) circle (0.08);
            \filldraw[] (1.9,0) circle (0.08);
            \filldraw[] (0.95,-1.64) circle (0.08);

            \draw[thick] (0,0)--(1.9,0);
            \draw[thick,rotate = -60] (0,0)--(1.9,0);
            \draw[thick,rotate around = {60:(1.9,0)}] (1.9,0)--(0,0);
            \node[above] at (0,0) {$U_i$};
            \node[above] at (01.9,0) {$U_j$};
            \node[below] at (0.95,-1.64) {$U_k$};
            \node[] at (0.95, -0.55) {$U_{ijk}$};
            \node[] at (-0.2, -0.82){$\ptl$};
            \node[right] at (1.93, -0.82){$=$};

            \draw[shift = {(1.5,0)},thick,rotate around = {60:(1.9,0)}] (1.9,0)--(0,0);
            \draw[shift = {(4,0)},thick,rotate = -60] (0,0)--(1.9,0);
            \draw[shift = {(6,0)},thick] (0,0)--(1.9,0);
            \node[shift = {(4,0)},above] at (0,0) {$U_i$};
            \node[shift = {(6,0)},above] at (0,0) {$U_i$};

            \node[shift = {(1.5,0)},above] at (01.9,0) {$U_j$};
            \node[shift = {(6,0)},above] at (01.9,0) {$U_j$};

            \node[shift = {(4,0)},below] at (0.95,-1.64) {$U_k$};
            \node[shift = {(1.5,0)},below] at (0.95,-1.64) {$U_k$};
            \node[] at (3.7,-0.82) {$-$};
            \node[] at (5.7,-0.82) {$+$};

            \filldraw[shift = {(4,0)}] (0,0) circle (0.08);
            \filldraw[shift = {(1.5,0)}] (1.9,0) circle (0.08);
            \filldraw[shift = {(1.5,0)}] (0.95,-1.64) circle (0.08);
            \filldraw[shift = {(6,0)}] (0,0) circle (0.08);
            \filldraw[shift = {(6,0)}] (1.9,0) circle (0.08);
            \filldraw[shift = {(4,0)}] (0.95,-1.64) circle (0.08);

        \end{tikzpicture}
    \end{center}
    \item One can easily verify that with the $\ptl$ operator, we have a $\bbZ$-graded chain complex. We denote 
    \be C_k(\mathcal{U}) = \bigoplus_{i_0 < i_1 < \cdots < i_k} \mathbb{Z}[U_{i_0} \cap \cdots \cap U_{i_k}] \ .  \ee
\end{enumerate}

Now we have the homology part, what we are more interested in is the cohomology part where the manifold $M$ map to various coefficients to become fields we adore. What properties do we hope the field configurations to satisfy?
\begin{enumerate}
    \item Fields are (at least local) sections $\Gamma (U,F)$, $F$ is some parameter space of our choosing.
    \item To make the section maps abstract while align with our intuition, for $V\subseteq U$, there should be a restriction map ${\rm Res}_{U,V} : \Gamma(U,F)\to \Gamma(V,F)$ st.
    \be  {\rm Res}_{U,U} = {\rm id}_{F} \ , \ \ \ {\rm Res}_{V,W} \circ {\rm Res}_{U,V} = {\rm Res}_{U,W} \ . \ee
    \item For some fields, we hope they can add each other at least on the same local patch, meaning for each $U_i$, $\Gamma(U_i,F)$ is an Abelian group. Then the restriction map should at least respect the addition structure, making each ${\rm Res}_{U,V}$ a homomorphism of Abelian groups,
    \be  {\rm Res}_{U,V} : {\rm Hom}_{\rm Ab} (\Gamma(U,F),\Gamma(V,F))  \ee
\end{enumerate}
The above language sums up to be\textbf{ what we want is a presheaf of Abelian groups on $M$}, such presheaf is denoted $\mathcal{F}$. Thus we define a \Cech p-cochain $C^p(M,\mathcal{U})$ ($\CU$ denotes the set of atlas) by assigning each $p-$simplex to an element in the presheaf of the simplex, i.e.
\be
f : U_{i_1,\cdots, i_p} \mapsto G\in \CF ({U_{i_1,\cdots, i_p}})  
\ee
And the differential operator $\delta_p$ is realized by ${\rm Res}$ and $\ptl$, for a $p$-cochain,
\be
(\delta_p f)(\sigma ) = \sum_{k=0,p+1} (-1)^{k}  {\rm Res}_{\ptl_k \sigma, \sigma } (f ) (\ptl_k \sigma)   \ \
\ee
where $\sigma\in C^p(M,\CU)$, and $\ptl_k (U_{i_1,\cdots,i_p}) \equiv U_{i_1,\cdots, \Hat{i}_k,\cdots,i_p} $. The \Cech cohomology group is thus defined as
\be  
H^k(\CU,\CF) \equiv \frac{{\rm ker}(\delta_k)}{{\rm im}(\delta_{k-1}) }  \ .
\ee

\section{Group Cohomology}
\label{App:GrpChmlg}
Group cohomology is a mathematical tool used to study the properties of groups and their actions. It is a branch of algebraic topology that deals with the cohomology of spaces with group actions. The cohomology groups of a group with coefficients in a module are defined as follows:

Let $G$ be a group and $M$ be a $G$-module. The $n$th cohomology group of $G$ with coefficients in $M$ is defined as the quotient group:

$$H^n(G, M) = \frac{\text{ker}(\delta_n)}{\text{im}(\delta_{n-1})}$$

where $\delta_n:H^n(G, M) \rightarrow H^{n+1}(G, M)$ is the coboundary map defined by:
\be\delta_n(f)(g_0, g_1, \ldots, g_n) = g_0f(g_1, \ldots, g_n) + \sum_{i=1}^{n}(-1)^if(g_0, \ldots, g_{i-1}g_i, \ldots, g_n) + (-1)^{n+1}f(g_0, \ldots, g_{n-1})\ee
for $f \in C^n(G, M)$, where $C^n(G, M)$ is the group of $n$-cochains of $G$ with coefficients in $M$.

Notably, there is a theorem stating the isomorphism between group cohomology and singular cohomology
\be
H^n_{{\rm grp}}(G,A) \cong  H^n_{{\rm sing}}({\rm B}G,A) \ ,
\ee
one may refer to \cite{weibel_1994} for a rigorous statement and proof.



\bibliographystyle{JHEP}
\bibliography{main}

\end{document}